\documentclass[11pt,oneside]{amsart}
\def\isdraft{0}

\usepackage{mathtools}
\usepackage{amsmath,amsfonts,amsthm,amssymb}
\usepackage[dvipsnames]{xcolor}
\usepackage{graphicx}
\usepackage{amsaddr} 
\usepackage{prettyref} 
\usepackage[utf8]{inputenc}
\usepackage{enumitem}
\usepackage[hyphens]{url} 
\usepackage{tikz-cd}
\usepackage{tikz}
\usetikzlibrary{positioning}
\usetikzlibrary{arrows}
\usepackage{hyperref}
\usepackage{a4wide}
\usepackage[color=black,textcolor=white\if\isdraft0,disable\fi]{todonotes}

\usepackage{bbm} 
\usepackage{bussproofs} 
\usepackage[mathscr]{euscript} 
\usepackage{stackengine} 
\usepackage{stmaryrd} 
\usepackage{wrapfig} 
\usepackage{marginnote} 
\usepackage{float} 
\usepackage{footnote} 
\usepackage{tabularx} 
\usepackage[normalem]{ulem} 
\usepackage{pifont} 
\usepackage[most]{tcolorbox} 
\usepackage{shuffle} 

\graphicspath{{Figures/}}

\newtheorem{theorem}{Theorem}

\newtheorem{conjecture}[theorem]{Conjecture}
\newtheorem{corollary}[theorem]{Corollary}
\newtheorem{fact}[theorem]{Fact}
\newtheorem{lemma}[theorem]{Lemma}
\newtheorem{proposition}[theorem]{Proposition}

\theoremstyle{definition} 

\newtheorem{definition}[theorem]{Definition}

\newtheorem{example}[theorem]{Example}

\newtheorem{remark}[theorem]{Remark}

\newrefformat{sec}{§\ref{#1}}
\newrefformat{a}{Answer \ref{#1}}
\newrefformat{alg}{Algorithm \ref{#1}}
\newrefformat{ax}{Appendix \ref{#1}}
\newrefformat{cl}{Claim \ref{#1}}
\newrefformat{con}{Conclusion \ref{#1}}
\newrefformat{conv}{Convention \ref{#1}}
\newrefformat{cj}{Conjecture \ref{#1}}
\newrefformat{c}{Corollary \ref{#1}}
\newrefformat{q}{(\ref{#1})}
\newrefformat{e}{Example \ref{#1}}
\newrefformat{exe}{Exercise \ref{#1}}
\newrefformat{d}{Definition \ref{#1}}
\newrefformat{Done}{Done \ref{#1}}
\newrefformat{f}{Fact \ref{#1}}
\newrefformat{fig}{Figure \ref{#1}}
\newrefformat{h}{Hypothesis \ref{#1}}
\newrefformat{idea}{Idea \ref{#1}}
\newrefformat{i}{Item \ref{#1}}
\newrefformat{l}{Lemma \ref{#1}}
\newrefformat{N}{Note \ref{#1}}
\newrefformat{n}{Notation \ref{#1}}
\newrefformat{o}{Observation \ref{#1}}
\newrefformat{pr}{Problem \ref{#1}}
\newrefformat{p}{Proposition \ref{#1}}
\newrefformat{pc}{Pseudocode \ref{#1}}
\newrefformat{Q}{Q. \ref{#1}}
\newrefformat{r}{Remark \ref{#1}}
\newrefformat{tab}{Table \ref{#1}}
\newrefformat{t}{Theorem \ref{#1}}
\newrefformat{T}{TODO \ref{#1}}
\newrefformat{traum}{Traum \ref{#1}}
\newrefformat{w}{Warning \ref{#1}}

\newcommand{\boxblacktriangle}{\mathrel{\ooalign{$\square$\cr\kern0.07ex\hbox{\scalebox{0.9}{$\blacktriangle$}}}}}

\newcommand{\boxtriangle}{\mathrel{\ooalign{$\square$\cr\kern0.07ex\hbox{\scalebox{0.9}{$\triangle$}}}}}

\newlist{todolist}{itemize}{2}
\setlist[todolist]{label=$\square$}

\usepackage[top=0.7cm,bottom=0.5cm,left=1cm,right=1cm]{geometry} 

\if\isdraft1
\fi

\title{
	The algebra of Krom logic programs
}
\author{
	Christian Anti\'c
}
\address{
	christian.antic@icloud.com\\
	Vienna University of Technology\\
	Vienna, Austria
}

\begin{document}

\begin{abstract}
This paper investigates the algebraic structure of Krom logic programs, consisting only of facts and rules with at most one body atom. We show that sequential composition endows the class of Krom programs with a natural monoid structure and that this structure admits rich algebraic extensions to Krom seminearrings, Krom quemirings, Krom-Conway seminearrings, and Krom-Conway omegaseminearrings. Furthermore, we establish explicit generating sets and canonical decompositions, study the associated ${}^\omega$-operator, characterize the Kleene star in graph-theoretic terms, and relate finite Krom monoids to transformation monoids and finite-state automata. These results provide new connections between logic programming, algebraic automata theory, and algebraic graph theory.
\end{abstract}

\maketitle

\textbf{Keywords:} monoids, seminearrings, quemirings, Conway semirings, sequential composition, least models, transformation monoids, finite-state automata, algebraic logic programming.

\tableofcontents

\section{Introduction}

Logic programs constitute one of the most extensively studied rule-based formalisms in theoretical computer science and mathematical logic (cf.~\cite{Apt90,Lloyd87}). Among them, Horn theories occupy a particularly prominent position due to their favorable computational and logical properties and their numerous applications in computer science~\cite{Makowsky87}.

The sequential composition of propositional logic programs was introduced in~\cite{Antic21-1,Antic23-23} and subsequently extended to answer set programming in~\cite{Antic21-2}. This operation endows logic programs with an algebraic structure that has so far received comparatively little attention.

More broadly, this work is part of an ongoing effort to develop an algebraic theory of logic programming, in which logic programs are studied as algebraic objects equipped with natural operations and identities. From this perspective, sequential composition plays a role analogous to multiplication in semiring theory and provides a foundation for investigating structural decomposition, generation, and transformation of logic programs.

The Krom fragment \cite{Krom67} occupies a distinguished position within this program. On the one hand, it is expressive enough to exhibit non-trivial algebraic phenomena, including natural connections with graph theory, permutation groups, and automata. On the other hand, it enjoys structural properties that fail for general logic programs, such as the compatibility of sequential composition with set union and the proper rule operator. Moreover, Krom programs arise as canonical building blocks in decomposition results for Horn programs~\cite{Antic24-1}. These observations suggest that the Krom fragment provides a natural starting point for the development of an algebraic theory of logic programming.

The main objective of this paper is to identify and study the algebraic structures naturally induced by sequential composition on Krom logic programs. We show that this seemingly elementary fragment supports a surprisingly rich hierarchy of algebraic structures, including monoids, quemirings, Conway seminearrings, and omegaseminearrings, and admits close connections with graph theory, transformation monoids, and automata.

Besides being of independent interest, Krom programs arise naturally as the basic building blocks in the algebraic theory of set-like operations on logic programs~\cite[Minimalist Decomposition Theorem]{Antic24-1}. In particular, every \emph{minimalist}\footnote{A propositional Horn program is called \emph{minimalist} in~\cite{Antic24-1} if it contains at most one rule for each rule head.} propositional Horn program admits a canonical decomposition into Krom components, providing further motivation for studying their algebraic properties.

A characteristic feature of the Krom fragment is that sequential composition distributes over union from the left, a property that fails for arbitrary propositional logic programs (cf.~\cite[Example~8]{Antic21-1}). Furthermore, the proper rule operator, which extracts the non-factual rules of a program, is an endomorphism with respect to sequential composition. These observations make the Krom fragment particularly amenable to an algebraic treatment.

The main contribution of this paper is the development of an algebraic theory of Krom logic programs under sequential composition. We show that the resulting structures naturally give rise to monoids, seminearrings, quemirings, Conway seminearrings, and Conway omegaseminearrings, thereby revealing close connections with semiring-based algebraic automata theory~\cite{Esik06} and algebraic graph theory~\cite{Knauer11}. In particular, sequential composition corresponds to relational composition, while the Kleene star captures graph reachability.

In a broader sense, this paper constitutes another step towards an algebraic theory of logic programming, a research direction initiated by Richard O'Keefe~\cite{OKeefe85} and subsequently pursued in the context of modular logic programming (see, e.g.,~\cite{Brogi99,Bugliesi94}) as well as in the author's recent work~\cite{Antic14,Antic21-2,Antic21-1,Antic23-23,Antic24-1}.

\section{Preliminaries}\label{sec:Pre}

In this section, we recall the syntax, semantics, and composition of (possibly infinite) propositional Krom\footnote{The Krom fragment is named after the author of \cite{Krom67}.} logic programs, consisting only of rules with at most a single body atom, by following the lines of \cite{Antic21-1}.


Let $A$ be an alphabet of propositional atoms. A \emph{(propositional Krom logic) program} over $A$ is a set of \emph{(Krom) rules} of two forms, 
(i) \emph{facts} of the form $a\in A$, and 
(ii) \emph{proper rules} of the form $a\leftarrow b$, for $a,b\in A$. We denote the set of all Krom programs over $A$ by $\mathbb K_A$ or simply by $\mathbb K$ in case $A$ is understood from the context. We denote the facts and proper rules in $K$ by $f(K)$ and $p(K)$, respectively.



An \emph{interpretation} is any subset of $A$. We define the \emph{entailment relation}, for every interpretation $I$, inductively as follows:
(1) for an atom $a$, $I\models a$ if $a\in I$;
(2) for a proper rule, $I\models a\leftarrow b$ if $I\models b$ implies $I\models a$;
(3) for a propositional Krom program $K$, $I\models K$ if $I\models r$ holds for each rule $r\in K$.

In case $I\models K$, we call $I$ a \emph{model} of $K$. The set of all models of $K$ has a least element with respect to set inclusion called the \emph{least model} of $K$ and denoted by $LM(K)$.

We shall now recall the sequential composition \cite{Antic21-1} of Krom programs. 

In the rest of the paper, $K$ and $L$ denote Krom programs over some fixed alphabet $A$. 


Define the (\emph{sequential}) \emph{composition} of $K$ and $L$ by
\begin{align*} 
	K\circ L := f(K)\cup\{a\in A\mid a\leftarrow b\in K,\; b\in L\}\cup\{a\leftarrow b\mid a\leftarrow c\in K,\; c\leftarrow b\in L\}.
\end{align*} We simply write $KL$ in case the composition operation is understood. 

We define the \emph{unit program} over $A$ by
\begin{align*} 
	1_A := \{a\leftarrow a\mid a\in A\}.
\end{align*} As usual, we often omit $A$ and write $1$.



Interpretations are left zeros, that is,
\begin{align}
	\label{eq:IK=I} IK = I
\end{align} holds for every Krom program $K$ and interpretation $I$.

We have the following structural result:

\begin{theorem}{{\cite[Thm. 12]{Antic21-1}}}\label{t:seminearring} The structure $(\mathbb K, \cup, \circ, \emptyset, 1)$ forms a seminearring \cite{vanHoorn67} (and almost\footnote{The inequality $K\emptyset\neq\emptyset$ is the only reason why the set of \emph{all} Krom programs (possibly containing facts) fails to be a semiring; but see \prettyref{t:p(mathbb_K)}.} a semiring) as it satisfies the following identities, for all Krom programs $K,L,M\in \mathbb K$:
\begin{align} 
	(K\cup L)\cup M &= K\cup (L\cup M)\\
	K\cup L &= L\cup K\\
	\emptyset\cup K &= K\cup\emptyset = K\\
	\label{eq:K(LM)=(KL)M} K(LM) &= (KL)M\\
	K1 &= 1K=K\\
	\label{eq:M(K_cup_L)} M(K\cup L) &= KK\cup ML\\
	\label{eq:(K_cup_L)M} (L\cup M)K &= KM\cup LK\\
	\emptyset K &= \emptyset.
\end{align} From now on, we call $(\mathbb K, \cup, \circ, \emptyset, 1)$ the \emph{Krom seminearring}.
\end{theorem}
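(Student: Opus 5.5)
The plan is to reduce everything to the algebra of binary relations, where all the identities are standard. The first four identities (associativity, commutativity and neutrality of $\emptyset$ for $\cup$) are the familiar properties of set union and need no argument. The identities involving composition are where the work lies, and I would handle them all at once via an embedding into a relation monoid. Note that the displayed distributive laws contain typos; the intended laws are $M(K\cup L)=MK\cup ML$ and $(L\cup M)K=LK\cup MK$.

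\emph{Encoding.} Fix a fresh symbol $\top\notin A$ and put $A^\top:=A\cup\{\top\}$. To each Krom program $K$ associate the relation
\begin{align*}
	R_K := \{(a,b)\mid a\leftarrow b\in K\}\cup\{(a,\top)\mid a\in f(K)\}\cup\{(\top,\top)\}\subseteq A^\top\times A^\top.
\end{align*}
So a fact $a$ is read as the rule $a\leftarrow\top$. The map $K\mapsto R_K$ is injective, because $K$ can be recovered from $R_K$ by discarding the pair $(\top,\top)$. Its image consists exactly of those relations that contain $(\top,\top)$ and contain no other pair with first component $\top$.

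\emph{Key step: composition becomes relational composition.} I will verify that $R_{KL}=R_K;R_L$, where $(x,y)\in R;S$ iff there is some $z$ with $(x,z)\in R$ and $(z,y)\in S$. Split on the middle element $z$.
\begin{itemize}
	\item If $z=\top$, then $(x,\top)\in R_K$ and $(\top,y)\in R_L$. This forces $y=\top$, so either $x=\top$ or $x\in f(K)$. These produce $(\top,\top)$ and the facts $f(K)$.
	\item If $z=c\in A$ and $y=\top$, then $x=a$ with $a\leftarrow c\in K$ and $c\in L$. These are the second set in the definition of $K\circ L$.
	\item If $z=c\in A$ and $y=b\in A$, then $a\leftarrow c\in K$ and $c\leftarrow b\in L$. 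These are the third set.
\end{itemize}
No other pairs arise, because $\top$ relates only to $\top$ in $R_K$. Hence the three sets in the definition of $K\circ L$, together with $(\top,\top)$, are exactly $R_K;R_L$. Checking this case split carefully is the main point of the proof. In particular, the role of $(\top,\top)$ in reproducing $f(K)$ is what explains the asymmetry $K\emptyset=f(K)$.

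\emph{Transferring the identities.} We have $R_{1}=\mathrm{id}_{A^\top}$ and $R_{K\cup L}=R_K\cup R_L$. Relational composition is associative, has the identity relation as unit, and distributes over union on both sides. Since $K\mapsto R_K$ is injective and preserves $\circ$, $\cup$ and $1$, the identities $K(LM)=(KL)M$, $K1=1K=K$, $M(K\cup L)=MK\cup ML$ and $(L\cup M)K=LK\cup MK$ follow immediately.

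The last identity, $\emptyset K=\emptyset$, is checked directly from the definition: $\emptyset$ contributes no facts and no proper rules, so all three sets in the definition of $\emptyset\circ K$ are empty. This cannot be obtained from the embedding, since $R_\emptyset=\{(\top,\top)\}$ is not the empty relation, so it must be done by hand. The same observation explains why the right zero law fails: $K\emptyset=f(K)$, which is nonempty whenever $K$ has facts.
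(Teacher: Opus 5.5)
Your proof is correct, but it takes a different route from the paper. The paper gives no argument of its own here and defers to \cite[Thm.~12]{Antic21-1}, i.e.\ to the rule-level theory of sequential composition for general propositional programs. In that theory, left distributivity is the law specific to the Krom fragment (cf.\ the introduction).

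You instead read a fact $a$ as the rule $a\leftarrow\top$ and add the loop $(\top,\top)$. This embeds $(\mathbb K,\cup,\circ,1)$ injectively into the binary relations on $A\cup\{\top\}$ under union and relational composition. Your three-case check of $R_{KL}=R_K;R_L$ is complete, and so are $R_1=\mathrm{id}$ and $R_{K\cup L}=R_K\cup R_L$. Associativity, the unit laws and both distributive laws are therefore inherited in one stroke (you are right that the displayed distributive laws contain typos).

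What your route buys is an explanation rather than a verification:
\begin{itemize}
\item it extends the graph reading of proper programs at the start of \S\ref{sec:Conway} to all Krom programs;
\item it makes $K\emptyset=f(K)$ and the failure of right annihilation transparent;
\item since $K\mapsto R_K$ also preserves nonempty unions, you get $R_{K^\ast}=(R_K)^\ast$. The Conway identities of Theorem~\ref{t:Conway}, and the reading of $K^\omega$ as the set of atoms reachable from $\top$, then come for free.
\end{itemize}
The price is that the embedding is tied to single-atom bodies. For general programs binary relations no longer suffice, and that is where the cited theory is needed.

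One small correction: $\emptyset K=\emptyset$ does follow from the embedding. The only pair of $R_K$ with first component $\top$ is $(\top,\top)$, so $R_\emptyset;R_K=\{(\top,\top)\}=R_\emptyset$, and injectivity gives the identity. It is merely not an instance of the annihilation law of the relation semiring. Your direct check is fine as well.
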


\section{The proper rule operator}\label{sec:p}

In this section, we investigate the proper rule operator $p$, which extracts the proper rules of a Krom program by removing all facts. Besides serving as a basic structural decomposition, the operator $p$ exhibits remarkable algebraic properties with respect to sequential composition and set union. In particular, it gives rise to a natural quemiring structure on finite Krom seminearrings, thereby revealing an unexpected connection between Krom programs and algebraic structures originating in automata theory.

\subsection{Basic properties}

We begin by collecting the fundamental algebraic properties of the proper rule operator. In particular, we show that $p$ is compatible with the basic operations on Krom programs and therefore behaves as a natural algebraic projection onto the proper part of a program. These identities will serve as the foundation for the richer structures developed in the subsequent subsection.

\begin{theorem}\label{t:p_hom} The proper rule operator $p$ is an endomorphism of the Krom seminearring $(\mathbb K, \cup, \circ, \emptyset, 1)$ thus satisfying
\begin{align} 
	\label{eq:p(K_cup_L)} p(K\cup L) &= p(K)\cup p(L),\\
	\label{eq:p(KL)} p(KL) &= p(K)p(L),\\
	\label{eq:p(1)} p(1) &= 1,\\
	\label{eq:p(0)} p(\emptyset) &= \emptyset,
\end{align} for all Krom programs $K$ and $L$.
\end{theorem}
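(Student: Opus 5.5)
The plan is to verify the four identities one at a time, working directly from the definition of sequential composition and the fact that $p(K)$ is the set of proper rules of $K$. The first, third and fourth identities are immediate. For \eqref{eq:p(K_cup_L)}, a rule lies in $K\cup L$ and is proper exactly when it is a proper rule of $K$ or a proper rule of $L$. For \eqref{eq:p(1)}, the unit program $1$ consists only of proper rules $a\leftarrow a$, so $p(1)=1$. For \eqref{eq:p(0)}, $\emptyset$ has no rules, so $p(\emptyset)=\emptyset$.

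The substantive identity is \eqref{eq:p(KL)}. I will compute both sides from the definition
\begin{align*}
	KL = f(K)\cup\{a\mid a\leftarrow b\in K,\ b\in L\}\cup\{a\leftarrow b\mid a\leftarrow c\in K,\ c\leftarrow b\in L\}.
\end{align*}
The first two sets consist only of facts. Hence $p(KL)$ is exactly the third set, $\{a\leftarrow b\mid a\leftarrow c\in K,\ c\leftarrow b\in L\}$.

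On the other side, $p(K)$ has no facts, so $f(p(K))=\emptyset$ and the first part of $p(K)p(L)$ vanishes. The second part consists of the facts $a$ such that $a\leftarrow b\in p(K)$ and $b\in p(L)$. This part is also empty, because $p(L)$ contains no facts. The third part is $\{a\leftarrow b\mid a\leftarrow c\in p(K),\ c\leftarrow b\in p(L)\}$. Since $a\leftarrow c\in K$ if and only if $a\leftarrow c\in p(K)$, and likewise for $L$, this set coincides with the third set computed for $p(KL)$.

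Together these four identities show that $p$ preserves union, composition, the unit and the zero, so $p$ is an endomorphism of the Krom seminearring of \prettyref{t:seminearring}. The only point needing care is the middle step for \eqref{eq:p(KL)}: one must check that the fact-producing parts of the composition disappear on both sides. This relies on the Krom restriction that every rule has at most one body atom, so a proper rule can never be turned into a fact except through a fact of $L$, and $p(L)$ has none. I do not expect any genuine obstacle here.
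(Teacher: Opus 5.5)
Your proposal is correct and follows the same route as the paper, which simply says the identities are an immediate consequence of the definitions. You have written out that direct verification explicitly, and correctly identified the only real point: both fact-producing parts of $p(K)p(L)$ are empty, and the proper part of $KL$ is exactly the composition of the proper rules of $K$ with those of $L$.
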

\begin{proof} An immediate consequence from the definitions.
\end{proof}

\begin{remark} A distinguishing feature of the Krom fragment is that the proper rule operator commutes with sequential composition. In contrast, this identity does not hold for arbitrary propositional logic programs; see \cite[§4.4]{Antic21-1}.
\end{remark}

\begin{theorem}\label{t:p(mathbb_K)} 
The Krom algebra $(p(\mathbb K), \cup, \circ, \emptyset, 1)$ containing only proper Krom programs is an idempotent semiring.
\end{theorem}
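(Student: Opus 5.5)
The plan is to obtain most of the semiring axioms by restricting \prettyref{t:seminearring} to $p(\mathbb K)$, and to check the remaining ones directly from the definition of composition.

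\emph{Closure.} First I would check that $p(\mathbb K)$, the set of Krom programs without facts, is closed under $\cup$ and $\circ$ and contains $\emptyset$ and $1$. By \prettyref{t:p_hom}, $p(K)\cup p(L)=p(K\cup L)$ and $p(K)p(L)=p(KL)$, so both operations stay inside the image of $p$. Moreover $p(\emptyset)=\emptyset$ and $p(1)=1$. Hence $(p(\mathbb K),\cup,\circ,\emptyset,1)$ is a substructure of the Krom seminearring.

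\emph{Axioms inherited from \prettyref{t:seminearring}.} Associativity and commutativity of $\cup$, neutrality of $\emptyset$, associativity of $\circ$, neutrality of $1$, both distributive laws, and $\emptyset K=\emptyset$ are identities. They therefore hold automatically in the substructure. Idempotence of addition, $K\cup K=K$, is trivial for sets.

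\emph{The right-zero law $K\emptyset=\emptyset$.} This is the only missing semiring axiom. By the footnote to \prettyref{t:seminearring}, it is exactly where the full structure $\mathbb K$ fails, so it is the step that needs an actual argument. For $K\in p(\mathbb K)$ I would unfold the definition
\[
K\emptyset=f(K)\cup\{a\mid a\leftarrow b\in K,\ b\in\emptyset\}\cup\{a\leftarrow b\mid a\leftarrow c\in K,\ c\leftarrow b\in\emptyset\}.
\]
The first set is empty because $K$ has no facts. The second and third sets are empty because $\emptyset$ contains no rules. Hence $K\emptyset=\emptyset$.

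It remains to be careful about which distributive law is the left one and which is the right one. As typeset in \prettyref{t:seminearring} both laws appear, with evident typos, so I would simply verify both directly from the definition of composition restricted to proper programs. The argument is as follows. A rule $a\leftarrow b$ lies in $M(K\cup L)$ iff there is some $c$ with $a\leftarrow c\in M$ and $c\leftarrow b\in K\cup L$, which is the same as lying in $MK\cup ML$. The symmetric argument gives $(K\cup L)M=KM\cup LM$. With this check in place, all axioms of an idempotent semiring hold, and the theorem follows.
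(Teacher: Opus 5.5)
Your proposal is correct and follows the paper's route. Every axiom except right annihilation is inherited from Theorem~\ref{t:seminearring}, and the one real step is $K\emptyset=\emptyset$ for proper $K$, which the paper simply asserts and you verify by unfolding the definition of composition. Your closure check via Theorem~\ref{t:p_hom}, the remark on idempotence of $\cup$, and the direct re-verification of both distributive laws (given the typos in the displayed versions) fill in details the paper leaves implicit without changing the argument.
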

\begin{proof} For every proper Krom program $K\in p( \mathbb K)$, we have
\begin{align*} 
	K\emptyset = \emptyset.
\end{align*} Hence the right annihilation axiom holds on $p(\mathbb K)$, so together with \prettyref{t:seminearring} all semiring axioms are satisfied.
\end{proof}

Thus, removing all facts restores the missing right-annihilation property and turns the Krom seminearring into a semiring.

\subsection{Krom quemirings}

Quemirings were introduced in \cite{Elgot76} in the context of algebraic automata theory (see e.g. \cite[p.110]{Esik06}). The following theorem shows that they also arise naturally in the algebra of Krom logic programs:

\begin{theorem}
The Krom algebra $(\mathbb K, \cup, \circ, \emptyset, 1, p)$ forms a quemiring, henceforth called the \emph{Krom quemiring}. Explicitly, for all $K,L,M\in\mathbb K$, the following identities hold in addition to the identities in \prettyref{t:seminearring}:
\begin{align*}
	p(K)(L\cup M) &= (p(K)L)\cup(p(K)M),\\
	K &= p(K)\cup(K \emptyset),\\
	p(K) \emptyset &= \emptyset,\\
	p(K\cup L) &= p(K)\cup p(L),\\
	p(KL) &= p(K)p(L).
\end{align*}
\end{theorem}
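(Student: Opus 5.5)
Since the seminearring axioms are already provided by \prettyref{t:seminearring}, only the five displayed quemiring identities remain to be checked. I will verify them one at a time, using the definition of sequential composition together with \prettyref{t:p_hom} and \prettyref{t:p(mathbb_K)}.

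The last two identities, $p(K\cup L)=p(K)\cup p(L)$ and $p(KL)=p(K)p(L)$, are exactly \eqref{eq:p(K_cup_L)} and \eqref{eq:p(KL)} of \prettyref{t:p_hom}, so they can be cited directly. The identity $p(K)\emptyset=\emptyset$ holds because $p(K)\in p(\mathbb K)$ is proper, and the proof of \prettyref{t:p(mathbb_K)} shows $M\emptyset=\emptyset$ for every proper $M$. Concretely, $f(p(K))=\emptyset$, and the remaining two parts of the composition require a rule body or a rule in $\emptyset$, so both are empty.

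Left distributivity for proper programs, $p(K)(L\cup M)=p(K)L\cup p(K)M$, follows from the general left distributivity \eqref{eq:M(K_cup_L)} of \prettyref{t:seminearring}, read as $M(K\cup L)=MK\cup ML$ with $p(K)$ in place of $M$. To be safe, I would also check it directly. With $f(p(K))=\emptyset$, the composition $p(K)N$ consists of the heads $a$ with $a\leftarrow b\in p(K)$ and $b\in N$, together with the rules $a\leftarrow b'$ obtained by chaining through $N$. Both conditions are membership conditions on single elements of $N$, so they distribute over $N=L\cup M$.

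The decomposition $K=p(K)\cup K\emptyset$ is the only identity needing a small calculation, and I expect it to be the main point to get right. By definition,
\begin{align*}
K\emptyset = f(K)\cup\{a\mid a\leftarrow b\in K,\ b\in\emptyset\}\cup\{a\leftarrow b\mid a\leftarrow c\in K,\ c\leftarrow b\in\emptyset\} = f(K).
\end{align*}
Hence $p(K)\cup K\emptyset=p(K)\cup f(K)=K$, because every Krom rule is either a fact or a proper rule. This completes the check, and the structure is a quemiring in the sense of \cite{Elgot76}.
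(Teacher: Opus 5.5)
Your proof is correct and matches the paper's. Left distributivity is an instance of \eqref{eq:M(K_cup_L)}, which you rightly read as $M(K\cup L)=MK\cup ML$. Both $p(K)\emptyset=\emptyset$ and $K=p(K)\cup K\emptyset$ reduce to $K\emptyset=f(K)$ as in \eqref{eq:f(K)=K0}, and the last two identities are \prettyref{t:p_hom}; your element-wise checks just spell out what the paper treats as immediate.
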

\begin{proof} The first identity holds by distributivity \prettyref{eq:M(K_cup_L)}. The second and third identities follow from the fact that $K \emptyset \stackrel{\ref{eq:f(K)=K0}}= f(K)$. The last two identities hold trivially.
\end{proof}

\section{The facts operator}\label{sec:facts}

This section studies elementary algebraic properties of the facts operator $f$, which extracts the facts of a Krom program.

Our first observation is that the facts operator can be reduced to composition via
\begin{align} 
	\label{eq:f(K)=K0} f(K) = K\emptyset.
\end{align} This means that we can extract the facts of a Krom program $K$ by composing it with the empty program.

The next result summarizes the interaction of the facts operator with set union and intersection:

\begin{proposition} The facts operator is a homomorphism of bounded lattices $$f : ( \mathbb K_A, \cup, \cap, \emptyset, F_A)\to ( \mathbb I_A, \cup, \cap, \emptyset, A ),$$ where $F_A$ denotes the Krom program containing all Krom rules over $A$. In other words,
\begin{align} 
	\label{eq:f(K_cup_L)} f(K\cup L) &= f(K)\cup f(L)\\
	f(K\cap L) &= f(K)\cap f(L)\\
	f(\emptyset) &= \emptyset\\
	f(F_A) &= A.
\end{align}
\end{proposition}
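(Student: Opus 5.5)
The plan is to work directly from the definition of $f$: for every Krom program $K$, $f(K)$ is the set of facts of $K$, that is, $f(K)=K\cap A$. Here we view atoms as rules, so a program is a subset of $A\cup\{a\leftarrow b\mid a,b\in A\}$, and facts and proper rules are disjoint kinds of rules. Once $f$ is written as intersection with the fixed set $A$, each of the four identities becomes an elementary set-theoretic law. We also note that $f(K)\in\mathbb I_A$, the set of interpretations over $A$, so $f$ really maps $\mathbb K_A$ into $\mathbb I_A$.

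The steps are as follows. For \prettyref{eq:f(K_cup_L)}, distributivity of intersection over union gives $(K\cup L)\cap A=(K\cap A)\cup(L\cap A)$. For intersections, the identity $(K\cap L)\cap A=(K\cap A)\cap(L\cap A)$ holds by idempotence, associativity and commutativity of $\cap$. Next, $\emptyset\cap A=\emptyset$. Finally, $F_A$ contains every Krom rule over $A$, in particular every fact $a\in A$, so $F_A\cap A=A$. These four identities say precisely that $f$ preserves both lattice operations and both bounds. It remains to note that $(\mathbb K_A,\cup,\cap,\emptyset,F_A)$ and $(\mathbb I_A,\cup,\cap,\emptyset,A)$ are bounded lattices: both are powerset lattices, of the set of all Krom rules and of $A$, respectively.

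Alternatively, one could derive the union case from \prettyref{eq:f(K)=K0} and right distributivity \prettyref{eq:(K_cup_L)M}, since $(K\cup L)\emptyset=K\emptyset\cup L\emptyset$. However, there is no corresponding compositional law for $\cap$, so the set-theoretic description $f(K)=K\cap A$ is the cleaner uniform route.

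No step is a real obstacle. The only point requiring care is the identification of facts with atoms in $A$, which makes $f(K)=K\cap A$ legitimate and makes the top element map correctly, $f(F_A)=A$.
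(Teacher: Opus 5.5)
Your argument is correct. Writing $f(K)=K\cap A$ reduces all four identities to elementary set laws, and both structures are indeed powerset lattices. The paper states this proposition without any proof, treating it as immediate from the definitions, and your direct set-theoretic verification is exactly that argument.
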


We now want to study the interaction between the facts operator and composition. The facts of the composition $KL$ can be computed as
\begin{align} 
	\label{eq:f(KL)} f(KL) &= f(K)\cup p(K)f(L),
\end{align} where $p(K)$ denotes the proper rules in $K$.

More generally, the next result provides a concise formula for the computation of the facts of a sequence of program compositions which will often be used below:

\begin{theorem} We can compute the facts of the sequential composition of the Krom programs $K_1, \ldots, K_n$, $n\geq 1$, by
\begin{align}
	\label{eq:f(K1-Kn)}  f(K_1\ldots K_n) = f(K_1)\cup\bigcup_{i=1}^{n-1}(p(K_1\ldots K_i)f(K_{i+1})).
\end{align}
\end{theorem}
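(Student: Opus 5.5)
We argue by induction on $n$. For $n=1$ the union over $i$ is empty and the formula reduces to $f(K_1)=f(K_1)$. For $n=2$ it is exactly the two-factor identity \prettyref{eq:f(KL)}, $f(K_1K_2)=f(K_1)\cup p(K_1)f(K_2)$. That identity follows directly from the definition of composition: the facts of $K\circ L$ are $f(K)$ together with those heads $a$ with $a\leftarrow b\in K$ and $b\in L$. Since $b$ is an atom, $b\in L$ means $b\in f(L)$, so this second set is precisely $p(K)f(L)$, an interpretation, computed as a composition.

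For the inductive step, assume the formula for $n$ factors and consider $K_1\ldots K_nK_{n+1}$. By associativity \prettyref{eq:K(LM)=(KL)M} we may bracket it as $(K_1\ldots K_n)K_{n+1}$. Applying \prettyref{eq:f(KL)} with $K:=K_1\ldots K_n$ and $L:=K_{n+1}$ gives
\begin{align*}
	f(K_1\ldots K_{n+1}) = f(K_1\ldots K_n)\cup p(K_1\ldots K_n)f(K_{n+1}).
\end{align*}
Substituting the induction hypothesis for $f(K_1\ldots K_n)$ yields
\begin{align*}
	f(K_1)\cup\bigcup_{i=1}^{n-1}\bigl(p(K_1\ldots K_i)f(K_{i+1})\bigr)\cup p(K_1\ldots K_n)f(K_{n+1}).
\end{align*}
The last term is the $i=n$ summand, so this equals the claimed expression for $n+1$.

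No step is really an obstacle. The only points needing care are checking \prettyref{eq:f(KL)} from the definition, in particular that a body atom $b$ belongs to $L$ exactly when it is a fact of $L$, and using associativity so that the bracketing is irrelevant. Optionally, one can also rewrite $p(K_1\ldots K_i)=p(K_1)\cdots p(K_i)$ via \prettyref{eq:p(KL)}, but the statement does not require this.
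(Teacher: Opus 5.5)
Your proof is correct, but it runs the induction from the opposite end to the paper's. The paper splits off the \emph{first} factor, $f(K_1\cdots K_{n+1})=f(K_1)\cup p(K_1)f(K_2\cdots K_{n+1})$, and applies the induction hypothesis to the $n$ factors $K_2,\ldots,K_{n+1}$. It then needs two further identities to reach the target form. Left distributivity $M(K\cup L)=MK\cup ML$ pushes $p(K_1)$ inside the union, and the endomorphism property $p(KL)=p(K)p(L)$ merges $p(K_1)p(K_2\cdots K_i)$ into $p(K_1\cdots K_i)$. You split off the \emph{last} factor instead. The induction hypothesis then applies to the prefix $K_1\cdots K_n$, and the new term $p(K_1\cdots K_n)f(K_{n+1})$ is literally the $i=n$ summand, so no rewriting is needed. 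Your route is therefore shorter and rests only on the two-factor identity $f(KL)=f(K)\cup p(K)f(L)$ and associativity. In particular, it never uses left distributivity, which the paper singles out as a special feature of the Krom fragment. You also check the two-factor identity against the definition of composition, which the paper only states. For this statement, the paper's version gains nothing beyond showing how $p$ interacts with composition and union. The prefix terms $p(K_1\cdots K_i)$ in the formula are exactly what your recursion produces directly.
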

\begin{proof} By induction on $n$. The induction base $n=1$ holds trivially. For the induction step, we compute
\begin{align*} 
	f(K_1\ldots K_{n+1})
		&\stackrel{\ref{eq:f(KL)}}= f(K_1)\cup p(K_1)f(K_2\ldots K_{n+1})\\
		&\stackrel{IH}= f(K_1)\cup p(K_1)\left(f(K_2)\cup \bigcup_{i=2}^{n}(p(K_2\ldots K_i)f(K_{i+1}))\right)\\
		&\stackrel{\ref{eq:M(K_cup_L)}}= f(K_1)\cup p(K_1)f(K_2)\cup \bigcup_{i=2}^{n}(p(K_1)p(K_2\ldots K_i)f(K_{i+1}))\\
		&\stackrel{\ref{eq:p(KL)}}=f(K_1)\cup p(K_1)f(K_2)\cup \bigcup_{i=2}^{n}(p(K_1K_2\ldots K_i)f(K_{i+1}))\\
		&= f(K_1)\cup\bigcup_{i=1}^n(p(K_1\ldots K_i)f(K_{i+1})).
\end{align*}
\end{proof}

\section{Krom-Conway seminearrings} \label{sec:Conway}

Krom programs can naturally be viewed as directed graphs with designated vertices, with sequential composition corresponding to relational composition and the Kleene star corresponding to reachability.

More precisely, every proper Krom program $K$ over an alphabet $A$ can be identified with the directed graph whose vertex set is $A$ and whose edge set consists of all pairs $(b,a)$ such that the rule $a \leftarrow b$ belongs to $K$. Under this correspondence, sequential composition coincides with relational composition of directed graphs.

This interpretation motivates the introduction of the \emph{Kleene star} and \emph{Kleene plus} on proper Krom programs. For every proper Krom program $K$, we define
\[
K^\ast := \bigcup_{n\geq 0}K^n \quad\text{and}\quad K^+ := KK^\ast.
\]
Intuitively, $K^\ast$ contains precisely those rules $a\leftarrow b$ for which there exists a directed path from $b$ to $a$ in the graph represented by the proper rules of $K$. Thus, $K^\ast$ can be viewed as the reflexive transitive closure of $K$ and provides an algebraic characterization of reachability.

We now investigate the interaction of the Kleene star with the remaining algebraic operations.

\begin{proposition} For any Krom program $K$ and interpretation $I$,
\begin{align} 
	\label{eq:K_cup_I=I^astK} K\cup I = I^\ast K.
\end{align}
\end{proposition}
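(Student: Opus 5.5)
We need to prove $K\cup I = I^\ast K$. An interpretation $I$ is a set of facts, so $p(I)=\emptyset$; the Kleene star was only defined for proper programs, so we first have to fix what $I^\ast$ means. The natural reading is the same formula $I^\ast := \bigcup_{n\geq 0} I^n$.

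\emph{Step 1: compute $I^\ast$.} By definition $I^0 = 1$ and $I^1 = I$. For $n\geq 2$ we have $I^n = I\,I^{n-1} = I$, since interpretations are left zeros by \prettyref{eq:IK=I}. Hence
\[
I^\ast = 1\cup I .
\]

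\emph{Step 2: distribute.} The equality $I^\ast K = (1\cup I)K$ uses right distributivity \prettyref{eq:(K_cup_L)M}, read in its intended form $(L\cup M)K = LK\cup MK$. This gives
\[
I^\ast K = 1K\cup IK = K\cup I,
\]
using $1K=K$ from \prettyref{t:seminearring} and \prettyref{eq:IK=I} again.

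\emph{Where care is needed.} The only real obstacle is justifying the infinite union and the definition of $I^\ast$. Right distributivity over arbitrary unions, $(\bigcup_n L_n)K=\bigcup_n L_nK$, holds directly from the definition of composition, because each rule of a composition comes from a single rule of the left factor. With this, one can skip Step 1 entirely and compute termwise:
\[
I^\ast K=\bigcup_{n\ge0} I^nK = 1K\cup\bigcup_{n\ge1} I^nK = K\cup I .
\]
Here $I^nK = I$ for $n\geq 1$ because $I^n = I$ is an interpretation, hence a left zero. I would also check directly from the definition of $\circ$ that $IK=I$: the first summand of the composition is $f(I)=I$, and the other two summands are empty since $I$ has no proper rules. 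That way the argument does not depend on any typographical issues in the cited identities.
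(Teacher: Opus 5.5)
Your proof is correct and follows the paper's argument. The paper likewise uses $I^\ast = 1\cup I$, which you additionally justify via the left-zero property $I^n = I$ for $n\geq 1$, and then the chain $K\cup I = K\cup IK = (1\cup I)K = I^\ast K$ by right distributivity; your termwise version using distributivity over arbitrary unions is only a minor variant of the same computation.
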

\begin{proof} The Kleene star acts on interpretations as
\begin{align} 
	\label{eq:I^ast} I^\ast = 1\cup I.
\end{align} This yields
\begin{align*} 
	K\cup I \stackrel{\ref{eq:IK=I}}= K\cup IK \stackrel{\ref{eq:(K_cup_L)M}}= (1\cup I)K \stackrel{\ref{eq:I^ast}}= I^\ast K.
\end{align*}
\end{proof}

\begin{proposition} The Kleene star and the proper rule operator $p$ are compatible in the sense that
\begin{align} 
	\label{eq:p(K^ast)} p(K^\ast) = p(K)^\ast
\end{align} holds for every Krom program $K$.
\end{proposition}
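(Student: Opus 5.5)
The plan is to apply $p$ termwise to the union defining $K^\ast$ and then use that $p$ is a homomorphism. Note that $K^\ast=\bigcup_{n\geq 0}K^n$ makes sense for an arbitrary Krom program $K$: powers are iterated sequential compositions, with $K^0=1$.

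First, I will check that $p$ commutes with arbitrary (possibly infinite) unions, not only the binary ones of \prettyref{eq:p(K_cup_L)}. This is immediate from the definition, since $p(\bigcup_i K_i)$ consists of the proper rules lying in some $K_i$, which is exactly $\bigcup_i p(K_i)$. This gives
\begin{align*}
	p(K^\ast) = p\Big(\bigcup_{n\geq 0}K^n\Big) = \bigcup_{n\geq 0}p(K^n).
\end{align*}

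Next, I will show $p(K^n)=p(K)^n$ for all $n\geq 0$ by induction on $n$. The base case $n=0$ is $p(1)=1$, which is \prettyref{eq:p(1)}. The step is $p(K^{n+1})=p(K^nK)=p(K^n)p(K)=p(K)^np(K)=p(K)^{n+1}$, using \prettyref{eq:p(KL)} and the induction hypothesis. Substituting yields $p(K^\ast)=\bigcup_{n\geq 0}p(K)^n=p(K)^\ast$, and the right-hand side is a genuine Kleene star of a proper program in the sense defined above.

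The only point requiring care is the passage from finite to infinite unions in the first step. The rest follows directly from \prettyref{t:p_hom}. Should one worry about it, a direct argument also works: a proper rule $a\leftarrow b$ lies in $K^n$ iff there is a chain $a\leftarrow c_1,\ldots,c_{n-1}\leftarrow b$ of proper rules of $K$. This holds because facts of $K$ only ever contribute facts to compositions, by the definition of $\circ$. It directly gives $p(K^n)=p(K)^n$ as well.
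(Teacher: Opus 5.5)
Your proposal is correct and follows essentially the same route as the paper, which also pushes $p$ through the union $\bigcup_{n\geq 0}K^n$ and then uses multiplicativity of $p$ to obtain $p(K^n)=p(K)^n$. Your explicit handling of infinite unions and of the base case $p(1)=1$ only makes precise what the paper leaves implicit when it cites the binary identities.
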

\begin{proof}
\begin{align*} 
	p(K^\ast) = p(\bigcup_{n\geq 0} K^n) \stackrel{\ref{eq:p(K_cup_L)}}= \bigcup_{n\geq 0} p(K^n) \stackrel{\ref{eq:p(KL)}}= \bigcup_{n\geq 0} p(K)^n = p(K)^\ast.
\end{align*}
\end{proof}



In the algebraic theory of formal languages and automata (see, e.g., \cite{Esik06}), the so-called \emph{Conway axioms}, introduced by J. H. Conway in \cite{Conway71}, play a fundamental role. We briefly recall two of them here.

Let $(S,+,\cdot,0,1,{}^\ast)$ be a starsemiring in the sense of \cite{Esik06}, that is, a semiring $(S,+,\cdot,0,1)$ enriched with an unary star operation ${}^\ast$. Two equations from \cite[p.15]{Conway71} will be important for our purposes (see also \cite[p.15]{Esik06}):
\begin{itemize}
\item The \emph{sum-star equation}
\begin{align*}
	(a+b)^\ast=(a^\ast b)^\ast a^\ast
\end{align*}
for all $a,b\in S$.

\item The \emph{product-star equation}
\begin{align*}
	(ab)^\ast=1+a(ba)^\ast b
\end{align*}
for all $a,b\in S$.
\end{itemize}

Conway identities provide an algebraic axiomatization of Kleene star and play a central role in the theory of regular languages and automata. It is therefore natural to ask whether the Kleene star arising from Krom programs satisfies these identities.

There is a natural analogy between formal languages and Krom programs, and more generally between starsemirings and starseminearrings. This motivates the following definition:

\begin{definition}
A \emph{Conway seminearring} is a seminearring enriched with a star operation satisfying the sum-star equation and the product-star equation.
\end{definition}

\begin{theorem}\label{t:Conway} The Krom seminearring together with the Kleene star operation is a Conway seminearring, henceforth called the \emph{Krom-Conway seminearring}. Explicitly, in addition to the axioms in \prettyref{t:seminearring} we have
\begin{align} 
	\label{eq:(K_cup_L)^ast}(K\cup L)^\ast &= (K^\ast L)^\ast K^\ast\\
	\label{eq:product-star} (KL)^\ast &= 1\cup K(LK)^\ast L.
\end{align}
\end{theorem}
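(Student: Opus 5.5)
The plan is to prove both identities by expanding every Kleene star into a union of powers and matching the resulting unions of products term by term. The key auxiliary fact is that the two distributive laws \prettyref{eq:M(K_cup_L)} and \prettyref{eq:(K_cup_L)M} of \prettyref{t:seminearring} extend from binary to arbitrary \emph{nonempty} unions:
\begin{align*}
	M\Bigl(\bigcup_{i\in I}L_i\Bigr)=\bigcup_{i\in I}ML_i
	\quad\text{and}\quad
	\Bigl(\bigcup_{i\in I}L_i\Bigr)M=\bigcup_{i\in I}L_iM,
	\qquad I\neq\emptyset .
\end{align*}
I would read both off the definition of $K\circ L$. Right distributivity holds for every index set, since each rule of $K$ contributes to $K\circ L$ independently of the other rules of $K$. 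For left distributivity, each derived fact or proper rule in $M\circ\bigcup_i L_i$ uses exactly one rule of one $L_i$, so it already lies in $ML_i$. The part $f(M)$ lies in every $ML_i$, which is where $I\neq\emptyset$ is needed: $M\emptyset=f(M)$ is in general not $\emptyset$. All unions below are nonempty, since each contains the $n=0$ or $m_j=0$ terms, so this suffices. I take $K^\ast=\bigcup_{n\ge0}K^n$ with $K^0=1$ for arbitrary Krom programs.

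For the sum-star equation \prettyref{eq:(K_cup_L)^ast}:
\begin{itemize}
\item By induction on $n$, using binary distributivity and associativity \prettyref{eq:K(LM)=(KL)M}, $(K\cup L)^n$ is the union of all products $X_1\cdots X_n$ with each $X_i\in\{K,L\}$. Hence $(K\cup L)^\ast$ is the union of all finite words over $\{K,L\}$.
\item On the other side, I expand $K^\ast L=\bigcup_{m\ge0}K^mL$. Repeated infinitary distributivity then gives
\[
(K^\ast L)^n K^\ast=\bigcup_{m_0,\dots,m_n\ge0}K^{m_0}LK^{m_1}L\cdots K^{m_{n-1}}LK^{m_n},
\]
with exactly $n$ occurrences of $L$.
\item Taking the union over $n\ge0$ gives again all words over $\{K,L\}$. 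Every word factors uniquely as $K^{m_0}L\cdots LK^{m_n}$ with $n$ the number of occurrences of $L$, so the two sides coincide.
\end{itemize}

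For the product-star equation \prettyref{eq:product-star}, I would use associativity to get $(KL)^{n+1}=K(LK)^nL$ for all $n\ge0$, a simple induction. Then infinitary distributivity on both sides gives
\[
K(LK)^\ast L=\bigcup_{n\ge0}K(LK)^nL=\bigcup_{n\ge1}(KL)^n .
\]
Adding $1=(KL)^0$ yields $(KL)^\ast$.

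The main obstacle is the infinitary distributivity step, specifically left distributivity. In general seminearrings left distributivity can fail, and here it fails for the empty union because of facts. The care lies in checking that facts of $M$ survive exactly once on both sides, and that every union to which the law is applied is nonempty. Once that is settled, the rest is word combinatorics.
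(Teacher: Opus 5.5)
Your proof is correct and follows the same route as the paper. The sum-star equation comes from the unique factorization of words over $\{K,L\}$ as $K^{m_0}LK^{m_1}\cdots LK^{m_n}$, and the product-star equation from $(KL)^{n+1}=K(LK)^nL$. Your one addition is the explicit check that composition distributes over arbitrary nonempty unions, on the left only for $I\neq\emptyset$ because $M\emptyset=f(M)$; the paper uses this step silently when it moves $K$ and $L$ through $\bigcup_n$.
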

\begin{proof} We first prove the sum-star-equation. Every word over the alphabet $\{K,L\}$ can be written uniquely in the form
\begin{align*}
	K^{i_1}L K^{i_2}L\cdots K^{i_m}L K^{i_{m+1}},
\end{align*}
where $m\geq 0$ and $i_1,\ldots,i_{m+1}\geq 0$. Hence
\begin{align*}
	(K\cup L)^\ast = \bigcup_{m\geq 0}\bigcup_{i_1,\ldots,i_{m+1}\geq 0} K^{i_1}L K^{i_2}L\cdots K^{i_m}L K^{i_{m+1}} = \bigcup_{m\geq 0}(K^\ast L)^m K^\ast = (K^\ast L)^\ast K^\ast.
\end{align*}

We now prove the product-star-equation. We compute
\begin{align*}
	(KL)^\ast = \bigcup_{n\geq 0}(KL)^n
	= 1\cup\bigcup_{n\geq 1}(KL)^n
	= 1\cup\bigcup_{n\geq 1}K(LK)^{n-1}L
	= 1\cup K\left(\bigcup_{m\geq 0}(LK)^m\right)L
	= 1\cup K(LK)^\ast L.
\end{align*}
\end{proof}

\begin{theorem} The starsemiring of proper Krom programs is a Conway semiring.
\end{theorem}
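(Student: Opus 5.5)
The statement follows by restricting to proper Krom programs, so the plan is to check three things: closure under the operations, the semiring axioms, and the two Conway identities. By \prettyref{t:p(mathbb_K)}, $(p(\mathbb K),\cup,\circ,\emptyset,1)$ is an idempotent semiring, so only the star needs attention.

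First I would check that $p(\mathbb K)$ is closed under ${}^\ast$. If $K=p(K)$, then \prettyref{eq:p(K^ast)} gives $p(K^\ast)=p(K)^\ast=K^\ast$, so $K^\ast$ is again proper. This relies on $p$ commuting with arbitrary (countable) unions, which holds because $p$ acts rule-wise. Alternatively, by \prettyref{eq:f(K1-Kn)} each power $K^n$ has no facts, since $f(K)=\emptyset$, and the union of fact-free programs is fact-free. Hence $(p(\mathbb K),\cup,\circ,\emptyset,1,{}^\ast)$ is a starsemiring in the sense of \cite{Esik06}.

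Next, the sum-star and product-star equations hold for all Krom programs by \prettyref{t:Conway}. Their instances for proper programs therefore hold as well, because all operations involved ($\cup$, $\circ$, ${}^\ast$, and the constants $\emptyset,1$) stay inside $p(\mathbb K)$. So the star on proper programs satisfies both Conway identities, and $p(\mathbb K)$ is a Conway semiring.

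The only real obstacle is one step in the proof of \prettyref{t:Conway}: pulling fixed factors through infinite unions, as in $\bigcup_m (K^\ast L)^m K^\ast$ and $K(\bigcup_m (LK)^m)L$. Finite distributivity alone does not justify this. For proper programs I would argue it directly. Composition of proper programs is relational composition of the associated graphs, and relational composition distributes over arbitrary unions on both sides: a rule $a\leftarrow b$ lies in $K(\bigcup_i L_i)$ iff it arises from some $a\leftarrow c\in K$ and $c\leftarrow b\in L_i$ for some $i$. The same argument works on the other side. With infinitary distributivity in hand, the word-expansion arguments of \prettyref{t:Conway} go through verbatim in $p(\mathbb K)$, which completes the plan.
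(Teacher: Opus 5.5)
Your proposal is correct and follows essentially the paper's route. The paper's proof just says it is analogous to that of \prettyref{t:Conway}, i.e., one reruns (or inherits) the word-expansion arguments for the sum-star and product-star equations inside the semiring $p(\mathbb K)$ of \prettyref{t:p(mathbb_K)}, and your extra checks are details the paper leaves implicit and handles correctly: closure of $p(\mathbb K)$ under ${}^\ast$, and distributivity of composition of proper programs, viewed as relational composition, over arbitrary unions on both sides.
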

\begin{proof} Analogous to the proof of \prettyref{t:Conway}.
\end{proof}

\begin{corollary} For any Krom program $K$ and interpretation $I$,
\begin{align} 
	(K\cup I)^\ast = K^\ast\cup K^\ast I.
\end{align}
\end{corollary}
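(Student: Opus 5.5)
The plan is to derive the identity directly from the sum-star equation \prettyref{eq:(K_cup_L)^ast} of \prettyref{t:Conway}, with $L := I$. This gives
\begin{align*}
	(K\cup I)^\ast = (K^\ast I)^\ast K^\ast .
\end{align*}
The remaining work is to compute $(K^\ast I)^\ast$ explicitly. The key observation is that $K^\ast I$ contains no proper rules: by the definition of composition, the proper rules of a composition $MI$ arise from pairs $a\leftarrow c\in M$, $c\leftarrow b\in I$, and an interpretation has no proper rules. Hence $K^\ast I = f(K^\ast I)$ is itself an interpretation, call it $J$.

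Next I invoke \prettyref{eq:I^ast}, which gives $J^\ast = 1\cup J$. (Alternatively, this follows from \prettyref{eq:IK=I}: $J^n = J$ for $n\geq 1$, so $\bigcup_n J^n = 1\cup J$.) Then right distributivity \prettyref{eq:(K_cup_L)M} yields
\begin{align*}
	(K^\ast I)^\ast K^\ast = (1\cup K^\ast I)K^\ast = K^\ast \cup K^\ast I K^\ast .
\end{align*}
Finally, $K^\ast I$ is an interpretation, so by \prettyref{eq:IK=I} (interpretations are left zeros) we get $(K^\ast I)K^\ast = K^\ast I$, using associativity \prettyref{eq:K(LM)=(KL)M} to regroup. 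This gives $K^\ast\cup K^\ast I$, as claimed.

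The only step needing care is the claim that $K^\ast I$ is an interpretation, i.e.\ has no proper rules. This is immediate from the definition of $\circ$, or from \prettyref{t:p_hom}: $p(K^\ast I)=p(K^\ast)p(I)=p(K^\ast)\emptyset=\emptyset$ by \prettyref{t:p(mathbb_K)}. A minor subtlety is that the Kleene star was introduced for proper programs, while here $K$ may contain facts. I will read $K^\ast=\bigcup_n K^n$ for arbitrary $K$, as \prettyref{t:Conway} implicitly does, so that no further argument is needed.
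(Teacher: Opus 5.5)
Your proof is correct and follows the paper's route: apply the sum-star equation with $L:=I$, collapse $(K^\ast I)^\ast$ to $1\cup K^\ast I$, distribute from the right, and use that interpretations are left zeros. The only cosmetic difference is that you first note $K^\ast I$ is itself an interpretation $J$ and then use $J^\ast=1\cup J$ and $JK^\ast=J$. The paper instead shows $(K^\ast I)^n=K^\ast I$ directly by applying $IK^\ast=I$ inside the product.
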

\begin{proof} We have
\begin{align*} 
	(K\cup I)^\ast \stackrel{\ref{eq:(K_cup_L)^ast}}= (K^\ast I)^\ast K^\ast = \left(\bigcup_{n\geq 0} (K^\ast I)^n\right)K^\ast.
\end{align*} Now since, for every $n\geq 1$,
\begin{align} 
	\label{eq:(K^astI)^n}(K^\ast I)^n = K^\ast IK^\ast I\ldots K^\ast I \stackrel{\ref{eq:IK=I}}= K^\ast I,
\end{align} we can simplify the above expression to
\begin{align*} 
	\left(\bigcup_{n\geq 0} (K^\ast I)^n\right)K^\ast \stackrel{\ref{eq:(K^astI)^n}}= (1\cup K^\ast I)K^\ast \stackrel{\ref{eq:(K_cup_L)M}}= K^\ast\cup K^\ast IK^\ast \stackrel{\ref{eq:IK=I}}= K^\ast\cup K^\ast I.
\end{align*}
\end{proof}

\section{Krom-Conway omegaseminearrings}\label{sec:omega}

We follow the algebraic tradition of semiring-based automata theory \cite{Esik06} and study a seminearring equipped with an additional unary omega operation.

In the rest of the paper, we will be concerned with the Krom algebra $$\mathfrak K_A := ( \mathbb K_A, \cup, \circ, \emptyset, 1_A, {}^\ast, {}^\omega )$$ of Krom programs under set union and composition together with the Kleene star and the least model ${}^\omega$-operation defined as
\begin{align*} 
	K^\omega := \bigcup_{n\geq 1} (K^n \emptyset),
\end{align*} henceforth called the \emph{Krom-Conway omegaseminearring} over $A$.



The next result provides an explicit characterization of the ${}^\omega$-operator. It shows that the least model of a Krom program is obtained by repeatedly propagating its facts through its proper part.

\begin{theorem} For any Krom program $K$, we have
\begin{align} 
	\label{eq:K^omega} K^\omega = p(K)^\ast f(K).
\end{align} 
\end{theorem}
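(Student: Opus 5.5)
The plan is to compute $K^n\emptyset$ explicitly for each $n\geq 1$ and then take the union over $n$. By \prettyref{eq:f(K)=K0} and associativity \prettyref{eq:K(LM)=(KL)M}, we have $K^n\emptyset = f(K^n)$. The first step is therefore to instantiate the formula \prettyref{eq:f(K1-Kn)} with $K_1=\cdots=K_n=K$:
\begin{align*}
	f(K^n) = f(K)\cup\bigcup_{i=1}^{n-1} p(K^i)f(K).
\end{align*}

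The second step rewrites $p(K^i)$ as $p(K)^i$, using the endomorphism property \prettyref{eq:p(KL)} together with induction on $i$. We also write $f(K)$ as $p(K)^0 f(K)$, which is legitimate because $1f(K)=f(K)$. This gives
\begin{align*}
	K^n\emptyset = \bigcup_{i=0}^{n-1} p(K)^i f(K).
\end{align*}

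The third step takes the union over $n\geq 1$. These sets form an increasing chain whose union is $\bigcup_{i\geq 0}p(K)^i f(K)$. It remains to pull the interpretation $f(K)$ out of the infinite union, that is, to show
\begin{align*}
	\bigcup_{i\geq 0}\bigl(p(K)^i f(K)\bigr) = \Bigl(\bigcup_{i\geq 0}p(K)^i\Bigr) f(K) = p(K)^\ast f(K).
\end{align*}

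I expect this last equality to be the only delicate point. Right distributivity \prettyref{eq:(K_cup_L)M} is stated only for binary unions, so it has to be extended to arbitrary unions. I would do this directly from the definition of composition. For a family $(M_i)$ and a program $L$, a rule belongs to $(\bigcup_i M_i)L$ exactly when it arises from a single rule of some $M_i$, either as a fact of $M_i$, or as a rule $a\leftarrow b\in M_i$ with $b\in L$, or as a rule $a\leftarrow c\in M_i$ composed with $c\leftarrow b\in L$. This is precisely membership in $\bigcup_i (M_iL)$, because each clause in the definition of composition involves only one rule of the left factor. With this infinitary right distributivity in hand, \prettyref{eq:K^omega} follows.
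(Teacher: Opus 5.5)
Your proposal is correct and follows the paper's proof essentially step for step. Both arguments rewrite $K^n\emptyset$ as $f(K^n)$, expand it with the formula for $f(K_1\ldots K_n)$, replace $p(K^i)$ by $p(K)^i$, reindex the double union, and factor $f(K)$ out of $\bigcup_{n\geq 0}p(K)^n f(K)$. The only difference is that you justify the infinitary right distributivity needed in that last step, by noting that each clause of the composition uses only one rule of the left factor, whereas the paper applies it without comment.
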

\begin{proof} 
\begin{align*} 
	K^\omega
		& = \bigcup_{n\geq 1}(K^n\emptyset)\\
		& \stackrel{\ref{eq:f(K)=K0}}= \bigcup_{n\geq 1}f(K^n)\\
		&\stackrel{\ref{eq:f(K1-Kn)}}= \bigcup_{n\geq 1}\left(f(K)\cup \bigcup_{i=1}^{n-1}(p(K^i)f(K))\right)\\
		&\stackrel{\ref{eq:p(KL)}}= f(K)\cup\bigcup_{n\geq 1}\bigcup_{i=1}^{n-1}(p(K)^i f(K))\\
		&= f(K)\cup\bigcup_{n\geq 1}(p(K)^n f(K))\\
		&= \bigcup_{n\geq 0}(p(K)^n f(K))\\
		&= \left(\bigcup_{n\geq 0}p(K)^n\right)f(K)\\
		&= p(K)^\ast f(K).
\end{align*}
\end{proof}

\begin{proposition} For every Krom program $K$,
\begin{align} 
	\label{eq:f(K^ast)} K^\omega = f(K^\ast).
\end{align}
\end{proposition}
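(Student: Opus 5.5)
We prove $K^\omega = f(K^\ast)$ by computing the facts of $K^\ast$ directly and comparing with the definition of $K^\omega$. We will use that $f$ commutes with arbitrary unions. The union identity \eqref{eq:f(K_cup_L)} is stated for two programs, but the argument is the same for any family: a fact belongs to $\bigcup_i K_i$ exactly when it belongs to some $K_i$.

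Expanding the definition of the Kleene star gives
\begin{align*}
	f(K^\ast) = f\Bigl(\bigcup_{n\geq 0}K^n\Bigr) = \bigcup_{n\geq 0} f(K^n) = f(1)\cup\bigcup_{n\geq 1} f(K^n) = \bigcup_{n\geq 1} f(K^n),
\end{align*}
since the unit program $1$ contains no facts. By \eqref{eq:f(K)=K0} we have $f(K^n)=K^n\emptyset$ for each $n$. Hence the last expression equals $\bigcup_{n\geq 1}K^n\emptyset$, which is exactly the definition of $K^\omega$.

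As a cross-check, one can instead start from \eqref{eq:K^omega}, $K^\omega = p(K)^\ast f(K)$. The previous theorem's proof already rewrote $\bigcup_{n\geq 1} f(K^n)$ into this form, so this route gives the same identity.

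The only delicate point is passing $f$ through the infinite union defining $K^\ast$. Since $f$ simply selects the facts of a set of rules, this step is immediate from the definitions, and nothing more is needed.
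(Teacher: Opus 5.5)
Your proof is correct, and it takes a more direct route than the paper. The paper expands $f(K^n)$ via \eqref{eq:f(K1-Kn)} as $f(K)\cup\bigcup_{i=1}^{n-1}p(K)^i f(K)$ and collects the terms into $f(K)\cup p(K)^+f(K)=p(K)^\ast f(K)$. It then invokes the preceding theorem $K^\omega=p(K)^\ast f(K)$, so in effect it repeats the computation from that theorem's proof.

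You instead observe that the identity is essentially definitional. The operator $f$ commutes with the infinite union defining $K^\ast$, the $n=0$ summand $f(1)$ is empty, and $f(K^n)=K^n\emptyset$ by \eqref{eq:f(K)=K0}. Hence $f(K^\ast)=\bigcup_{n\geq 1}K^n\emptyset$, which is literally the definition of $K^\omega$.

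Your argument needs neither the expansion formula, nor infinite distributivity of composition over union, nor \eqref{eq:K^omega}. It also treats the $n=0$ term explicitly. The paper applies \eqref{eq:f(K1-Kn)}, which is stated only for $n\geq 1$, also at $n=0$; this is harmless only because $f(K)$ already occurs for $n\geq 1$.

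What the paper's route buys is the intermediate identity $f(K^\ast)=p(K)^\ast f(K)$. This exhibits the facts of $K^\ast$ as the facts of $K$ propagated through its proper part. You recover it as well by combining your result with \eqref{eq:K^omega}, as your cross-check remarks.
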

\begin{proof} We compute
\begin{align*} 
	f(K^\ast) 
		&= f(\bigcup_{n\geq 0} K^n)\\
		& \stackrel{\ref{eq:f(K_cup_L)}}= \bigcup_{n\geq 0} f(K^n)\\
		& \stackrel{\ref{eq:f(K1-Kn)}}= \bigcup_{n\geq 0}\left(f(K)\cup \bigcup_{i=1}^{n-1} p(K)^i f(K)\right)\\
		&= f(K)\cup \bigcup_{n\geq 0}\bigcup_{i=1}^{n-1} p(K)^i f(K)\\
		&= f(K)\cup \bigcup_{n\geq 1} p(K)^n f(K)\\
		& \stackrel{\ref{eq:(K_cup_L)M}}= f(K)\cup \left(\bigcup_{n\geq 1} p(K)^n\right)f(K)\\
		&= f(K)\cup p(K)^+ f(K)\\
		&= p(K)^\ast f(K)\\
		& \stackrel{\ref{eq:K^omega}}= K^\omega.
\end{align*}
\end{proof}

\begin{corollary} For any Krom program $K$,
\begin{align*} 
	K^\omega\subseteq K^\ast
\end{align*}
\end{corollary}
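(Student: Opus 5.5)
The corollary follows directly from the preceding proposition, which gives $K^\omega = f(K^\ast)$ as identity \eqref{eq:f(K^ast)}. The plan is therefore to reduce the inclusion $K^\omega\subseteq K^\ast$ to the trivial inclusion $f(M)\subseteq M$, valid for every Krom program $M$.

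First, recall that by definition $f(M)$ is the set of facts contained in $M$. In particular $f(M)\subseteq M$ for every $M\in\mathbb K$. Second, instantiate this with $M:=K^\ast=\bigcup_{n\geq 0}K^n$, which is again a Krom program, being a union of Krom programs. Combining the two steps gives
\begin{align*}
	K^\omega \stackrel{\ref{eq:f(K^ast)}}= f(K^\ast)\subseteq K^\ast.
\end{align*}

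As an alternative route that avoids the proposition, one could argue directly from the definition $K^\omega=\bigcup_{n\geq 1}K^n\emptyset$. Using \eqref{eq:f(K)=K0}, each term is $K^n\emptyset=f(K^n)\subseteq K^n\subseteq K^\ast$, and taking the union over $n\geq 1$ yields the claim.

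I expect no real obstacle. The only point needing care is that $f$ is applied to the possibly infinite union $K^\ast$, and that $K^\ast$ is a legitimate (possibly infinite) Krom program. This is allowed, since the paper works with possibly infinite programs throughout.
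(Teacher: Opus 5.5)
Your proposal is correct and is the same argument as the paper, which obtains the inclusion directly from the identity $K^\omega = f(K^\ast)$ together with the trivial fact $f(M)\subseteq M$. Your alternative route, $K^n\emptyset = f(K^n)\subseteq K^n\subseteq K^\ast$ for each $n\geq 1$, is also valid and does not need that preceding proposition.
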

\begin{proof} A direct consequence of \prettyref{eq:f(K^ast)}.
\end{proof}

\begin{corollary} The ${}^\omega$-operator is compatible with Kleene star in the in the sense that
\begin{align*} 
	(K^\ast)^\omega = (K^\omega)^+
\end{align*} holds for every Krom program $K$.
\end{corollary}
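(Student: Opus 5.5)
The plan is to show that both sides of the identity equal $K^\omega$. The right-hand side collapses immediately, and the left-hand side reduces to $K^\omega$ via the characterizations \prettyref{eq:K^omega} and \prettyref{eq:f(K^ast)}.

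\emph{Right-hand side.} By \prettyref{eq:K^omega}, $K^\omega = p(K)^\ast f(K)$. Since $f(K)$ is an interpretation and $p(K)^\ast$ consists only of proper rules, this composition contains no proper rules. Hence $K^\omega$ is an interpretation; call it $I$. By \prettyref{eq:IK=I}, interpretations are left zeros, so $I^n = I$ for all $n\ge 1$. This gives $I^\ast = 1\cup I$, as in \prettyref{eq:I^ast}. Then, using \prettyref{eq:M(K_cup_L)} and \prettyref{eq:IK=I},
\[
I^+ = I I^\ast = I(1\cup I) = I\cup II = I.
\]
So $(K^\omega)^+ = K^\omega$.

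\emph{Left-hand side.} Apply \prettyref{eq:K^omega} to the program $K^\ast$:
\[
(K^\ast)^\omega = p(K^\ast)^\ast f(K^\ast).
\]
By \prettyref{eq:p(K^ast)}, $p(K^\ast) = p(K)^\ast$. By \prettyref{eq:f(K^ast)} together with \prettyref{eq:K^omega}, $f(K^\ast) = K^\omega = p(K)^\ast f(K)$. Writing $P := p(K)$, this yields
\[
(K^\ast)^\omega = (P^\ast)^\ast\, P^\ast f(K).
\]

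It therefore remains to establish two facts about the star of a proper program $P$. The first is $P^\ast P^\ast = P^\ast$. This follows by expanding both unions with the distributive laws \prettyref{eq:M(K_cup_L)} and \prettyref{eq:(K_cup_L)M}, which turns the product into $\bigcup_{m,n\ge 0}P^{m+n}$, and this equals $P^\ast$. The second is $(P^\ast)^\ast = P^\ast$. The inclusion $\supseteq$ comes from the $n=1$ term of the outer union. For $\subseteq$, each power $(P^\ast)^n$ is contained in $P^\ast$, by induction on $n$ using the first fact.

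Combining these,
\[
(K^\ast)^\omega = P^\ast P^\ast f(K) = P^\ast f(K) = K^\omega = (K^\omega)^+.
\]

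The main point needing care is the distribution of composition over \emph{infinite} unions, which is used in the idempotence of the star. \prettyref{t:seminearring} states distributivity only for binary unions. I would justify the infinitary version directly from the definition of composition, which is clearly union-preserving in each argument for Krom programs; the paper already uses this tacitly in the proof of \prettyref{t:Conway}. Everything else is routine rewriting with earlier identities.
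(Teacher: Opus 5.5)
Your proof is correct, and its outer structure matches the paper's. Both sides are shown to equal $K^\omega$. The right-hand side reduces because $K^\omega$ is an interpretation and ${}^+$ is the identity on interpretations.

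The difference is on the left-hand side. The paper applies $L^\omega = f(L^\ast)$ to $L := K^\ast$ and then uses idempotency of the star on $K$ itself. This gives the one-line chain $(K^\ast)^\omega = f((K^\ast)^\ast) = f(K^\ast) = K^\omega$. You instead apply $L^\omega = p(L)^\ast f(L)$ to $L := K^\ast$ and push $p$ and $f$ through the star. As a result you only need idempotency, together with $P^\ast P^\ast = P^\ast$, for the proper part $P = p(K)$.

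Your version has the merit of actually proving the two facts the paper merely asserts: idempotency of ${}^\ast$ and $I^+ = I$ for interpretations. The detour through $p$ is not needed, however. The same computation $K^\ast K^\ast = \bigcup_{m,n\geq 0} K^{m+n} = K^\ast$ works verbatim for arbitrary Krom programs, with $(K^\ast)^\ast = K^\ast$ following by your induction. So your own argument fully justifies the paper's shorter route.

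One small precision: composition preserves unions in its second argument only for nonempty families, since $M\emptyset = f(M)$. Your phrase ``union-preserving in each argument'' should therefore be qualified. This does not affect your use, where all families are nonempty and $P$ is proper anyway.
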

\begin{proof} We have
\begin{align*} 
	(K^\ast)^\omega \stackrel{\ref{eq:f(K^ast)}}= f((K^\ast)^\ast) = f(K^\ast) = K^\omega = (K^\omega)^+,
\end{align*} where the second identity follows from the idempotency of the Kleene star and the last identity follows from the fact that Kleene plus is the identity on interpretations.
\end{proof}

We are now in the position to prove that the ${}^\omega$-operator captures the least-model semantics of Krom programs; cf.~\cite[Thm.~40]{Antic21-1}.

\begin{theorem}
For every Krom program $K$, we have
\[
	K^\omega = LM(K).
\]
\end{theorem}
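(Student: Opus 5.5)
We use the characterization \prettyref{eq:K^omega}, $K^\omega = p(K)^\ast f(K)$, and prove the two inclusions $K^\omega\subseteq LM(K)$ and $LM(K)\subseteq K^\omega$ separately. Throughout we rely on the fact that composing a proper program with an interpretation gives, by the definition of composition, $p(K)I=\{a\in A\mid a\leftarrow b\in K,\ b\in I\}$. This is exactly the immediate consequence operator of $p(K)$ applied to $I$.

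\emph{Step 1: $K^\omega$ is a model of $K$.} Set $M:=p(K)^\ast f(K)$.
\begin{itemize}
\item \emph{Facts.} Since $1\subseteq p(K)^\ast$, left distributivity \prettyref{eq:(K_cup_L)M} gives $f(K)=1f(K)\subseteq M$, so every fact of $K$ holds in $M$.
\item \emph{Proper rules.} Let $a\leftarrow b\in K$ with $b\in M$. Then $b\in p(K)^nf(K)$ for some $n\ge 0$, so $a\in p(K)\,p(K)^nf(K)=p(K)^{n+1}f(K)\subseteq M$. Here we use associativity \prettyref{eq:K(LM)=(KL)M} and the description of $p(K)I$ above.
\end{itemize}
Hence $M\models K$, and by minimality $LM(K)\subseteq K^\omega$.

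\emph{Step 2: $K^\omega$ is contained in every model.} Let $I\models K$. We show $p(K)^nf(K)\subseteq I$ by induction on $n$.
\begin{itemize}
\item \emph{Base case.} For $n=0$ we have $f(K)\subseteq I$, since $I$ satisfies all facts of $K$.
\item \emph{Inductive step.} Any $a\in p(K)^{n+1}f(K)=p(K)(p(K)^nf(K))$ comes from a rule $a\leftarrow b\in K$ with $b\in p(K)^nf(K)\subseteq I$. Since $I\models a\leftarrow b$, we get $a\in I$.
\end{itemize}
Taking the union over $n$ gives $K^\omega\subseteq I$. Applying this to $I=LM(K)$ yields $K^\omega\subseteq LM(K)$, which completes the proof.

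The only delicate point is justifying that $p(K)^\ast f(K)=\bigcup_n p(K)^nf(K)$, that is, that composition distributes over the possibly infinite union defining the star on the left. This is already used in the proof of \prettyref{eq:K^omega}, and it follows directly from the pointwise definition of composition, since \prettyref{eq:(K_cup_L)M} extends to arbitrary unions. Everything else is a routine unfolding of definitions.
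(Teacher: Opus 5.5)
Your proof is correct, but it takes a different route from the paper's.

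The paper works directly from the definition $K^\omega=\bigcup_{n\geq 1}K^n\emptyset$. It identifies composition with an interpretation with the van Emden--Kowalski operator, $T_K(I)=KI$ (citing \cite[Thm.~35]{Antic21-1}), so that $T_K^n(\emptyset)=K^n\emptyset$. It then invokes the classical fact that the least model is the union of the bottom-up iterates of $T_K$ from $\emptyset$.

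You instead start from the closed form $K^\omega=p(K)^\ast f(K)$ established earlier in the paper. From it you verify by hand the two properties that characterize the least model:
\begin{itemize}
\item $\bigcup_{n\geq 0}p(K)^nf(K)$ satisfies every rule of $K$;
\item every model contains each layer $p(K)^nf(K)$, by induction on $n$.
\end{itemize}

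In effect you reprove the van Emden--Kowalski characterization for the Krom case. This buys you two things. The argument is self-contained, with no appeal to the fixpoint theory of $T_K$ or to an external result. It also makes the reachability reading explicit: an atom lies in the least model iff it is reachable from a fact of $K$ along the proper rules.

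The paper's route is shorter and never needs the decomposition of $K$ into $p(K)$ and $f(K)$. Its ingredients, namely $T_K(I)=KI$ and the Kleene iteration of $T_K$, do not depend on rule bodies being single atoms. Your Step~2 and the closed form both rely essentially on the Krom structure.

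One small point is worth stating explicitly: each $p(K)^nf(K)$ is an interpretation, because $p(K)^n$ is proper and $f(K)$ contains no proper rules. This is what licenses applying your description $p(K)I=\{a\mid a\leftarrow b\in K,\ b\in I\}$ to it, and what makes $M\models K$ meaningful.
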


\begin{proof}
Recall the van Emden-Kowalski immediate consequence operator~\cite{vanEmden76}
\[
	T_K(I) := \{a\in A\mid a\leftarrow B\in K,\ B\subseteq I\}.
\]
This operator is represented by sequential composition in the sense that
\[
	T_K(I) = KI
\]
for every interpretation $I$; see~\cite[Thm.~35]{Antic21-1}. Hence the bottom-up iteration of $T_K$ from the empty interpretation is given by
\[
	T_K^n(\emptyset) = K^n\emptyset
\]
for every $n\geq 0$. It is well-known that the least model of a logic program is obtained by a least fixed point iteration of its van Emden-Kowalski operator~\cite{vanEmden76}. Therefore,
\[
	LM(K) = \bigcup_{n\geq 1} T_K^n(\emptyset) = \bigcup_{n\geq 1} K^n\emptyset = K^\omega.
\]
\end{proof}



\subsection{Omega and composition}

A natural question is how the ${}^\omega$-operator behaves with respect to sequential composition. The following theorem provides an explicit formula for the least model of a composition $KL$, showing that it can be computed entirely from the proper and factual components of the factors.

\begin{corollary} For any Krom programs $K$ and $L$,
\begin{align}
	\label{eq:(KL)^omega} (KL)^\omega = p(KL)^\ast f(K)\cup p(KL)^\ast p(K)f(L).
\end{align}
\end{corollary}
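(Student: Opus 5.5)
This is a direct specialization of \prettyref{eq:K^omega} to the program $KL$, followed by expanding the facts of a composition. The plan is to compute
\begin{align*}
	(KL)^\omega \stackrel{\ref{eq:K^omega}}= p(KL)^\ast f(KL) \stackrel{\ref{eq:f(KL)}}= p(KL)^\ast\bigl(f(K)\cup p(K)f(L)\bigr),
\end{align*}
and then distribute the left factor $p(KL)^\ast$ over the union.

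The distribution step uses left distributivity \prettyref{eq:M(K_cup_L)}, read in the intended form $M(K\cup L)=MK\cup ML$. It gives $p(KL)^\ast f(K)\cup p(KL)^\ast p(K)f(L)$, which is the claimed formula. The last term is to be read as $p(KL)^\ast\bigl(p(K)f(L)\bigr)$, and by associativity \prettyref{eq:K(LM)=(KL)M} the bracketing does not matter.

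The only point needing care is that $p(KL)^\ast$ is an infinite union of powers, so left distributivity must be justified for it, not only for a fixed finite program. Since $p(KL)^\ast$ is itself a single Krom program, this is not an issue: \prettyref{eq:M(K_cup_L)} applies directly. One can also check it from the definition of composition, where a composite $M\circ N$ is built rule-by-rule from single rules of $N$.

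Optionally, one could further rewrite $p(KL)=p(K)p(L)$ by \prettyref{eq:p(KL)}, but the statement does not require this.
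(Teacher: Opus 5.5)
Your proposal is correct and takes essentially the same route as the paper. Both apply $K^\omega = p(K)^\ast f(K)$ to $KL$, expand $f(KL) = f(K)\cup p(K)f(L)$, and distribute $p(KL)^\ast$ over the union by left distributivity; your reading of that identity as $M(K\cup L) = MK\cup ML$ is the intended one, and only the binary case is needed since $p(KL)^\ast$ is a single program.
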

\begin{proof} By the formula for the ${}^\omega$-operator in \prettyref{eq:K^omega} applied to $KL$, we have
\begin{align*} 
	(KL)^\omega = p(KL)^\ast f(KL) \stackrel{\ref{eq:p(KL)}, \ref{eq:f(KL)}}= p(KL)^\ast (f(K)\cup p(K)f(L)) \stackrel{\ref{eq:M(K_cup_L)}}= p(KL)^\ast f(K)\cup p(KL)^\ast p(K)f(L).
\end{align*}
\end{proof}

\subsection{Omega and set union}

The explicit characterization of the ${}^\omega$-operator obtained above immediately yields corresponding formulas for compound programs. We consider here the interaction of ${}^\omega$ with set union and derive an expression for the least model of the union of two Krom programs in terms of the least models of its parts.

\begin{corollary}\label{c:(K_cup_L)^omega} For any Krom programs $K$ and $L$,
\begin{align} 
	\label{eq:(K_cup_L)^omega} (K\cup L)^\omega = p(K^\ast L)^\ast K^\omega\cup p(L^\ast K)^\ast L^\omega.
\end{align}
\end{corollary}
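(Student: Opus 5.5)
The plan is to reduce everything to the proper and factual parts via \prettyref{eq:K^omega} and then apply the sum-star equation twice, once in each of the two symmetric orders. Throughout, I write $P := p(K)$ and $Q := p(L)$ for brevity.

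First, applying \prettyref{eq:K^omega} to $K\cup L$ and using \prettyref{eq:p(K_cup_L)} and \prettyref{eq:f(K_cup_L)} gives
\begin{align*}
	(K\cup L)^\omega = p(K\cup L)^\ast f(K\cup L) = (P\cup Q)^\ast\bigl(f(K)\cup f(L)\bigr) = (P\cup Q)^\ast f(K)\,\cup\,(P\cup Q)^\ast f(L),
\end{align*}
where the last step is left distributivity \prettyref{eq:M(K_cup_L)}.

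Next, I treat the two terms separately. For the first, I apply the sum-star equation \prettyref{eq:(K_cup_L)^ast} to the proper programs $P,Q$; this is legitimate since the proper Krom programs form a Conway semiring. It gives $(P\cup Q)^\ast = (P^\ast Q)^\ast P^\ast$, hence $(P\cup Q)^\ast f(K) = (P^\ast Q)^\ast P^\ast f(K)$, using associativity \prettyref{eq:K(LM)=(KL)M}. Now $P^\ast f(K) = K^\omega$ by \prettyref{eq:K^omega}. Moreover, $P^\ast Q = p(K)^\ast p(L) = p(K^\ast)p(L) = p(K^\ast L)$ by \prettyref{eq:p(K^ast)} and \prettyref{eq:p(KL)}. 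So the first term equals $p(K^\ast L)^\ast K^\omega$.

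For the second term, I use commutativity of union to write $(P\cup Q)^\ast = (Q\cup P)^\ast$. The sum-star equation in this order gives $(Q^\ast P)^\ast Q^\ast$. The same computation then turns $(P\cup Q)^\ast f(L)$ into $p(L^\ast K)^\ast L^\omega$. Combining the two terms yields \prettyref{eq:(K_cup_L)^omega}.

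The only delicate point is the use of the sum-star equation. It must be invoked for proper programs, where the semiring version holds, rather than for arbitrary Krom programs. Correspondingly, one must check carefully that $p$ commutes with the star and with products, so that $P^\ast Q$ really equals $p(K^\ast L)$. Both facts are supplied by \prettyref{eq:p(K^ast)} and \prettyref{t:p_hom}, so I expect no genuine obstacle.
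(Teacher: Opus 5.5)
Your proposal is correct and follows the paper's proof essentially step for step. Both rewrite $(K\cup L)^\omega$ as $(p(K)\cup p(L))^\ast(f(K)\cup f(L))$, distribute over the union of facts, and apply the sum-star equation in both orders. Both then recognize $p(K)^\ast f(K)=K^\omega$ and $p(L)^\ast f(L)=L^\omega$, and use $p(K)^\ast p(L)=p(K^\ast)p(L)=p(K^\ast L)$ and its symmetric counterpart. The one difference is that you insist sum-star be applied only to proper programs. The paper cites it directly from Theorem~\ref{t:Conway}, which asserts it for all Krom programs, so this caution is harmless but unnecessary.
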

\begin{proof} By the formula for the ${}^\omega$-operator in \prettyref{eq:K^omega} applied to $K\cup L$, we have
\begin{align*} 
	(K\cup L)^\omega
		&= p(K\cup L)^\ast f(K\cup L)\\
		&= (p(K)\cup p(L))^\ast (f(K)\cup f(L))\\ 
		& \stackrel{\ref{eq:M(K_cup_L)}}= (p(K)\cup p(L))^\ast f(K)\cup (p(K)\cup p(L))^\ast f(L)\\
		& \stackrel{\ref{eq:(K_cup_L)^ast}}= (p(K)^\ast p(L))^\ast p(K)^\ast f(K)\cup (p(L)^\ast p(K))^\ast p(L)^\ast f(L)\\
		& \stackrel{\ref{eq:K^omega}}= (p(K)^\ast p(L))^\ast K^\omega\cup (p(L)^\ast p(K))^\ast L^\omega\\
		& \stackrel{\ref{eq:p(K^ast)}}= (p(K^\ast)p(L))^\ast K^\omega\cup (p(L^\ast)p(K))^\ast L^\omega\\
		& \stackrel{\ref{eq:p(KL)}}= p(K^\ast L)^\ast K^\omega\cup p(L^\ast K)^\ast L^\omega.
\end{align*}
\end{proof}

\begin{corollary} For any Krom program $K$ and interpretation $I$,
\begin{align} 
	\label{eq:(K_cup_I)^omega} (K\cup I)^\omega = K^\omega\cup p(K)^\ast I.
\end{align}
\end{corollary}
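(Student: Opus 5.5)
The plan is to read the identity off directly from the explicit formula \prettyref{eq:K^omega}, $K^\omega = p(K)^\ast f(K)$, applied to the program $K\cup I$. No detour through \prettyref{c:(K_cup_L)^omega} is needed.

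First I determine the proper part and the facts of $K\cup I$. An interpretation $I$ consists of facts only, so $p(I)=\emptyset$ and $f(I)=I$. By \prettyref{eq:p(K_cup_L)} and \prettyref{eq:f(K_cup_L)} this gives
\[
p(K\cup I)=p(K)\quad\text{and}\quad f(K\cup I)=f(K)\cup I .
\]

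Next I substitute these into \prettyref{eq:K^omega}:
\[
(K\cup I)^\omega = p(K)^\ast\bigl(f(K)\cup I\bigr).
\]
Left distributivity \prettyref{eq:M(K_cup_L)} splits the right-hand side as $p(K)^\ast f(K)\cup p(K)^\ast I$. Applying \prettyref{eq:K^omega} once more, the first term equals $K^\omega$, and this gives the claimed identity.

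The computation is essentially mechanical. The only point needing care is the distributivity step: left distributivity is exactly what is available in the seminearring by \prettyref{t:seminearring}, and $p(K)^\ast$ is a proper program, so no issue arises from facts on the left. For infinite unions one can alternatively note that this is the instance $L=I$ of \prettyref{c:(K_cup_L)^omega}, using $p(K^\ast I)=\emptyset$, so that $\emptyset^\ast=1$, together with $I^\omega=I$. In that route I would double-check that the term $p(I^\ast K)^\ast I^\omega$ simplifies correctly to $p(K)^\ast I$ via $I^\ast=1\cup I$ from \prettyref{eq:I^ast}.
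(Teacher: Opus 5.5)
Your proof is correct and is essentially the paper's argument: apply $K^\omega = p(K)^\ast f(K)$ to $K\cup I$, use $p(I)=\emptyset$ and $f(I)=I$ to get $p(K)^\ast(f(K)\cup I)$, distribute from the left, and recognize $p(K)^\ast f(K)$ as $K^\omega$. Your fallback via the preceding corollary on $(K\cup L)^\omega$ also checks out, since $p(K^\ast I)=\emptyset$, $I^\ast K = K\cup I$ and $I^\omega = I$. It is not needed, though, because the binary left-distributivity step holds for arbitrary, even infinite, Krom programs.
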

\begin{proof}
\begin{align*}
	(K\cup I)^\omega
		& \stackrel{\ref{eq:K^omega}}= p(K\cup I)^\ast f(K\cup I)\\
		&\stackrel{\ref{eq:p(K_cup_L)},\ref{eq:f(K_cup_L)}}= (p(K)\cup p(I))^\ast(f(K)\cup f(I))\\
		&= p(K)^\ast(f(K)\cup I)\\
		&= p(K)^\ast f(K)\cup p(K)^\ast I\\
		&\stackrel{\ref{eq:K^omega}}= K^\omega\cup p(K)^\ast I.
\end{align*}
\end{proof}




\section{Permutation programs}\label{sec:permutations}

This brief section is concerned with permutations and their representation as Krom programs:

\begin{definition} We associate with every permutation $\pi : A\to A$ the \emph{permutation program}
\begin{align*} 
	K_\pi := \{\pi(a) \leftarrow a \mid a\in A\}.
\end{align*} We denote the set of all permutation programs over $A$ by $\Pi_A$.
\end{definition}


\begin{example} Let $A := \{a, b, c, d\}$. Then the Krom program
\begin{align*} 
	K_{(ab)(cd)} = \left\{
	\begin{array}{c}
		a \leftarrow b\\
		b \leftarrow a\\
		c \leftarrow d\\
		d \leftarrow c
	\end{array}
	\right\}
\end{align*} is the permutation program corresponding to the permutation $(ab)(cd)$ written in cycle notation.
\end{example}


\begin{theorem} For any permutations $\pi$ and $\theta$ of $A$, we have
\begin{align*} 
	K_\pi K_\theta &= K_{\pi\theta}\\
	K_{id_A} &= 1_A\\
	K_{\pi^{-1}} &= K_\pi^d,
\end{align*} where the dual $K_\pi^d$ is obtained from $K_\pi$ by reverting all arrows in $K_\pi$.\footnote{This is not a problem since $K_\pi$ consists only of proper rules.} That is, the mapping $\pi\mapsto K_\pi$ is a group isomorphism from the permutation group $S_A$ over $A$ to the group of all permutation programs over $A$ thus showing
\begin{align*} 
	(S_A, \circ, id_A, {}^{-1})\cong (\Pi_A, \circ, 1_A, {}^d).
\end{align*}
\end{theorem}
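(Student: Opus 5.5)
The plan is to check the three identities directly from the definition of composition and then read off the isomorphism. Permutation programs consist only of proper rules, so $f(K_\pi)=\emptyset$. Composition of two such programs therefore reduces to the third clause of the definition of $K\circ L$, the relational composition of the rule sets.

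\emph{Composition.} By definition,
\begin{align*}
K_\pi K_\theta = \{a\leftarrow b\mid a\leftarrow c\in K_\pi,\ c\leftarrow b\in K_\theta\}.
\end{align*}
Now $c\leftarrow b\in K_\theta$ iff $c=\theta(b)$, and $a\leftarrow c\in K_\pi$ iff $a=\pi(c)$. Hence the composite consists exactly of the rules $\pi(\theta(b))\leftarrow b$ for $b\in A$, which is $K_{\pi\theta}$. This uses the convention that $\pi\theta$ means "first $\theta$, then $\pi$", matching the right-to-left reading of rules. I will state this convention explicitly, since with the opposite convention one gets an anti-isomorphism. This is the only point needing care.

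\emph{Unit and inverse.} $K_{id_A}=\{a\leftarrow a\mid a\in A\}=1_A$ by definition. For the dual, $K_\pi^d=\{a\leftarrow\pi(a)\mid a\in A\}$. Substituting $b=\pi(a)$, and using that $\pi$ is a bijection so $b$ ranges over all of $A$, this becomes $\{\pi^{-1}(b)\leftarrow b\mid b\in A\}=K_{\pi^{-1}}$.

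\emph{Isomorphism.} The map $\pi\mapsto K_\pi$ is surjective onto $\Pi_A$ by definition of $\Pi_A$. It is injective because $\pi$ is recovered from $K_\pi$: $\pi(a)$ is the unique head of a rule with body $a$. Together with the three identities, this shows the map is a bijection preserving product, unit and inverse. In particular $\Pi_A$ is closed under $\circ$ and ${}^d$, so it is a group with identity $1_A$. Indeed $K_\pi K_\pi^d=K_{\pi\pi^{-1}}=1_A$, and the associativity of $\circ$ from \prettyref{t:seminearring} is consistent with this. This yields $(S_A,\circ,id_A,{}^{-1})\cong(\Pi_A,\circ,1_A,{}^d)$.
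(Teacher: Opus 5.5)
Your proof is correct: unfolding the definition of $\circ$ gives the three identities, since both fact clauses vanish because permutation programs are proper, and reading off $\pi(a)$ as the unique head of the rule with body $a$ gives injectivity. The paper states this theorem without any proof, treating it as immediate from the definitions, so your direct verification is exactly the routine check it leaves implicit; this includes your remark that $\pi\theta$ must mean $\pi\circ\theta$ for the map to be an isomorphism rather than an anti-isomorphism.
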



\begin{proposition} Two distinct permutation programs are never subsumption equivalent.
\end{proposition}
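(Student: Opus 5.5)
The plan is to reduce subsumption equivalence of permutation programs to equality of the programs themselves, and then to equality of the permutations. The paper has not restated the notion, so I will use the standard one from \cite{Antic21-1}. A rule $r$ \emph{subsumes} a rule $r'$ if both have the same head and the body of $r$ is contained in the body of $r'$. Two programs $K$ and $L$ are \emph{subsumption equivalent} if every rule of $K$ is subsumed by some rule of $L$ and vice versa. For Krom rules this is very concrete: a proper rule $a\leftarrow b$ is subsumed exactly by the rule $a\leftarrow b$ itself and by the fact $a$, while a fact $a$ is subsumed only by the fact $a$.

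The key steps are as follows. First, a permutation program $K_\pi$ consists only of proper rules, so $f(K_\pi)=\emptyset$ and $p(K_\pi)=K_\pi$. Second, let $\pi\neq\theta$ and suppose, for contradiction, that $K_\pi$ and $K_\theta$ are subsumption equivalent. Pick $a\in A$ with $\pi(a)\neq\theta(a)$ and consider the rule $\pi(a)\leftarrow a\in K_\pi$. It must be subsumed by some rule of $K_\theta$. Since $K_\theta$ contains no facts, that rule must be $\pi(a)\leftarrow a$ itself. Third, $K_\theta$ contains exactly one rule with body $a$, namely $\theta(a)\leftarrow a$, which gives $\theta(a)=\pi(a)$, a contradiction. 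Equivalently, subsumption equivalence forces $K_\pi\subseteq K_\theta$ and $K_\theta\subseteq K_\pi$, hence $K_\pi=K_\theta$. The injectivity of $\pi\mapsto K_\pi$ from the preceding theorem then yields $\pi=\theta$.

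The main obstacle is not the combinatorics but fixing the right notion of subsumption equivalence. If the paper intends the variant via the closure $K^{\mathrm{ss}}$, obtained by adding all rules subsumed by rules of $K$, with equivalence meaning $K^{\mathrm{ss}}=L^{\mathrm{ss}}$, the same argument applies. For a fact-free Krom program the only rules subsumed by $a\leftarrow b$ are, over singleton bodies, $a\leftarrow b$ itself. So the closure adds nothing relevant, and $K_\pi^{\mathrm{ss}}=K_\theta^{\mathrm{ss}}$ again forces $K_\pi=K_\theta$. I would state this observation, that fact-freeness makes subsumption collapse to equality, as a small lemma first and then conclude as above.
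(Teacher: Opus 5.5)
Your proof is correct, but it uses a different definition of subsumption equivalence from the one the paper intends, so the route differs.

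The paper follows Maher: $K\equiv_{ss}L$ iff $K\circ I=L\circ I$ for every interpretation $I$, i.e.\ the two programs have the same van Emden--Kowalski operator. Its proof is then a one-line witness. For $a$ with $\pi(a)\neq\sigma(a)$,
$K_\pi\circ\{a\}=\{\pi(a)\}\neq\{\sigma(a)\}=K_\sigma\circ\{a\}$.

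Your mutual-subsumption notion is equivalent, but you should state the bridge explicitly rather than assume it is the standard definition. A Krom rule $h\leftarrow B$ (with $B=\emptyset$ for facts) is subsumed by some rule of $L$ iff $h\in L\circ B$. Hence:
\begin{itemize}
\item $K\circ I=L\circ I$ for all $I$ gives mutual subsumption, by taking $I=B$ for each rule;
\item conversely, mutual subsumption gives $K\circ I\subseteq L\circ I$ and $L\circ I\subseteq K\circ I$ by monotonicity.
\end{itemize}
Under this translation your key step is exactly the paper's witness. The rule $\pi(a)\leftarrow a$ must be subsumed by a rule of the fact-free program $K_\theta$ with body inside $\{a\}$, and the only candidate is $\theta(a)\leftarrow a$. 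This says precisely that $\pi(a)\in K_\theta\circ\{a\}=\{\theta(a)\}$.

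Your route buys a stronger lemma: on fact-free Krom programs, subsumption equivalence collapses to equality. So no two distinct proper Krom programs are ever subsumption equivalent, with no use of the permutation structure. Fact-freeness is genuinely needed there, since $\{a\}$ and $\{a,\ a\leftarrow b\}$ are subsumption equivalent but distinct.

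The paper's route is shorter and stays inside the compositional language the paper develops. Also, the appeal to injectivity of $\pi\mapsto K_\pi$ is unnecessary: deriving $K_\pi=K_\theta$ already contradicts the assumption that the two programs are distinct.
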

\begin{proof} 

Let $K_\pi$ and $K_\sigma$ be two distinct permutation programs over an alphabet $A$. Then there exists an atom $a\in A$ such that $\pi(a)\neq\sigma(a)$. Hence
\[
K_\pi\circ\{a\}=\{\pi(a)\}
\]
and
\[
K_\sigma\circ\{a\}=\{\sigma(a)\}.
\]
Since $\pi(a)\neq\sigma(a)$, it follows that
\[
K_\pi\circ\{a\}\neq K_\sigma\circ\{a\},
\]
which implies
\[
K_\pi\not\equiv_{ss} K_\sigma.
\]
\end{proof}

\section{Generators and canonical decompositions}\label{sec:generators}

In this section, we investigate generating sets for finite Krom seminearrings from two complementary perspectives. First, by fixing an enumeration of the underlying finite alphabet, we construct a natural generating set that yields shortest decompositions of singleton Krom programs and hence canonical decompositions of arbitrary Krom programs as unions of such singleton decompositions. Second, we show that, despite the size of this canonical generating set, every finite Krom seminearring can in fact be generated by only three elements. Thus, while canonical decompositions require a richer collection of generators, remarkably compact generating sets also exist.

\begin{theorem}\label{t:bbG_A}
Fix an enumeration $A=\{a_0,\ldots,a_n\}$, $n\geq 0$, of a finite alphabet $A$. Let $$\mathfrak K_A:=(\mathbb K_A,\cup,\circ,\emptyset,1_A)$$ be the corresponding finite Krom seminearring. The set
\begin{align}\label{eq:bbG_A}
	\mathbb G_A:=\{\{a_0\}\}\cup\{\{a_{i+1}\leftarrow a_i\}\mid 0\leq i\leq n-1\}\cup\{\{a_0\leftarrow a_n\}\}
\end{align}
generates $\mathbb K_A$ as a seminearring. Moreover, every Krom program is a union of shortest decompositions of singleton Krom programs over $\mathbb G_A$.
\end{theorem}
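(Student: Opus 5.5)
The idea is to generate every singleton Krom program from $\mathbb G_A$ by composition alone, and then obtain arbitrary programs as unions of singletons. Since $A$ is finite, every $K\in\mathbb K_A$ is a finite union of singleton programs $\{r\}$, and the empty program is the nullary union (the constant $\emptyset$ of the seminearring). So it suffices to produce each singleton $\{a_j\}$ and $\{a_j\leftarrow a_i\}$ as a product of generators; the unit $1_A$ is a constant of the signature and need not be generated. Write $c_i:=\{a_{i+1}\leftarrow a_i\}$ for $0\le i\le n-1$ and $c_n:=\{a_0\leftarrow a_n\}$, indices taken modulo $n+1$, so that the proper generators form a directed cycle $a_0\to a_1\to\cdots\to a_n\to a_0$.

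\textbf{Proper singletons.} By the definition of composition, for proper singletons we have $\{a\leftarrow c\}\{c\leftarrow b\}=\{a\leftarrow b\}$, while the product is $\emptyset$ if the middle atoms do not match. Hence, for $0\le i\le n$ and $k\ge 1$,
\begin{align*}
	c_{i+k-1}\cdots c_{i+1}c_i=\{a_{i+k}\leftarrow a_i\}.
\end{align*}
This is a routine induction on $k$. Given a target $\{a_j\leftarrow a_i\}$, take $k\in\{1,\ldots,n+1\}$ with $k\equiv j-i \pmod{n+1}$. Choosing $k=n+1$ when $i=j$ yields the self-loop $\{a_i\leftarrow a_i\}$.

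\textbf{Facts.} Since $\{a\leftarrow b\}\{b\}=\{a\}$ by the second clause of composition, we get $\{a_j\}=c_{j-1}\cdots c_0\{a_0\}$ for $j\ge1$, and $\{a_0\}$ is itself a generator. This settles the generation claim.

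\textbf{Shortest decompositions.} For the "moreover" part, I claim the words above are shortest, and I would prove this by path length in the cycle graph. Any product of generators is determined by its word. Composition of proper singletons is nonempty only when consecutive atoms match, so a nonempty product of proper generators must trace a directed walk along the cycle. Moreover, $\{a_0\}$ followed by anything gives $\{a_0\}$ by \prettyref{eq:IK=I}, so a fact can only usefully appear as the final factor.

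Consequently, a word yielding $\{a_j\leftarrow a_i\}$ corresponds to a walk from $a_i$ to $a_j$ of length at least $k$, with the minimal $k\in\{1,\ldots,n+1\}$ as chosen above. Likewise, a word yielding $\{a_j\}$ must have the form (walk from $a_0$ to $a_j$)$\cdot\{a_0\}$ and so has length at least $j+1$. Both lower bounds are attained by the words constructed above. Unions of these shortest words then give the canonical decomposition of an arbitrary $K$.

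\textbf{Main obstacle.} I expect the hardest step to be making the shortest-decomposition claim rigorous. One must rule out shortcuts through the union operation and through words containing $\{a_0\}$ in the middle, since composition is not right-annihilating. I would handle this by restricting attention to decompositions that are pure products: a union of nonempty products cannot be a singleton unless each summand already equals that singleton.
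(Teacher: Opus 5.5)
Your proposal is correct and follows essentially the paper's route: singletons are built as compositions along the directed cycle $a_0\to a_1\to\cdots\to a_n\to a_0$, with $\{a_0\}$ appended for facts; shortestness comes from the lengths of directed walks in that cycle; and arbitrary programs are unions of singletons. Two points are in your favour. First, your modular indexing also covers the self-loops $\{a_i\leftarrow a_i\}$ (length $n+1$), which the paper's case split into $i<j$ and $j<i$ omits. Second, you claim only that $K$ is a union of shortest singleton decompositions. The paper's last paragraph goes further and asserts that this union is itself a shortest decomposition of $K$, which fails in general: for $n\geq 3$, $c_1c_0\cup c_2c_1c_0=(c_1\cup c_2c_1)c_0$ uses four generator occurrences instead of five.
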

\begin{proof}
Consider the directed cycle $$a_0\to a_1\to\cdots\to a_n\to a_0$$ induced by the proper-rule generators in $\mathbb G_A$. For $0\leq i,j\leq n$, let $d(i,j)$ denote the length of the unique directed path from $a_i$ to $a_j$ in this cycle.

First, every singleton fact $\{a_i\}$ belongs to the subseminearring generated by $\mathbb G_A$. Indeed, for $i=0$ this is immediate since $\{a_0\}\in\mathbb G_A$, and for $1\leq i\leq n$ we have
\begin{align*}
	\{a_i\} &= \{a_i\leftarrow a_{i-1}\}\circ\cdots\circ\{a_1\leftarrow a_0\}\circ\{a_0\}.
\end{align*}
This decomposition uses exactly $d(0,i)$ proper-rule generators.

Next, let $\{a_j\leftarrow a_i\}$ be a proper singleton Krom program. If $i<j$, then
\begin{align*}
	\{a_j\leftarrow a_i\} &= \{a_j\leftarrow a_{j-1}\}\circ\cdots\circ\{a_{i+1}\leftarrow a_i\},
\end{align*}
which uses exactly $d(i,j)=j-i$ generators. If $j<i$, then the unique directed path from $a_i$ to $a_j$ passes through $a_n$ and wraps around to $a_0$, yielding
\begin{align*}
	\{a_j\leftarrow a_i\} &= \{a_j\leftarrow a_{j-1}\}\circ\cdots\circ\{a_1\leftarrow a_0\}\circ\{a_0\leftarrow a_n\}\circ\{a_n\leftarrow a_{n-1}\}\circ\cdots\circ\{a_{i+1}\leftarrow a_i\},
\end{align*}
which uses exactly $d(i,j)=n-i+j+1$ generators.

Hence every singleton Krom program can be expressed in terms of $\mathbb G_A$. Since every Krom program is the union of its singleton subprograms, it follows that $\mathbb G_A$ generates $\mathbb K_A$ as a seminearring.

It remains to prove that the above decompositions are shortest. Every proper-rule generator corresponds to exactly one edge of the directed cycle $$a_0\to a_1\to\cdots\to a_n\to a_0.$$ Thus every composition of $m$ proper-rule generators corresponds to a directed path of length $m$. Consequently, no singleton rule $\{a_j\leftarrow a_i\}$ can be obtained using fewer than $d(i,j)$ proper-rule generators. Similarly, no singleton fact $\{a_i\}$ can be obtained using fewer than $d(0,i)$ proper-rule generators together with the unique fact generator $\{a_0\}$. Hence all singleton decompositions displayed above are shortest.

Finally, let $K\in\mathbb K_A$ be arbitrary. Since composition distributes over set union, every decomposition of $K$ over $\mathbb G_A$ can be rewritten as a union of compositions representing singleton Krom programs. By the arguments above, each such singleton decomposition must have length at least equal to the corresponding directed distance. Hence replacing every singleton subprogram of $K$ by its shortest decomposition yields a shortest decomposition of $K$. Therefore every Krom program is a union of shortest decompositions of its singleton Krom subprograms over $\mathbb G_A$.
\end{proof}

\begin{proposition} 
The generating set $\mathbb G_A$ in \prettyref{eq:bbG_A} is minimal.
\end{proposition}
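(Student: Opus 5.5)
Minimality here means that no proper subset of $\mathbb G_A$ generates $\mathbb K_A$ as a seminearring. The plan is to remove each generator $g\in\mathbb G_A$ in turn and exhibit a Krom program outside the subseminearring generated by $\mathbb G_A\setminus\{g\}$. Each time, the witness comes from an invariant: a set of programs that contains the remaining generators together with $\emptyset$ and $1_A$, and is closed under $\cup$ and $\circ$, but does not contain the removed generator $g$.

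\emph{Removing the fact generator $\{a_0\}$.} All remaining generators are proper programs, and so are $\emptyset$ and $1_A$. By \prettyref{t:p_hom}, $p(K)=K$ and $p(L)=L$ imply $p(K\cup L)=K\cup L$. Likewise, $K\circ L$ is proper for proper $K$: by \prettyref{eq:f(KL)}, $f(KL)=f(K)\cup p(K)f(L)=\emptyset\cup K\emptyset=\emptyset$. Hence the subseminearring generated lies inside $p(\mathbb K_A)$, and the fact $\{a_0\}$ is not reachable.

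\emph{Removing a proper generator, say the edge $e=\{a_{i+1}\leftarrow a_i\}$ (indices mod $n+1$, so that $\{a_0\leftarrow a_n\}$ is included).} We use the graph view of \prettyref{sec:Conway}. Consider the relation $E$ on $A$ given by the remaining cycle edges. Its reflexive–transitive closure $E^\ast$ is a linear order along the broken cycle, starting at $a_{i+1}$ and ending at $a_i$. In particular $(a_i,a_{i+1})\notin E^\ast$ when $n\geq1$.

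Take as invariant the class of Krom programs $K$ such that every proper rule $c\leftarrow b\in K$ satisfies $(b,c)\in E^\ast$. This class contains $\emptyset$, $1_A$, all remaining generators, and every fact, including $\{a_0\}$. It is closed under union trivially. It is closed under composition because the proper rules of $KL$ are exactly $p(K)p(L)$, i.e.\ relational composition, and $E^\ast$ is transitive. The generator $\{a_{i+1}\leftarrow a_i\}$ is not in the class, so it cannot be generated.

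The main obstacle is the degenerate case $n=0$. There $\mathbb G_A=\{\{a_0\},\{a_0\leftarrow a_0\}\}$, and the only proper generator is $1_A$ itself, so the previous argument breaks down. As literally defined, the set in \prettyref{eq:bbG_A} contains $\{a_0\leftarrow a_n\}=1_A$, which is redundant because $1_A$ is a constant of the seminearring signature. I would therefore state the proposition for $n\geq1$, or read minimality modulo the constant $1$, and remark on this explicitly. For $n\geq1$ the two invariants above cover all $n+2$ generators, completing the proof.
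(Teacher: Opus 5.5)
Your proof is correct and follows the paper's route: removing $\{a_0\}$ leaves only proper programs, which are closed under $\cup$ and $\circ$, and removing one edge of the cycle leaves no directed path from $a_i$ to $a_{i+1}$. Your invariant (all proper rules lie in $E^\ast$) makes that path argument rigorous, since it explicitly covers the constants $\emptyset$ and $1_A$ and non-singleton intermediate programs, which the paper glosses over. Your caveat about $n=0$ is also right and exposes a genuine slip in the paper: there $\{a_0\leftarrow a_n\}=1_A$ is a constant of the signature, so the paper's claim that no path from $a_n$ to $a_0$ remains fails, and the proposition needs $n\geq 1$ (or a reading without constants).
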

\begin{proof} We show that no element of $\mathbb G_A$ can be generated by the remaining ones.

First, the generator $\{a_0\}$ cannot be omitted, since all other generators are proper programs and the composition and union of proper programs is again proper. Hence no fact can be generated without $\{a_0\}$.

Next, fix $0\leq i\leq n-1$. If the generator $\{a_{i+1}\leftarrow a_i\}$ is removed, then every remaining proper-rule generator corresponds to an edge of the directed cycle $$a_0\to a_1\to\cdots\to a_n\to a_0$$ except for the edge from $a_i$ to $a_{i+1}$. Since composition of singleton proper rules corresponds to concatenation of directed paths, every proper singleton rule generated by the remaining proper-rule generators corresponds to a directed path in the cycle with this edge removed. But there is no directed path from $a_i$ to $a_{i+1}$ in this graph. Therefore $\{a_{i+1}\leftarrow a_i\}$ cannot be generated by the remaining generators.

The same argument applies to the generator $\{a_0\leftarrow a_n\}$: if it is removed, then the edge from $a_n$ to $a_0$ is missing, and there is no directed path from $a_n$ to $a_0$ using the remaining proper-rule generators. Hence $\{a_0\leftarrow a_n\}$ cannot be generated by the remaining generators.

Thus no generator in $\mathbb G_A$ lies in the subseminearring generated by the other generators. Therefore $\mathbb G_A$ is minimal.
\end{proof}


\begin{proposition}
For every alphabet $A$ with $|A|\geq 2$, the finite Krom seminearring $\mathfrak K_A$ is generated by three elements.
\end{proposition}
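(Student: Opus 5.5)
The plan is to replace the $n+1$ ``edge'' generators of $\mathbb G_A$ in \prettyref{t:bbG_A} by a single cyclic permutation program together with one singleton proper rule. Fix an enumeration $A=\{a_0,\ldots,a_n\}$ with $n\geq 1$ (possible since $|A|\geq 2$). Let $\sigma$ be the cyclic permutation $a_i\mapsto a_{i+1}$ (indices mod $n+1$). The candidate generating set is
\begin{align*}
	\{\{a_0\},\ K_\sigma,\ \{a_1\leftarrow a_0\}\}.
\end{align*}
As in the proof of \prettyref{t:bbG_A}, every Krom program is the union of its singleton subprograms, and $\emptyset$ and $1_A$ are constants of the seminearring. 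It therefore suffices to show that every singleton fact $\{a_i\}$ and every singleton proper rule $\{a_j\leftarrow a_i\}$ lies in the subseminearring generated by these three elements.

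For facts, I would use the group isomorphism $\pi\mapsto K_\pi$ from \prettyref{sec:permutations}, which gives $K_\sigma^k=K_{\sigma^k}=\{a_{l+k}\leftarrow a_l\mid 0\leq l\leq n\}$. By the definition of composition, $K_\sigma^i\{a_0\}=\{a\mid a\leftarrow a_0\in K_\sigma^i\}=\{a_i\}$, since $K_\sigma^i$ has no facts. This handles all $\{a_i\}$, including $i=0$ with $K_\sigma^0=1$.

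For proper rules, I would conjugate the fixed rule by powers of $K_\sigma$. Choose $m\geq 0$ with $i+m\equiv 0 \pmod{n+1}$. Then
\begin{align*}
	\{a_1\leftarrow a_0\}K_\sigma^m=\{a_1\leftarrow b\mid a_0\leftarrow b\in K_\sigma^m\}=\{a_1\leftarrow a_i\}.
\end{align*}
Next choose $k\geq 0$ with $k\equiv j-1\pmod{n+1}$. Then
\begin{align*}
	K_\sigma^k\{a_1\leftarrow a_i\}=\{a\leftarrow a_i\mid a\leftarrow a_1\in K_\sigma^k\}=\{a_j\leftarrow a_i\}.
\end{align*}
Both computations only use the third component of the definition of $\circ$, since the programs involved are proper. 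Associativity \prettyref{eq:K(LM)=(KL)M} makes the bracketing irrelevant.

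The main point requiring care, rather than a genuine obstacle, is checking that composing with the full permutation program $K_\sigma^m$ or $K_\sigma^k$ does not introduce extra rules. This holds because each atom occurs exactly once as a body and once as a head in a permutation program, so composition with a singleton selects exactly one rule. The hypothesis $|A|\geq 2$ is needed only so that $a_1\neq a_0$ and $K_\sigma$ is a genuine cycle. One could alternatively use $\{a_0\leftarrow a_0\}$ as the third generator, with the same conjugation argument.
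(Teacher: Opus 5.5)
Your proposal is correct and follows essentially the same route as the paper. The paper uses the cyclic permutation program $C$, the fact $F=\{a_0\}$, and the singleton rule $E=\{a_0\leftarrow a_0\}$ (your suggested alternative), and conjugates $E$ by powers of $C$ to get $\{a_j\leftarrow a_i\}=C^jEC^{n+1-i}$ and $\{a_j\}=C^jEF$. Your choice of $\{a_1\leftarrow a_0\}$ and your direct computation $K_\sigma^i\{a_0\}=\{a_i\}$ are only cosmetic differences (the $E$ in the paper's $C^jEF$ is in fact redundant, since $EF=F$).
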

\begin{proof}
Fix an enumeration $A=\{a_0,\ldots,a_n\}$ with $n\geq 1$, and define $$C:=\{a_{i+1}\leftarrow a_i\mid 0\leq i<n\}\cup\{a_0\leftarrow a_n\},$$ $$E:=\{a_0\leftarrow a_0\},$$ and $$F:=\{a_0\}.$$ We claim that $C,E,F$ generate $\mathbb K_A$.

Using indices modulo $n+1$, the program $C$ represents the cyclic permutation $a_i\mapsto a_{i+1}$. Hence,
\begin{align*} 
	C^r=\{a_{i+r}\leftarrow a_i\mid 0\leq i\leq n\}
\end{align*} for every $r\geq 0$. Therefore, for all $0\leq i,j\leq n$, we have
\begin{align*}
	C^j\circ E\circ C^{n+1-i} &= \{a_j\leftarrow a_i\}.
\end{align*}
Thus every proper singleton Krom program is generated by $C$ and $E$. Moreover, for every $0\leq j\leq n$, we have
\begin{align*}
	C^j\circ E\circ F &= \{a_j\}.
\end{align*}
Hence every singleton fact is generated by $C,E,F$. Since every Krom program is a finite union of singleton Krom programs, it follows that $C,E,F$ generate $\mathbb K_A$.
\end{proof}

\section{Automata}\label{sec:automata}

In this section, we show that finite Krom monoids have enough algebraic structure to represent finite-state machines. More precisely, we establish that every deterministic semiautomaton naturally gives rise to a Krom monoid isomorphic to its transformation monoid.

Recall that a \emph{(deterministic) semiautomaton} is a triple $\mathfrak A=(Q,\Sigma,\delta)$, where $Q$ is a finite set of states, $\Sigma$ is a finite alphabet, and $\delta:Q\times\Sigma\to Q$ is the transition function.

For every $a\in\Sigma$, define the Krom program $$K_a:=\{\delta(q,a)\leftarrow q\mid q\in Q\},$$ and extend this assignment recursively to words by
\begin{align*}
	K_\varepsilon &:= 1_Q,\\
	K_{uv} &:= K_u\circ K_v,
\end{align*}
where $\varepsilon$ denotes the empty word and $u,v\in\Sigma^\ast$.

Similarly, define
\begin{align*}
	\delta_\varepsilon &:= id_Q,\\
	\delta_{uv} &:= \delta_u\circ\delta_v,
\end{align*}
and let $$\mathfrak T_ \mathfrak A:=(\{\delta_w\mid w\in\Sigma^\ast\},\circ,\delta_\varepsilon)$$ denote the transformation monoid of $\mathfrak A$. Furthermore, let $$\mathbb K_ \mathfrak A:=\{K_w\mid w\in\Sigma^\ast\}$$ and define the associated Krom monoid by $$\mathfrak K_ \mathfrak A:=(\mathbb K_ \mathfrak A,\circ,1_Q).$$

We then have the following correspondence between the Krom monoid and the transformation monoid of a semiautomaton:

\begin{theorem}\label{t:transformation-krom}
The mapping $$K_w\mapsto\delta_w,$$ for $w\in\Sigma^\ast$ is a monoid isomorphism from $\mathfrak K_ \mathfrak A$ onto the transformation monoid $\mathfrak T_ \mathfrak A$.
\end{theorem}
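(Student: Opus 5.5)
The plan is to pass through the intermediate class of \emph{functional} Krom programs. For an arbitrary map $g:Q\to Q$ I will set
\[
K_g:=\{g(q)\leftarrow q\mid q\in Q\},
\]
in analogy with the permutation programs $K_\pi$ of \prettyref{sec:permutations}. Two facts about this assignment will be proved, after which the theorem follows by induction on words.

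\emph{Step 1: composition.} I will show $K_gK_h=K_{g\circ h}$ and $K_{id_Q}=1_Q$ for all $g,h:Q\to Q$. Since $K_g$ and $K_h$ contain no facts, the first two parts of the definition of $K_g\circ K_h$ are empty. Only the rules $a\leftarrow b$ with $a\leftarrow c\in K_g$ and $c\leftarrow b\in K_h$ remain. Here necessarily $c=h(b)$ and $a=g(c)$, so exactly the rules $g(h(b))\leftarrow b$ appear. The identity case is immediate from the definition of $1_Q$.

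\emph{Step 2: injectivity.} I will show that $g\mapsto K_g$ is injective. For each $q\in Q$ the program $K_g$ contains exactly one rule with body $q$, namely the one with head $g(q)$. Hence $g$ can be read off from $K_g$.

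\emph{Step 3: induction on words.} Interpreting the generator case as $\delta_a:=\delta(\cdot,a)$ for $a\in\Sigma$, I will prove $K_w=K_{\delta_w}$ for every $w\in\Sigma^\ast$. The base cases are $w=\varepsilon$, which holds by Step 1 since $K_\varepsilon=1_Q=K_{id_Q}$, and $w=a\in\Sigma$, which holds by definition of $K_a$. For the inductive step,
\[
K_{uv}=K_uK_v=K_{\delta_u}K_{\delta_v}=K_{\delta_u\circ\delta_v}=K_{\delta_{uv}},
\]
using Step 1. The ordering conventions agree, since both $K_{uv}$ and $\delta_{uv}$ compose the $u$-part on the left. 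Consequently $\mathbb K_{\mathfrak A}=\{K_{\delta_w}\mid w\in\Sigma^\ast\}$ is the image of the transformation monoid's underlying set under $g\mapsto K_g$.

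\emph{Step 4: conclusion.} By Steps 1 and 2, $g\mapsto K_g$ restricted to $\{\delta_w\mid w\in\Sigma^\ast\}$ is a bijective monoid homomorphism onto $\mathfrak K_{\mathfrak A}$. Its inverse is exactly $K_w\mapsto\delta_w$, which is therefore a monoid isomorphism.

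The main obstacle I expect is the well-definedness of the map in the statement. It is specified on the representatives $K_w$, and distinct words $u\neq v$ may yield the same program. Injectivity in Step 2 resolves this: if $K_u=K_v$ then $K_{\delta_u}=K_{\delta_v}$, hence $\delta_u=\delta_v$. A minor further care point is that the recursive definition $\delta_{uv}:=\delta_u\circ\delta_v$ needs the generator case $\delta_a$ fixed explicitly, which is why Step 3 makes the interpretation $\delta_a:=\delta(\cdot,a)$ explicit.
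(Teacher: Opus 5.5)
Your proof is correct and complete. The paper gives no proof of this theorem: it is stated, and the text moves straight on to deterministic finite automata, just as the identity $K_\pi K_\theta=K_{\pi\theta}$ for permutation programs in Section~\ref{sec:permutations} is stated without proof. So there is no competing argument to compare with.

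Your Steps~1--3 supply exactly what the paper leaves implicit. First, they extend the permutation-program identity from bijections to arbitrary maps $g:Q\to Q$; your computation uses only that $K_g$ has a unique rule with body $q$ for each $q$, never injectivity of $g$. Second, they give the identity $K_w=\{\delta_w(q)\leftarrow q\mid q\in Q\}$, which the paper later invokes ``by construction'' in the proof of the next proposition. Step~2 also settles well-definedness of $K_w\mapsto\delta_w$ on representatives, which the paper passes over. The consistency of the recursions $K_{uv}:=K_u\circ K_v$ and $\delta_{uv}:=\delta_u\circ\delta_v$ across different factorizations of $w$ rests on associativity (Theorem~\ref{t:seminearring}), and belongs to the paper's setup rather than to your argument.

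On your care point about conventions: reading $\delta_u\circ\delta_v$ as ordinary composition $q\mapsto\delta_u(\delta_v(q))$ is not just convenient but forced. Under the diagrammatic convention often used for transformation monoids, the map would not be well defined. For example, take $Q=\{1,2\}$ with $\delta_a\equiv 1$ and $\delta_b\equiv 2$. Then $K_{ab}=K_a$, while diagrammatically $\delta_{ab}\equiv 2\neq\delta_a$.

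Consequently $\delta_w$ processes $w$ from right to left, so $\delta_w(q)$ is the usual extended transition from $q$ on the reversed word. This is harmless for the present theorem, but it matters when $\delta(q,w)$ is used in the paper's next proposition.
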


We now turn our attention to deterministic finite automata. Recall that such an automaton is a tuple $\mathfrak A=(Q,\Sigma,\delta,q_0,F)$, where $(Q,\Sigma,\delta)$ is a semiautomaton, $q_0\in Q$ is the initial state, and $F\subseteq Q$ is the set of final states. Its behavior is given by $$||\mathfrak A||:=\{w\in\Sigma^\ast\mid\delta(q_0,w)\in F\}.$$

The above theorem immediately yields the following characterization:

\begin{proposition}
For every deterministic finite automaton $\mathfrak A=(Q,\Sigma,\delta,q_0,F)$, $$||\mathfrak A||=\{w\in\Sigma^\ast\mid F\cap(K_w\circ{q_0})\neq\emptyset\}.$$
\end{proposition}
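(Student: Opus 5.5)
The plan is to reduce the statement to the single identity
\begin{align*}
	K_w\circ\{q_0\} = \{\delta_w(q_0)\}\qquad (w\in\Sigma^\ast),
\end{align*}
since once this holds we have $F\cap(K_w\circ\{q_0\})\neq\emptyset$ if and only if $\delta_w(q_0)\in F$, which is the membership condition for $||\mathfrak A||$. For this to go through I will read $\delta(q_0,w)$ as $\delta_w(q_0)$, where $\delta_w$ is built recursively as above from $\delta_a:=\delta(\cdot,a)$ with $\delta_{uv}=\delta_u\circ\delta_v$. This is the convention implicit in the paper's definition of $\mathfrak T_\mathfrak A$, and I will state it explicitly because it fixes the order in which the letters of $w$ are processed.

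First I will prove by induction on the length of $w$ that every $K_w$ is the \emph{functional} proper program
\begin{align*}
	K_w=\{\delta_w(q)\leftarrow q\mid q\in Q\}.
\end{align*}
This is essentially the content behind \prettyref{t:transformation-krom}, and it can also be cited from there. The base cases $w=\varepsilon$ (giving $1_Q$) and $w=a\in\Sigma$ hold by definition. For the step $w=uv$, I will compute $K_uK_v$ from the definition of sequential composition. Both factors are proper, so there are no facts and no rules of the form $a\leftarrow b$ with $b\in K_v$ a fact, and only the third set in the definition contributes. A rule $\delta_u(p)\leftarrow p\in K_u$ chains with $p\leftarrow q\in K_v$ exactly when $p=\delta_v(q)$. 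The result is therefore $\{\delta_u(\delta_v(q))\leftarrow q\mid q\in Q\}=\{\delta_{uv}(q)\leftarrow q\mid q\in Q\}$, as required.

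Next I will evaluate $K_w\circ\{q_0\}$ directly from the definition of composition. Since $f(K_w)=\emptyset$ and $\{q_0\}$ contains no proper rules, only the middle set survives. It consists of the heads $a$ of rules $a\leftarrow b\in K_w$ with $b\in\{q_0\}$, that is, $b=q_0$. By functionality there is exactly one such rule, namely $\delta_w(q_0)\leftarrow q_0$, so $K_w\circ\{q_0\}=\{\delta_w(q_0)\}$. Equivalently, this is the representation $T_K(I)=KI$ of the immediate consequence operator applied to $I=\{q_0\}$.

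The only step requiring real care is the induction step, specifically checking that the order of composition of Krom programs matches $\delta_{uv}=\delta_u\circ\delta_v$. Because $K\circ L$ applies $L$'s rules first and then $K$'s, which is relational composition read right to left, this is exactly function composition $\delta_u\circ\delta_v$, so I expect no mismatch. Everything else is unfolding definitions.
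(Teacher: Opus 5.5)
Your proposal is correct and follows the paper's proof. The paper also uses the functional form $K_w=\{\delta(q,w)\leftarrow q\mid q\in Q\}$ (asserted ``by construction'', which your induction justifies) to get $K_w\circ\{q_0\}=\{\delta(q_0,w)\}$, and then reads off membership in $F$. Your explicit convention $\delta(q_0,w):=\delta_w(q_0)$ is the right call: since $\delta_{uv}=\delta_u\circ\delta_v$ processes the last letter of $w$ first, under the usual left-to-right extended transition function the right-hand side would instead be the reversal of $||\mathfrak A||$.
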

\begin{proof}
By construction, $$K_w=\{\delta(q,w)\leftarrow q\mid q\in Q\}.$$ Since $\mathfrak A$ is deterministic, for every state $q\in Q$ the program $K_w$ contains exactly one rule whose body is $q$, namely $\delta(q,w)\leftarrow q$. In particular, $$K_w\circ{q_0}=\{\delta(q_0,w)\}.$$ Consequently, $$F\cap(K_w\circ{q_0})\neq\emptyset$$ if and only if $$\delta(q_0,w)\in F,$$ which is equivalent to $$w\in||\mathfrak A||.$$
\end{proof}

\begin{corollary}
Every finite monoid is isomorphic to a submonoid of a finite Krom monoid.
\end{corollary}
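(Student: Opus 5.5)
The plan is to combine Cayley's theorem for monoids with \prettyref{t:transformation-krom}. Let $M$ be a finite monoid with identity $e$, and take $Q:=M$ and $\Sigma:=M$. Define the semiautomaton $\mathfrak A:=(Q,\Sigma,\delta)$ by
\begin{align*}
	\delta(q,m) := mq, \quad q,m\in M.
\end{align*}
The transformation monoid $\mathfrak T_{\mathfrak A}$ then consists of the maps $\delta_w$ for $w\in\Sigma^\ast$. Since the paper composes transformations as $\delta_{uv}=\delta_u\circ\delta_v$, each $\delta_m$ is left multiplication $\lambda_m\colon q\mapsto mq$, and we get $\delta_{m_1\cdots m_k}=\lambda_{m_1}\circ\cdots\circ\lambda_{m_k}=\lambda_{m_1\cdots m_k}$. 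I will check the convention for the action of $\delta$ on words against the ordering of composition. If the convention works out the other way, I will use right multiplication $\delta(q,m):=qm$ instead, so that the map $m\mapsto\lambda_m$ (or $\rho_m$) respects products in the right order.

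Next I will show that $m\mapsto\lambda_m$ is an injective monoid homomorphism $M\to\mathfrak T_{\mathfrak A}$. It is a homomorphism because $\lambda_{mn}=\lambda_m\circ\lambda_n$ by associativity, and $\lambda_e=id_Q=\delta_\varepsilon$. It is injective because $\lambda_m(e)=m$. Its image is all of $\{\delta_w\}$, since every word evaluates to the product of its letters, so we in fact get $M\cong\mathfrak T_{\mathfrak A}$. Including the generator $e$ in $\Sigma$ makes no difference here, because $\delta_\varepsilon=\lambda_e$ already.

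By \prettyref{t:transformation-krom}, $\mathfrak T_{\mathfrak A}\cong\mathfrak K_{\mathfrak A}=(\mathbb K_{\mathfrak A},\circ,1_Q)$. Furthermore, $\mathbb K_{\mathfrak A}\subseteq\mathbb K_Q$, and it is closed under $\circ$ and contains $1_Q$. Therefore $M$ is isomorphic to $\mathfrak K_{\mathfrak A}$, which is a submonoid of the finite Krom monoid $(\mathbb K_Q,\circ,1_Q)$. Here $\mathbb K_Q$ is finite because $Q$ is finite.

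The only delicate step is the order convention mentioned above, since a mismatch would yield an anti-isomorphism. I expect this to be resolved by choosing between left and right multiplication appropriately. If needed, I will use the direct observation $K_m=\{mq\leftarrow q\mid q\in M\}$, for which $K_mK_n=K_{mn}$ follows straight from the definition of composition.
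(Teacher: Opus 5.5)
Your proof is correct and follows the paper's route: Cayley's theorem combined with \prettyref{t:transformation-krom}. You simply make it explicit by realizing the left regular representation as the semiautomaton with $Q=\Sigma=M$ and $\delta(q,m)=mq$; under the paper's convention $\delta_{uv}=\delta_u\circ\delta_v$, left multiplication is indeed the right choice (as your direct check $K_mK_n=K_{mn}$ confirms), so you even get $M\cong\mathfrak K_{\mathfrak A}$, which is a submonoid of the finite monoid $(\mathbb K_Q,\circ,1_Q)$.
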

\begin{proof}
By the Cayley theorem for monoids, every finite monoid is isomorphic to a submonoid of a finite transformation monoid. Furthermore, every finite transformation monoid is the transformation monoid of a finite semiautomaton and is therefore, by \prettyref{t:transformation-krom}, isomorphic to a finite Krom monoid. The result now follows by composition of the two embeddings.
\end{proof}


\section{Conclusion}

In this paper, we investigated the algebraic structure induced by sequential composition on propositional Krom logic programs. We showed that, despite their syntactic simplicity, Krom programs give rise to a remarkably rich collection of algebraic structures, including monoids, seminearrings, quemirings, and Conway seminearrings and omegaseminearrings. Along the way, we established explicit generating sets and canonical decompositions, analyzed the interaction of sequential composition with the facts, proper-rule, Kleene star, and ${}^\omega$-operators, and related Krom programs to directed graphs, transformation monoids, and finite-state automata.

These results provide further evidence that logic programs can be studied fruitfully from an algebraic perspective. In particular, the occurrence of Conway identities and graph-theoretic interpretations suggests close connections with semiring-based automata theory and algebraic graph theory. Moreover, the fact that Krom programs arise as the basic components in the Minimalist Decomposition Theorem for Horn programs~\cite{Antic24-1} provides further motivation for investigating their algebraic structure.

The present work raises several directions for future research. A natural next step is to investigate homomorphisms, congruences, and quotient constructions for logic program algebras and to develop a corresponding representation theory. It would also be interesting to extend the algebraic techniques developed here beyond the Krom fragment and to study analogous structures for richer classes of logic programs and answer set programs.

\bibliographystyle{acm}
\bibliography{/Users/christianantic/Bibdesk/Bibliography,/Users/christianantic/Bibdesk/Publications_J,/Users/christianantic/Bibdesk/Publications_C,/Users/christianantic/Bibdesk/Preprints,/Users/christianantic/Bibdesk/Submitted,/Users/christianantic/Bibdesk/Papers,/Users/christianantic/Bibdesk/Bin}
\if\isdraft1\newpage

\section{Strong equivalence}

\begin{definition} Two Krom programs $K$ and $L$ are \emph{strongly equivalent} \cite{Lifschitz01} --- in symbols, $K\equiv_s L$ --- if $K\cup M\equiv_\omega L\cup M$ holds for every Krom program $M$.
\end{definition}

We have
\begin{align*} 
	(K\cup M)^\omega = (K\cup M)^\ast\emptyset = (K^\ast M)^\ast K^\ast\emptyset = (K^\ast M)^\ast K^\omega,
\end{align*} which gives raise to the following algebraic characterization of strong equivalence in terms of sequential composition and the Kleene star and omega operations:

\begin{theorem}\label{t:G_equiv_s_H} Given two graphs (i.e., proper Krom programs) $G$ and $H$,
\begin{align*} 
	G\equiv_s H \quad\Leftrightarrow\quad (G^\ast M)^\ast\equiv_\omega (H^\ast M)^\ast,\quad\text{for every Krom program $M$}.
\end{align*}
\end{theorem}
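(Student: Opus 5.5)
The plan is to show that, for each fixed Krom program $M$, the two conditions coincide: $(G\cup M)^\omega$ equals the least model of $(G^\ast M)^\ast$. The equivalence in the statement then follows by quantifying over all $M$. I read $K\equiv_\omega L$ as $K^\omega=L^\omega$, i.e.\ equality of least models, in line with the definition of strong equivalence above.

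\emph{Step 1: $K^\omega=K^\ast\emptyset$ for every Krom program $K$.} Since composition distributes over unions from the right (\prettyref{eq:(K_cup_L)M}, extended to arbitrary unions as in the proof of \prettyref{t:Conway}), we get $K^\ast\emptyset=\bigcup_{n\geq 0}K^n\emptyset$. The term for $n=0$ is $1\emptyset=\emptyset$, so this union equals $\bigcup_{n\geq 1}K^n\emptyset=K^\omega$.

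\emph{Step 2: rewrite $(G\cup M)^\omega$.} By Step 1 and the sum-star equation \prettyref{eq:(K_cup_L)^ast},
\begin{align*}
	(G\cup M)^\omega=(G\cup M)^\ast\emptyset=(G^\ast M)^\ast G^\ast\emptyset .
\end{align*}
Because $G$ is proper, $G^\ast=p(G)^\ast$ is proper by \prettyref{eq:p(K^ast)}. Hence $G^\ast\emptyset=f(G^\ast)=\emptyset$ by \prettyref{eq:f(K)=K0} (equivalently, $G^\omega=p(G)^\ast f(G)=\emptyset$ by \prettyref{eq:K^omega}). This gives $(G\cup M)^\omega=(G^\ast M)^\ast\emptyset$.

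\emph{Step 3: identify this with the least model of $L:=(G^\ast M)^\ast$.} By Step 1, $L^\omega=L^\ast\emptyset$. The Kleene star is idempotent, $L^\ast=L$, as already used in the paper: $L$ contains $1$ and is closed under composition, so $L^n=L$ for all $n\geq 1$. Therefore $L^\omega=L\emptyset=(G^\ast M)^\ast\emptyset=(G\cup M)^\omega$. The same computation applies with $H$ in place of $G$. So $(G\cup M)^\omega=(H\cup M)^\omega$ holds if and only if $((G^\ast M)^\ast)^\omega=((H^\ast M)^\ast)^\omega$, and quantifying over $M$ yields the theorem.

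The main point needing care is Step 2. The sum-star equation and right distributivity are used for infinite unions, which \prettyref{t:Conway} justifies via the word decomposition. The properness of $G$ is essential to kill the trailing factor $G^\ast\emptyset$. Without it one would instead get $(G^\ast M)^\ast G^\omega$, and the characterization would have to carry the extra $G^\omega$ term.
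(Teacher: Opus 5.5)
Your proposal is correct and follows essentially the same route as the paper: rewrite $(G\cup M)^\omega=(G\cup M)^\ast\emptyset=(G^\ast M)^\ast G^\ast\emptyset$ via the sum-star equation, use properness of $G$ to get $G^\ast\emptyset=G^\omega=\emptyset$, and identify $(G^\ast M)^\ast\emptyset$ with the least model of $(G^\ast M)^\ast$. You also supply explicit justifications for $K^\omega=K^\ast\emptyset$ and for the last step (idempotency of the star, so $L^\omega=L\emptyset$), which the paper's proof cites only with blank references.
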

\begin{proof} We have
\begin{align*} 
	G\equiv_s H 
		\quad& \stackrel{\ref{eq:}}\Leftrightarrow\quad (G^\ast M)^\ast G^\omega = (H^\ast M)^\ast H^\omega\\
		\quad& \stackrel{\ref{eq:}}\Leftrightarrow\quad (G^\ast M)^\ast \emptyset = (H^\ast M)^\ast \emptyset\\
		\quad& \stackrel{\ref{eq:}}\Leftrightarrow\quad (G^\ast M)^\ast \equiv_\omega (H^\ast M)^\ast
\end{align*} for every Krom program $M$.
\end{proof}

\section{Commutatators}

\begin{definition} We say that $K$ and $L$ \emph{commute} if $KL=LK$.
\end{definition}

The Krom programs $K$ and $L$ commute iff
\begin{align*} 
	f(KL) = f(LK) \quad\text{and}\quad p(KL) = p(LK),
\end{align*} which is equivalent to
\begin{align*} 
	f(K)\cup p(K)f(L) = f(L)\cup p(L)f(K) \quad\text{and}\quad p(K)p(L) = p(L)p(K).
\end{align*}

\begin{lemma} If $K$ commutes with $L$, then $p(K)$ commutes with $p(L)$.
\todo[inline]{}
\end{lemma}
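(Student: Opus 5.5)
The plan is to apply the proper rule operator $p$ to both sides of the commutation identity and use that $p$ is multiplicative. Suppose $KL=LK$. Since $p$ is a function on $\mathbb K$, equal arguments give equal values, so
\begin{align*}
	p(KL)=p(LK).
\end{align*}
By \prettyref{t:p_hom}, and specifically identity \prettyref{eq:p(KL)}, $p$ is an endomorphism of the Krom seminearring with respect to sequential composition. Hence $p(KL)=p(K)p(L)$ and $p(LK)=p(L)p(K)$. Combining these gives $p(K)p(L)=p(L)p(K)$, which is the claim.

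Equivalently, the argument can be phrased through the decomposition stated just before the lemma. Every Krom program is determined by its facts and proper rules, since $M=f(M)\cup p(M)$. So $KL=LK$ holds iff $f(KL)=f(LK)$ and $p(KL)=p(LK)$. The second conjunct, rewritten with \prettyref{eq:p(KL)}, is exactly $p(K)p(L)=p(L)p(K)$. The lemma therefore just discards the factual half of that characterization.

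I expect no real obstacle. The only point needing care is \prettyref{eq:p(KL)}, which \prettyref{t:p_hom} dismisses as immediate. If it must be justified directly, compare $p(KL)$ with the definition of $K\circ L$. The facts of $KL$ come from $f(K)$ and from rules $a\leftarrow b\in K$ with $b\in L$. The proper rules of $KL$ arise only from pairs $a\leftarrow c\in K$, $c\leftarrow b\in L$, which is exactly $p(K)p(L)$, because $p(K)$ has no facts and facts of $L$ contribute only facts. This relies on rule bodies having at most one atom, the Krom property. It is also why the lemma can fail beyond the Krom fragment, as the remark after \prettyref{t:p_hom} notes.
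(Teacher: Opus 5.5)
Your proof is correct and follows the paper's intended argument. The paper's proof environment is only a placeholder, but the paragraph just before the lemma splits $KL=LK$ into $f(KL)=f(LK)$ and $p(KL)=p(LK)$, then uses $p(KL)=p(K)p(L)$ to rewrite the second as $p(K)p(L)=p(L)p(K)$. Your direct check of that identity from the definition of $K\circ L$ is also correct; only your closing aside that the lemma itself fails beyond the Krom fragment is unsupported, since the cited remark concerns the identity, but this does not affect the proof.
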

\begin{proof} 
\todo[inline]{}
\end{proof}

\begin{proposition} For any commuting\todo{es reicht ss-commutation} Krom programs $K$ and $L$ we have
\begin{align*} 
	(K\cup L)^\omega = ... 
\end{align*}
\todo[inline]{}
\end{proposition}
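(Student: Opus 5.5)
I would first fix the right-hand side, which is missing in the draft, as
\begin{align*}
	(K\cup L)^\omega = p(L)^\ast K^\omega\cup p(K)^\ast L^\omega .
\end{align*}
This is the natural simplification of \prettyref{c:(K_cup_L)^omega}: when $K$ and $L$ commute, the twisted stars $p(K^\ast L)^\ast$ and $p(L^\ast K)^\ast$ should collapse to the plain stars of the other component. The proof would go through the proper parts only, since the star formula \prettyref{eq:K^omega} reduces $(K\cup L)^\omega$ to $p(K\cup L)^\ast f(K\cup L)$.

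The first step is the preceding lemma, that $KL=LK$ implies $p(K)p(L)=p(L)p(K)$. This is immediate from \prettyref{t:p_hom}: applying the endomorphism $p$ to both sides of $KL=LK$ and using \prettyref{eq:p(KL)} gives $p(K)p(L)=p(KL)=p(LK)=p(L)p(K)$.

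The second step is the standard fact that commuting elements $x,y$ of the semiring $p(\mathbb K)$ (\prettyref{t:p(mathbb_K)}) satisfy
\begin{align*}
	(x\cup y)^\ast = x^\ast y^\ast = y^\ast x^\ast .
\end{align*}
I would prove it directly from $(x\cup y)^\ast=\bigcup_{n\geq 0}(x\cup y)^n$. Expand each $(x\cup y)^n$ by the distributive laws \prettyref{eq:M(K_cup_L)} and \prettyref{eq:(K_cup_L)M} into a union of words in $x,y$. Commutativity lets every such word be rewritten as $x^iy^j$, and conversely every $x^iy^j$ occurs in $(x\cup y)^{i+j}$. Hence the union equals $\bigcup_{i,j}x^iy^j=x^\ast y^\ast$. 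Alternatively, one can start from the sum-star equation \prettyref{eq:(K_cup_L)^ast}, $(x\cup y)^\ast=(x^\ast y)^\ast x^\ast$, and use $(x^\ast y)^m=x^\ast y^m$ for $m\geq 1$ (again by commutativity and $x^\ast x^\ast=x^\ast$). This yields $(1\cup x^\ast y^+)x^\ast=x^\ast y^\ast$.

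The final step is the computation
\begin{align*}
	(K\cup L)^\omega &\stackrel{\ref{eq:K^omega}}= (p(K)\cup p(L))^\ast(f(K)\cup f(L))
	= p(L)^\ast p(K)^\ast f(K)\cup p(K)^\ast p(L)^\ast f(L)\\
	&\stackrel{\ref{eq:K^omega}}= p(L)^\ast K^\omega\cup p(K)^\ast L^\omega .
\end{align*}
Here the middle equality uses \prettyref{eq:p(K_cup_L)}, \prettyref{eq:f(K_cup_L)}, left distributivity, and the two orderings $y^\ast x^\ast$ and $x^\ast y^\ast$ from the second step. The main obstacle is the second step. One must check carefully that the infinite unions behave well under composition on both sides. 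In $p(\mathbb K)$ this is harmless, since composition of proper programs is relational composition and therefore distributes over arbitrary unions on both sides. That is exactly why I would work with $p(K),p(L)$ rather than with $K,L$ themselves, where right distributivity fails because $K\emptyset\neq\emptyset$. As a remark, the footnote's hint that ss-commutation suffices suggests that only $p(K)p(L)=p(L)p(K)$ is used. The argument above indeed needs only that weaker hypothesis, so I would state the result under it.
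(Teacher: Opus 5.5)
The paper leaves both the right-hand side and the proof of this draft proposition blank, so there is no argument of the paper to compare yours with. Your completed statement $(K\cup L)^\omega=p(L)^\ast K^\omega\cup p(K)^\ast L^\omega$ is correct, and your proof of it is sound. Applying $p$ turns $KL=LK$ into $xy=yx$ for $x=p(K)$, $y=p(L)$. Commuting relations satisfy $(x\cup y)^\ast=x^\ast y^\ast=y^\ast x^\ast$. The final computation via $K^\omega=p(K)^\ast f(K)$ is right. Your second route to the key identity is exactly the commuting specialization of Corollary~\ref{c:(K_cup_L)^omega}: for $xy=yx$ one has $(x^\ast y)^\ast x^\ast=y^\ast x^\ast$, which collapses $p(K^\ast L)^\ast K^\omega$ to $p(L)^\ast K^\omega$.

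One small correction concerns your reason for working in $p(\mathbb K)$. Composition already distributes over every \emph{nonempty} union in its right argument in all of $\mathbb K$, and you use the binary case in your last step. Only the empty case fails, since $K\emptyset\neq\emptyset$. Because $x^\ast,y^\ast$ are nonempty unions, the restriction is harmless but not needed.

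The claim that is wrong is your reading of the footnote. The hypothesis $p(K)p(L)=p(L)p(K)$ is weaker than $KL=LK$. However, it is not what ss-commutation $KL\equiv_{ss}LK$ (i.e.\ $KLI=LKI$ for all interpretations $I$) amounts to; the two conditions are incomparable. For $K=\{a\}$, $L=\{b\}$ the proper parts commute, but $KLI=\{a\}\neq\{b\}=LKI$. For $K=\{a,\ a\leftarrow b\}$, $L=\{a,\ b\leftarrow c\}$ one gets $KL=\{a,\ a\leftarrow c\}$ and $LK=\{a\}$. Hence $KLI=\{a\}=LKI$ for every $I$, yet $p(K)p(L)=\{a\leftarrow c\}\neq\emptyset=p(L)p(K)$. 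So your argument does not yield the ss-version.

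Your formula does still hold under ss-commutation, but this needs a different argument. Since $KLI=f(K)\cup xf(L)\cup xyI$ and $LKI=f(L)\cup yf(K)\cup yxI$, ss-commutation says precisely that $C:=f(K)\cup xf(L)=f(L)\cup yf(K)$ and $C\cup xyI=C\cup yxI$ for all $I$. In other words, $x$ and $y$ commute only modulo $C$. One then shows that $wf(K)$ and $wf(L)$ both lie in $p(L)^\ast K^\omega\cup p(K)^\ast L^\omega$ for every word $w$ in $x,y$. The induction runs on the length of $w$, and then on the number of letter pairs out of the target order: $y$'s before $x$'s for $f(K)$, and $x$'s before $y$'s for $f(L)$. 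Swapping an adjacent out-of-order pair $w=w_1xyw_2$ (or $w_1yxw_2$) costs an extra term $w_1C=w_1f(K)\cup w_1xf(L)$. That term involves only words shorter than $w$, so the induction hypothesis applies.
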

\begin{proof} 
\todo[inline]{}
\end{proof}

\subsection{}

\begin{definition} We say that $K$ and $L$ \emph{$\star$-commute}, for $\star\in \{\omega, ss, u, s\}$, if $KL\equiv_\star LK$.
\end{definition}

\section{Idempotent programs}\label{sec:Idempotents}

In finite monoids, idempotent elements $e$ such that $e^2 = e$ play a structurally central role. They are not peripheral artifacts; they organize the algebraic, combinatorial, and representation-theoretic structure of the monoid and in this section, we thus study idempotents in finite Krom monoids, where we denote the set of all idempotent Krom programs by $\mathbb E$\todo{needed?}.


Our first observation is that computing the omega of an idempotent Krom program is trivial in the following sense:

\begin{proposition} For every idempotent Krom program $K$,
\begin{align*} 
	K^\omega = K \emptyset.
\end{align*}
\end{proposition}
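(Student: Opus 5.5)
The plan is to work directly from the definition $K^\omega := \bigcup_{n\geq 1}(K^n\emptyset)$ and collapse the union using idempotency. First we show by induction on $n$ that an idempotent Krom program satisfies $K^n = K$ for every $n\geq 1$. The base case $n=1$ is trivial. For the step, associativity \prettyref{eq:K(LM)=(KL)M} gives
\begin{align*}
K^{n+1} = K^nK = KK = K.
\end{align*}
Here the middle equality uses the induction hypothesis and the last uses idempotency.

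Substituting this into the definition of the ${}^\omega$-operator yields
\begin{align*}
K^\omega = \bigcup_{n\geq 1}(K^n\emptyset) = \bigcup_{n\geq 1}(K\emptyset) = K\emptyset,
\end{align*}
since the union of a constant family is that constant. Equivalently, by \prettyref{eq:f(K)=K0}, $K^\omega = f(K)$.

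As an alternative route and a sanity check, we can use the characterization $K^\omega = p(K)^\ast f(K)$ from \prettyref{eq:K^omega}. Idempotency of $K$ together with \prettyref{eq:f(KL)} gives
\begin{align*}
f(K) = f(KK) = f(K)\cup p(K)f(K).
\end{align*}
Hence $p(K)f(K)\subseteq f(K)$. Iterating this inclusion gives $p(K)^n f(K)\subseteq f(K)$ for all $n\geq 0$. Then left distributivity \prettyref{eq:(K_cup_L)M}, extended to the countable union defining $p(K)^\ast$, gives
\begin{align*}
p(K)^\ast f(K) = \bigcup_{n\geq 0} p(K)^n f(K) = f(K) = K\emptyset.
\end{align*}
The $n=0$ term $1\,f(K)=f(K)$ ensures the union contains $f(K)$ itself.

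There is no real obstacle here. The only point needing care is the induction $K^n=K$, which relies only on associativity. The first route is the one to write up.
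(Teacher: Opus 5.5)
Your proof is correct and follows essentially the same route as the paper. The paper also collapses the union defining $K^\omega$ via $K^n=K$, only phrased as $\bigcup_{n} f(K^n)=f(K)$ followed by $f(K)=K\emptyset$, rather than working with $K^n\emptyset$ directly. Your indexing from $n\geq 1$ in fact matches the definition of ${}^\omega$ more faithfully than the paper's chain, which starts at $n\geq 0$, where $f(K^0)=f(1)=\emptyset$; this is harmless, since it only adds an empty term.
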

\begin{proof} 
\begin{align*} 
	K^\omega = \bigcup_{n\geq 0}f(K^n) = \bigcup_{n\geq 0} f(K) = f(K) \stackrel{\ref{eq:f(K)=K0}}= K\circ \emptyset.
\end{align*}
\end{proof}

\begin{proposition} Every interpretation is idempotent.
\end{proposition}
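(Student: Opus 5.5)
The claim follows directly from the left-zero law \prettyref{eq:IK=I}: an interpretation $I$ composed with any Krom program on the right returns $I$. Specialising $K:=I$ gives
\begin{align*}
	I\circ I = I,
\end{align*}
which is exactly the idempotency condition $I^2=I$.

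To be safe, I will also check \prettyref{eq:IK=I} against the definition of composition instead of relying on it blindly. An interpretation $I\subseteq A$ is a Krom program with $f(I)=I$ and $p(I)=\emptyset$. The definition of $K\circ L$ has three parts. The facts part contributes $f(I)=I$. The second part collects heads $a$ of rules $a\leftarrow b\in I$ with $b\in L$; since $I$ has no proper rules, it is empty. The third part likewise requires a proper rule $a\leftarrow c\in I$, so it is empty too. Hence $I\circ L=I$ for every $L$, and in particular for $L=I$.

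As a cross-check, the formula $f(KL)=f(K)\cup p(K)f(L)$ from \prettyref{eq:f(KL)} gives $f(II)=I\cup\emptyset\, I=I$. Likewise \prettyref{eq:p(KL)} gives $p(II)=p(I)p(I)=\emptyset\emptyset=\emptyset$, since $\emptyset$ is a left zero by the last axiom of \prettyref{t:seminearring}. Since a Krom program is determined by its facts and proper rules, $II=I$.

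There is no real obstacle here. The only point needing care is that composition is read with $I$ on the left, which is where the left-zero property applies. Idempotency only ever involves $I\circ I$, so this is automatically satisfied.
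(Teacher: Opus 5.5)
Your proof is correct and is the same as the paper's: the paper also obtains $I\circ I=I$ immediately from the fact that interpretations are left zeros ($IK=I$ for every Krom program $K$), specialised to $K:=I$. Your direct check of the left-zero law from the definition of composition, and your cross-check via $f$ and $p$, are both sound but not needed.
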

\begin{proof} A direct consequence of the fact that interpretations are left zeros \prettyref{eq:IK=I}.
\end{proof}

\begin{proposition} A Krom program $K$ is idempotent iff
\begin{align*} 
	p(K)f(K)\subseteq f(K) \quad\text{and}\quad p(K) = p(K)p(K).
\end{align*}
\end{proposition}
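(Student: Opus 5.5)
The plan is to decompose the identity $KK=K$ into its factual and proper components, using that a Krom program is determined by these two parts. Every Krom program $M$ satisfies $M=f(M)\cup p(M)$, and this union is disjoint because facts and proper rules are syntactically different kinds of rules. Hence two Krom programs $M$ and $N$ are equal iff $f(M)=f(N)$ and $p(M)=p(N)$. We apply this with $M=KK$ and $N=K$.

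First, by \prettyref{eq:p(KL)} we have $p(KK)=p(K)p(K)$. So the proper parts agree iff $p(K)=p(K)p(K)$, which is the second condition.

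Second, by \prettyref{eq:f(KL)} we have $f(KK)=f(K)\cup p(K)f(K)$. So the factual parts agree iff $f(K)\cup p(K)f(K)=f(K)$. This is equivalent to $p(K)f(K)\subseteq f(K)$, which is the first condition.

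Combining the two steps gives: $K$ is idempotent iff both conditions hold.

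The only point needing care is the decomposition $M=f(M)\cup p(M)$ and the fact that equality can be checked componentwise. This follows directly from the definitions of $f$ and $p$. The remaining steps are direct appeals to \prettyref{eq:p(KL)} and \prettyref{eq:f(KL)}, so I expect no real obstacle.
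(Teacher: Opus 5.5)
Your proof is correct and is the intended argument: you split $K^2=K$ into its factual and proper parts and apply $f(KL)=f(K)\cup p(K)f(L)$ and $p(KL)=p(K)p(L)$ with $L=K$. The paper's own proof writes down only the factual half, $f(K)=f(K^2)=f(K)\cup p(K)f(K)$, and then breaks off, so your observation that $M=f(M)\cup p(M)$ is a disjoint decomposition determining $M$ (so equality can be checked componentwise, in both directions) is what completes it.
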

\begin{proof} We have
\begin{align*} 
	f(K) = f(K^2)\stackrel{\ref{eq:f(KL)}} = f(K)\cup p(K)f(K)
\end{align*} ...
\todo[inline]{}
\end{proof}

\begin{example} 
\todo[inline]{}
\end{example}

\subsection{}

\begin{definition} A Krom program $K$ is \emph{$\star$-idempotent}, for $\star\in\{\omega, s, u, ss\}$, if $K\equiv_\star K^2$.
\end{definition}

\begin{proposition} Every Krom program is $\omega$-idempotent.
\end{proposition}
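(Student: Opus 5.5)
Here $\omega$-equivalence is read as equality of least models, i.e.\ $K\equiv_\omega L$ iff $K^\omega=L^\omega$. So it suffices to show $(K^2)^\omega=K^\omega$ for every Krom program $K$. The plan is a purely algebraic computation using the explicit formulas for the ${}^\omega$-operator, without going back to $T_K$.

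First, apply \prettyref{eq:(KL)^omega} with $L:=K$. Together with \prettyref{eq:p(KL)} this gives
\begin{align*}
	(K^2)^\omega = (p(K)^2)^\ast f(K)\cup (p(K)^2)^\ast p(K)f(K).
\end{align*}
Write $P:=p(K)$. By right distributivity \prettyref{eq:(K_cup_L)M}, the right-hand side equals $\bigl((P^2)^\ast\cup (P^2)^\ast P\bigr)f(K)$.

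Next, I show $(P^2)^\ast\cup(P^2)^\ast P=P^\ast$. Unfolding the definition of the Kleene star gives $(P^2)^\ast=\bigcup_{n\geq 0}P^{2n}$. Distributing over the infinite union gives $(P^2)^\ast P=\bigcup_{n\geq 0}P^{2n+1}$. The union of the even and odd powers of $P$ is exactly $\bigcup_{m\geq 0}P^m=P^\ast$. Hence $(K^2)^\omega=P^\ast f(K)=p(K)^\ast f(K)$, and by \prettyref{eq:K^omega} this is $K^\omega$.

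The only step needing care is distributing composition over the infinite union $\bigcup_n P^{2n}$ when forming $(P^2)^\ast P$. This is right distributivity, $(\bigcup_i L_i)M=\bigcup_i L_iM$. It holds for arbitrary unions directly from the definition of $\circ$, since each rule of a composite $L\circ M$ arises from a single rule of $L$. The paper already uses the same infinite distributivity in proving \prettyref{eq:K^omega}, so I would invoke it in the same way. Everything else is a routine rewriting of the even/odd split of the powers of $P$.
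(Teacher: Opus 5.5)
Your argument is correct. Instantiating the formula for $(KL)^\omega$ at $L:=K$ and using $p(K^2)=p(K)^2$ gives $(K^2)^\omega=(P^2)^\ast f(K)\cup(P^2)^\ast Pf(K)$. Right distributivity over arbitrary unions holds for the reason you give: every element of $L\circ M$ stems from a single rule of $L$. Hence $(P^2)^\ast\cup(P^2)^\ast P=\bigcup_{n\geq 0}P^{2n}\cup\bigcup_{n\geq 0}P^{2n+1}=P^\ast$, and $K^\omega=p(K)^\ast f(K)$ finishes the proof. There is nothing in the paper to compare against, because its proof of this proposition is an empty placeholder. Your computation supplies the missing argument.

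A shorter route is available directly from the definitions. By associativity, $(K^2)^\omega=\bigcup_{n\geq 1}K^{2n}\emptyset$. The chain $K^n\emptyset=f(K^n)=\bigcup_{i=0}^{n-1}p(K)^i f(K)$ is increasing in $n$; equivalently, $K^n\emptyset=T_K^n(\emptyset)$ is the iteration of a monotone operator from $\emptyset$. Since $K^m\emptyset\subseteq K^{2m}\emptyset$, the even-indexed sub-union is cofinal and therefore equals $K^\omega$. This avoids the star of $P^2$ and gives $(K^m)^\omega=K^\omega$ for every $m\geq 1$ at no extra cost. Your route works at the level of the closed form $p(K)^\ast f(K)$ and shows why squaring loses nothing: the odd powers of $P$ missing from $p(K^2)^\ast$ come back through the extra facts $p(K)f(K)$ that $K^2$ carries. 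It also extends to $K^m$ by splitting the powers of $P$ by residue modulo $m$.
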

\begin{proof} 
\todo[inline]{}
\end{proof}

A Krom program $K$ is $ss$-idempotent iff
\begin{align*} 
	KI = K(KI)
\end{align*} holds for every interpretation $I$...

\section{Quasi-invertible programs}

\begin{definition} We call $L$ a \emph{quasi-inverse} of $K$ if $K = KLK$, in which case we call $K$ \emph{quasi-invertible}.\footnote{In the semigroup literature, this property is usually called ``regular'' (cf. \cite{Howie03}).}
\end{definition}

\begin{fact} In case $L$ is a quasi-inverse of $K$, the product $KL$ is idempotent, that is,
\begin{align*} 
	(KL)^2 = (KL)(KL) = (KLK)L = KL.
\end{align*}
\end{fact}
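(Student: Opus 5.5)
The statement to be established is the Fact that $KL$ is idempotent whenever $L$ is a quasi-inverse of $K$. The plan is to compute $(KL)^2$ directly, using only associativity of sequential composition \prettyref{eq:K(LM)=(KL)M} from \prettyref{t:seminearring} and the defining identity $K=KLK$ of a quasi-inverse. No properties specific to Krom programs are needed beyond associativity, so the argument is the standard semigroup-theoretic one.

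Concretely, I would proceed in three steps. First, write $(KL)^2=(KL)(KL)$ by definition of the square. Second, use associativity \prettyref{eq:K(LM)=(KL)M}, applied twice, to regroup the four-fold product:
\begin{align*}
(KL)(KL)=((KL)K)L=(KLK)L.
\end{align*}
Third, substitute the hypothesis $KLK=K$ to obtain $(KLK)L=KL$. This yields $(KL)^2=KL$, that is, $KL$ is idempotent.

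The symmetric observation that $LK$ is idempotent follows in the same way, via $(LK)(LK)=L(KLK)=LK$, and I would mention it as a remark. There is no real obstacle here. The only point requiring care is justifying the regrouping: the composition of Krom programs is associative by \prettyref{t:seminearring}, even though the structure is merely a seminearring. Associativity holds for arbitrary, possibly infinite, Krom programs, including those containing facts, so no restriction to proper programs $p(\mathbb K)$ is required.
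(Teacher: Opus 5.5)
Your proposal is correct and is essentially the paper's argument: regroup $(KL)(KL)$ as $(KLK)L$ using associativity of sequential composition, then substitute $KLK=K$. Your side remark that $LK$ is idempotent, via $(LK)(LK)=L(KLK)=LK$, is also correct.
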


We can characterize quasi-invertible programs as follows: 

\begin{proposition} By \prettyref{eq:p(KL)} and \prettyref{eq:f(KL)}, we have
\begin{align*} 
	K = KLK \quad\Leftrightarrow\quad
		f(K) &= f(KLK) = f(K)\cup p(K)f(L)\cup p(K)p(L)f(K),\\
		p(K) &= p(KLK) = p(K)p(L)p(K),
\end{align*} where the first line is equivalent to
\begin{align*} 
	p(K)f(L)\subseteq f(K) \quad\text{and}\quad p(K)p(L)f(K)\subseteq f(K).
\end{align*}
\end{proposition}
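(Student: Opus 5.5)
The plan is to split $K=KLK$ into its factual and proper parts. A Krom program is the disjoint union of its facts and proper rules, $K=f(K)\cup p(K)$, so $K=KLK$ holds if and only if both $f(K)=f(KLK)$ and $p(K)=p(KLK)$ hold. The proper part is settled at once by \prettyref{eq:p(KL)}, which gives $p(KLK)=p(K)p(L)p(K)$.

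For the factual part, apply \prettyref{eq:f(K1-Kn)} with $n=3$ and $K_1=K$, $K_2=L$, $K_3=K$:
\begin{align*}
	f(KLK) = f(K)\cup p(K)f(L)\cup p(KL)f(K).
\end{align*}
Alternatively, apply \prettyref{eq:f(KL)} twice. By \prettyref{eq:p(KL)} we have $p(KL)=p(K)p(L)$, which turns this into the displayed formula $f(K)\cup p(K)f(L)\cup p(K)p(L)f(K)$. Together with the proper part, this proves the stated equivalence.

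Finally, the reformulation of the first line is elementary set theory. An identity $X=X\cup Y\cup Z$ holds if and only if $Y\subseteq X$ and $Z\subseteq X$. Taking $X=f(K)$, $Y=p(K)f(L)$ and $Z=p(K)p(L)f(K)$ yields the two inclusions $p(K)f(L)\subseteq f(K)$ and $p(K)p(L)f(K)\subseteq f(K)$.

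I expect no step to be a real obstacle. The only point that needs care is justifying the splitting step, namely that equality of Krom programs is equivalent to equality of their facts together with equality of their proper rules. This holds because $f(K)$ and $p(K)$ partition $K$ by rule type.
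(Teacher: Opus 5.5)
Your proof is correct and is essentially the paper's argument. The paper justifies the statement only by citing $p(KL)=p(K)p(L)$ and $f(KL)=f(K)\cup p(K)f(L)$, i.e.\ $f((KL)K)=f(KL)\cup p(KL)f(K)=f(K)\cup p(K)f(L)\cup p(K)p(L)f(K)$, which is exactly your $n=3$ instance of \prettyref{eq:f(K1-Kn)}, and your explicit handling of the split $K=f(K)\cup p(K)$ and of $X=X\cup Y\cup Z\Leftrightarrow Y,Z\subseteq X$ fills in steps the paper leaves implicit. (Your alternative of applying \prettyref{eq:f(KL)} as $f(K(LK))$ additionally needs left distributivity \prettyref{eq:M(K_cup_L)}, which the grouping $(KL)K$ avoids.)
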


\begin{proposition} In case $L$ is a quasi-inverse of $K$, we have $$(KL)^2=KL,$$ which means that $KL$ is idempotent and thus
\begin{align*} 
	(KL)^\omega \stackrel{\ref{eq:K^omega=f(K)}}= f(KL) \stackrel{\ref{eq:f(KL)}}= f(K)\cup p(K)f(L).
\end{align*}
\end{proposition}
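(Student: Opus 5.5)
The claim is that if $L$ is a quasi-inverse of $K$, i.e.\ $K=KLK$, then $KL$ is idempotent and its least model is $f(K)\cup p(K)f(L)$. The proof has two steps: show idempotency, then compute the ${}^\omega$-value of an idempotent program.

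\textbf{Step 1: idempotency.} This follows from associativity \prettyref{eq:K(LM)=(KL)M} and the hypothesis $K=KLK$, exactly as in the preceding Fact:
\begin{align*}
(KL)^2=(KL)(KL)=(KLK)L=KL.
\end{align*}

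\textbf{Step 2: the ${}^\omega$-value of an idempotent program.} We show that $M^\omega=f(M)$ whenever $M^2=M$, and then take $M:=KL$. By induction, $M^n=M$ for every $n\geq 1$. Hence, from the definition of ${}^\omega$ and \prettyref{eq:f(K)=K0},
\begin{align*}
M^\omega=\bigcup_{n\geq 1}M^n\emptyset=M\emptyset=f(M).
\end{align*}
This avoids the sum over $n\geq 0$ that appears in the idempotent-omega proposition; with $n\geq 0$ the term $M^0=1$ would contribute $1\emptyset=\emptyset$, so that version also works. Alternatively, one can use \prettyref{eq:K^omega}, $M^\omega=p(M)^\ast f(M)$. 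Idempotency gives $p(M)p(M)=p(M)$ by \prettyref{eq:p(KL)}, and comparing facts in $M^2=M$ via \prettyref{eq:f(KL)} gives $p(M)f(M)\subseteq f(M)$. Then every term $p(M)^n f(M)$ with $n\geq 1$ lies inside $f(M)$, so the union collapses to $f(M)$.

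\textbf{Step 3: conclude.} Applying Step 2 to $M=KL$ and then \prettyref{eq:f(KL)} gives
\begin{align*}
(KL)^\omega=f(KL)=f(K)\cup p(K)f(L).
\end{align*}

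The only point needing care is in Step 2. The label \texttt{eq:K\^{}omega=f(K)} cited in the statement does not appear among the stated results, so the identity $M^\omega=f(M)$ for idempotent $M$ has to be proved directly as above, not cited. Everything else is direct substitution.
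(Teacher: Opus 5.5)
Your proof is correct and follows the paper's route. You get idempotency of $KL$ from associativity and $K=KLK$ (the preceding Fact), then $M^\omega=M\emptyset=f(M)$ for idempotent $M$ (the paper's proposition on idempotent programs, which the broken label presumably refers to), and finally $f(KL)=f(K)\cup p(K)f(L)$. Your direct derivation of $M^\omega=f(M)$ from the definition with $n\geq 1$ is slightly cleaner than the paper's version, which indexes from $n\geq 0$ and tacitly relies on $f(M^0)=f(1)=\emptyset$.
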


\begin{proposition} For any Krom programs $K$ and $L$, if $L$ is a quasi-inverse\todo{es reicht ss-quasi-inverse} of $K$, then
\begin{align*} 
	(K\cup L)^\omega = ...
\end{align*}
\todo[inline]{}
\end{proposition}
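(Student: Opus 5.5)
Since the statement is left open (the right-hand side is ``$\ldots$''), I would first derive an explicit formula and then prove it. I would start from \prettyref{c:(K_cup_L)^omega}, or equivalently from \prettyref{eq:K^omega} applied to $K\cup L$. Writing $P:=p(K)$ and $Q:=p(L)$, this gives
\begin{align*}
	(K\cup L)^\omega=(P\cup Q)^\ast f(K)\cup(P\cup Q)^\ast f(L).
\end{align*}
By the sum-star equation \prettyref{eq:(K_cup_L)^ast} we may rewrite $(P\cup Q)^\ast=(Q^\ast P)^\ast Q^\ast$. This reduces the problem to understanding which words over $\{P,Q\}$ contribute new atoms when applied to $f(K)$ and to $f(L)$.

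Next I would bring in the quasi-inverse hypothesis $K=KLK$, which by the characterization proposition above amounts to three conditions:
\begin{align*}
	PQP=P,\qquad Pf(L)\subseteq f(K),\qquad PQf(K)\subseteq f(K).
\end{align*}
I would also use that $KL$ is idempotent, so that $(KL)^\omega=f(K)\cup Pf(L)=f(K)$. The aim is a normal form for words in $P$ and $Q$ modulo $PQP=P$. The last two inclusions say that $f(K)$ is absorbing for every word that begins with $P$ and is followed by some $Q$, and that $P$ sends $f(L)$ into $f(K)$.

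With these reductions, I would aim to prove the candidate identity
\begin{align*}
	(K\cup L)^\omega = Q^\ast P^\ast f(K)\cup L^\omega
	= p(L)^\ast K^\omega\cup L^\omega .
\end{align*}
The inclusion ``$\supseteq$'' is immediate from monotonicity, since both pieces are contributions of words over $\{P,Q\}$. For ``$\subseteq$'', I would proceed by induction on the word length. Any word $w$ applied to $f(L)$ is either a pure power of $Q$, hence lies in $L^\omega$, or it contains a last $P$ preceded by a $Q$-block. In the second case the tail $PQ^jf(L)$ must be pushed into $f(K)$ or $K^\omega$.

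The main obstacle is exactly these mixed blocks $P^aQ^bP^c$ with $a,b,c$ not all equal to one. The relation $PQP=P$ only collapses single letters, and $Qf(K)$ is not a priori contained in $f(K)$. I would first try the graph picture of \prettyref{sec:Conway}: view $P$ and $Q$ as edge relations and show that $PQP=P$ forces every $P$-edge to be witnessed by a $P$-$Q$-$P$ path. The hope is that a path alternating back through $Q$ after leaving $K^\omega$ via $P$ can be shortcut to a path staying in $P^\ast f(K)$ followed by $Q$-steps only.

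If this shortcut fails, I would fall back to the weaker, certainly valid statement
\begin{align*}
	(K\cup L)^\omega=(Q^\ast P)^\ast\bigl(Q^\ast f(K)\cup L^\omega\bigr),
\end{align*}
simplified as far as the inclusions above allow. I would then test small examples, for instance permutation programs with $L=K_\pi^d$, to pin down the correct right-hand side.
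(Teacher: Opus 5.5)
The paper gives no right-hand side and no proof here (both are todo placeholders), so the only question is whether your plan yields a true statement. It does not: the candidate identity $(K\cup L)^\omega=p(L)^\ast K^\omega\cup L^\omega$ is false.

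\textbf{First counterexample.} Take $K=\{a\leftarrow b\}$ and $L=\{d,\ b\leftarrow a,\ b\leftarrow d\}$. Then $LK=\{d,\ b\leftarrow b\}$ and $KLK=\{a\leftarrow b\}=K$, so $L$ is a quasi-inverse of $K$. However, $(K\cup L)^\omega=\{a,b,d\}$, while $p(L)^\ast K^\omega\cup L^\omega=\emptyset\cup\{b,d\}$. The lost atom comes from $PQf(L)=\{a\}$. So your step ``the tail $PQ^jf(L)$ must be pushed into $f(K)$ or $K^\omega$'' already fails for $j=1$: the condition $K=KLK$ constrains $f(L)$ only through $Pf(L)\subseteq f(K)$.

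\textbf{Second counterexample.} The $f(K)$-part fails as well. For $K=\{x,\ a\leftarrow b\}$ and $L=\{b\leftarrow a,\ c\leftarrow x,\ b\leftarrow c\}$ one gets $LK=\{c,\ b\leftarrow b\}$ and $KLK=K$. Yet $(K\cup L)^\omega=\{a,b,c,x\}$, whereas $Q^\ast P^\ast f(K)\cup L^\omega=\{b,c,x\}$, because $a\in PQ^2f(K)$. So $f(K)$ is absorbing only for the words $(PQ)^n$, not for $PQ^j$ with $j\geq 2$. The graph shortcut you hope for does not exist. $PQP=P$ says every $P$-edge factors through a $P$-$Q$-$P$ path. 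It never lets you contract a path $P\,Q^j\,P$ with $j\geq 2$, nor a path $P\,Q$ starting in $f(L)$.

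\textbf{No bounded formula exists.} Chain the gadget: $K=\{a_i\leftarrow b_i\mid 1\leq i\leq n\}$ and $L=\{e,\ b_1\leftarrow e\}\cup\{b_i\leftarrow a_i\mid 1\leq i\leq n\}\cup\{c_i\leftarrow a_i,\ b_{i+1}\leftarrow c_i\mid 1\leq i<n\}$. This keeps $KLK=K$, while $a_n\in(K\cup L)^\omega$ is reachable only by words with at least $n$ separate $P$-blocks. Hence no right-hand side using a bounded number of alternations between $p(K)$ and $p(L)$ can be correct, and small examples cannot lead you to a closed formula of your type.

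\textbf{The fallback.} Your fallback $(K\cup L)^\omega=(Q^\ast P)^\ast(Q^\ast f(K)\cup L^\omega)$ is true. But it is just the sum-star equation applied to $p(K\cup L)^\ast f(K\cup L)$ and holds for all $K,L$, so it says nothing about quasi-inverses. What the hypothesis actually gives is a word-rewriting reduction:
\begin{itemize}
\item $(P\cup Q)^\ast$ may be restricted to words with no factor $PQP$;
\item a word ending in $PQ$ applied to $f(K)$ may be shortened;
\item a word ending in $P$ applied to $f(L)$ may be replaced by the shorter word applied to $f(K)$.
\end{itemize}
A genuine proposition here would have to be phrased in such terms and retain an unbounded star over alternations.
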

\begin{proof} 
\todo[inline]{}
\end{proof}

\section{Invertible programs}

\begin{definition} We call a Krom program $K$ \emph{invertible} if there is a program $L$ such that $KL = 1_{h(p(K))}$.
\todo[inline]{was ist mit $LK = 1_{\ldots}$?}
\end{definition}

\begin{fact} Every permutation program is invertible.
\end{fact}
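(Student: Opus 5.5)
The plan is to exhibit the inverse explicitly, using the group isomorphism $\pi\mapsto K_\pi$ from \prettyref{sec:permutations}. Let $K_\pi$ be a permutation program over $A$ for some permutation $\pi$ of $A$. As the witness $L$ in the definition of invertibility, I take the dual program $L:=K_\pi^d=K_{\pi^{-1}}$.

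The first step is to identify the subscript $h(p(K_\pi))$. I read $h$ as the operator returning the set of heads of the rules of a program; it is not defined earlier in the paper. Since $K_\pi=\{\pi(a)\leftarrow a\mid a\in A\}$ consists only of proper rules, we have $p(K_\pi)=K_\pi$. Its set of heads is $\{\pi(a)\mid a\in A\}=\pi(A)$. Surjectivity of $\pi$ gives $\pi(A)=A$, so the target unit program is $1_{h(p(K_\pi))}=1_A$.

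The second step is the computation itself, using the homomorphism identities of the permutation-program theorem:
\begin{align*}
	K_\pi K_{\pi^{-1}} = K_{\pi\pi^{-1}} = K_{id_A} = 1_A = 1_{h(p(K_\pi))}.
\end{align*}
This shows $K_\pi$ is invertible. As a sanity check independent of that theorem, one can verify directly from the definition of composition. Since $K_\pi$ has no facts, $K_\pi\circ K_{\pi^{-1}}$ consists exactly of the rules $a\leftarrow b$ such that $a\leftarrow c\in K_\pi$ and $c\leftarrow b\in K_{\pi^{-1}}$, i.e.\ $a=\pi(c)$ and $c=\pi^{-1}(b)$. Hence $a=b$, and every $a\leftarrow a$ arises, so the composite is $1_A$.

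The only real obstacle is definitional: $h$ has not been introduced, so the argument depends on reading it as ``heads''. Under that reading the key fact is $\pi(A)=A$. The same witness also satisfies $K_{\pi^{-1}}K_\pi=1_A$, so the statement holds even under the two-sided variant of the definition raised in the todo note.
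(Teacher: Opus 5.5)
Your proof is correct and is essentially the argument the paper has in mind. The paper states this Fact without proof, as an immediate consequence of the permutation-program theorem ($K_\pi K_\theta=K_{\pi\theta}$ and $K_{\pi^{-1}}=K_\pi^d$), which gives $K_\pi K_\pi^d=K_{\pi\pi^{-1}}=K_{id_A}=1_A$; your reading of $h$ as the head operator is the one the draft uses elsewhere (e.g.\ $a\in h(K)$, $a\notin f(K)$), so $1_{h(p(K_\pi))}=1_{\pi(A)}=1_A$ as you say.

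One small correction in your direct check: you need both factors to be fact-free, since $f(K_\pi)=\emptyset$ removes the first term of the composition, and $f(K_{\pi^{-1}})=\emptyset$ removes the middle term $\{a\mid a\leftarrow b\in K_\pi,\ b\in K_{\pi^{-1}}\}$.
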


\section{Elevators}\label{sec:elevators}

\begin{definition} 
\begin{align*} 
	E_{(a_0, \ldots, a_n)} := \{a_0\}\cup \{a_{i+1} \leftarrow a_i \mid 1\leq i\leq n - 1\}
\end{align*} $\mathbb{ELEVATORS}$
\todo[inline]{}
\end{definition}

\todo[inline]{Welche Programme lassen sich mit Elevators generieren? Dh wie sieht $\langle \mathbb E\rangle$ aus?}

\section{Congruences}

\begin{definition} Recall that a \emph{congruence} ... 
\todo[inline]{}
\end{definition}

\begin{theorem} Subsumption equivalence is a congruence with respect to sequential composition.
\end{theorem}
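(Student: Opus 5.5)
The plan is to work with the characterization of subsumption equivalence that the paper already uses for permutation programs and for $ss$-idempotence. For Krom programs, $K\equiv_{ss}L$ holds iff
\[
KI=LI\quad\text{for every interpretation } I,
\]
that is, iff $T_K=T_L$ by \cite[Thm.~35]{Antic21-1}. I will take this as the working definition. If subsumption equivalence is instead defined syntactically in \cite{Antic21-1}, the first step is to invoke the result there identifying it with equality of the van Emden--Kowalski operators. Under this characterization, $\equiv_{ss}$ is obviously an equivalence relation, since it is defined by a family of equalities. What remains is compatibility with $\circ$: if $K\equiv_{ss}L$ and $K'\equiv_{ss}L'$, then $KK'\equiv_{ss}LL'$.

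The key preliminary observation is that $K'I$ is again an interpretation for every interpretation $I$. By the definition of composition, $K'I$ consists of $f(K')$ together with the heads $a$ of rules $a\leftarrow b\in K'$ with $b\in I$. The third component of the composition is empty, because $I$ contains no proper rules. Hence $K'I$ contains only facts.

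With this in hand, the main computation is a chain of equalities using associativity \prettyref{eq:K(LM)=(KL)M}:
\[
(KK')I = K(K'I) = K(L'I) = L(L'I) = (LL')I.
\]
The second equality uses $K'\equiv_{ss}L'$. The third uses $K\equiv_{ss}L$, applied to the interpretation $J:=L'I$, which is an interpretation by the observation above. Since $I$ was arbitrary, this yields $KK'\equiv_{ss}LL'$. The argument also gives left and right compatibility separately, by taking $K=L$ or $K'=L'$.

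The only real obstacle is the first step: pinning down the definition of $\equiv_{ss}$, and if necessary importing its characterization via $KI=LI$ for all interpretations $I$. Once that is settled, everything follows formally from associativity and the fact that interpretations are closed under left multiplication by Krom programs. I would also remark that $\equiv_{ss}$ is not a congruence for $\cup$ in the other argument position trivially, but it is one for union too, since $(K\cup M)I=KI\cup MI$ by \prettyref{eq:(K_cup_L)M}. This is a side remark and not needed for the statement.
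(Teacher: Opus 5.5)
Your proof is correct, and there is nothing to import. The paper's own definition of $\equiv_{ss}$ is literally ``$KI=LI$ for every interpretation $I$''. The chain $(KK')I=K(K'I)=K(L'I)=L(L'I)=(LL')I$ goes through using only associativity and the fact that $K'I$ is again an interpretation, since the third component of $K'\circ I$ is empty.

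The paper gives no argument for this statement; its proof is left as a placeholder. The only route it suggests is the later claim that $K\equiv_{ss}L$ iff $K=L$ for Krom programs, which would make the congruence property immediate. That claim is false. For $K=\{a\}$ and $L=\{a,\ a\leftarrow b\}$ we have $KI=\{a\}=LI$ for every interpretation $I$, yet $K\neq L$. The argument given there only handles heads that are not facts.

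So your direct argument is not merely an alternative; it is the one that actually establishes the result. Note also that the two sides need different ingredients:
\begin{itemize}
\item Left compatibility ($K'\equiv_{ss}L'\Rightarrow MK'\equiv_{ss}ML'$) needs only associativity.
\item Right compatibility ($K\equiv_{ss}L\Rightarrow KM\equiv_{ss}LM$) is where closure of interpretations under left composition is used.
\end{itemize}
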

\begin{proof} 
\todo[inline]{}
\end{proof}

\section{Aperiodic programs}\label{sec:aperiodic}

\begin{definition} A Krom program $K$ is \emph{aperiodic} if $K^{n+1} = K^n$, for some $n\geq 1$.
\end{definition}

\begin{example} 
\todo[inline]{}
\end{example}

\begin{definition} We define the \emph{closure} of a Krom program $K$ by
\begin{align*} 
	\bar K := 1\cup K.
\end{align*}
\end{definition}

Notice that the closure of $K$ is $\omega$-equivalent to $K$, that is,
\begin{align*} 
	\bar K\equiv_\omega K.
\end{align*}

\begin{proposition} The closure of every Krom program is aperiodic.
\end{proposition}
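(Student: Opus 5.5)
The plan is to expand the powers of the closure $\bar K = 1\cup K$ into unions of powers of $K$, and then use finiteness of $\mathbb K_A$ to get stabilization. We assume, as in the sections on finite Krom monoids, that the alphabet $A$ is finite. Without this the statement fails: for $A=\{a_0,a_1,\ldots\}$ and $K=\{a_{i+1}\leftarrow a_i\mid i\geq 0\}$, the program $(1\cup K)^n$ contains $a_n\leftarrow a_0$ but $(1\cup K)^{n}$ does not contain $a_{n+1}\leftarrow a_0$, so no two consecutive powers coincide.

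\emph{Step 1 (binomial expansion).} We first show by induction on $n\geq 0$ that
\begin{align*}
	(1\cup K)^n = \bigcup_{i=0}^{n} K^i .
\end{align*}
The base case $n=0$ is $1=K^0$. For the induction step, compute $(1\cup K)^{n+1}=(1\cup K)(1\cup K)^n$ and apply the right distributive law \prettyref{eq:(K_cup_L)M} of \prettyref{t:seminearring}. This gives $(1\cup K)^n\cup K(1\cup K)^n$. Next apply the induction hypothesis together with left distributivity \prettyref{eq:M(K_cup_L)}, which for Krom programs extends to finite unions. This yields
\begin{align*}
	\bigcup_{i=0}^n K^i\cup\bigcup_{i=1}^{n+1}K^i=\bigcup_{i=0}^{n+1}K^i.
\end{align*}
This is the one place where the Krom fragment matters: distributivity from the left fails for general programs, but here both distributive laws are available.

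\emph{Step 2 (stabilization).} By Step 1 the sequence $\bigl((1\cup K)^n\bigr)_{n\geq 1}$ is an ascending chain with respect to $\subseteq$. All of its members lie in $\mathbb K_A$, which is finite: it has at most $2^{|A|+|A|^2}$ elements. A strictly ascending chain in a finite set has bounded length, so there is some $n\geq 1$ with $(1\cup K)^{n+1}=(1\cup K)^n$. This is exactly aperiodicity of $\bar K$.

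If desired, one can also make $n$ explicit using the graph reading of \prettyref{sec:Conway}. A proper rule $a\leftarrow b$ lies in $p(K)^i$ iff there is a walk of length $i$ from $b$ to $a$, and any walk can be shortened to one of length at most $|A|-1$. By \prettyref{eq:f(K1-Kn)}, the facts of $\bigcup_{i\leq n}K^i$ are $\bigcup_{i<n}p(K)^i f(K)$, which stabilize by the same argument. Hence $n=|A|$ suffices.

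The main obstacle is Step 1. Its difficulty is not the calculation itself but making sure that only the seminearring identities of \prettyref{t:seminearring} are used. In particular, the argument must not rely on right annihilation, which fails because $K\emptyset\neq\emptyset$, and it must use the Krom-specific distributivity.
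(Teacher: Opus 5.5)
Your proof is correct and follows the same route as the paper: show that the powers $\bar K^m$ form an ascending chain, then conclude by finiteness of $\mathbb K_A$. The paper also tacitly assumes a finite alphabet, and your infinite-alphabet counterexample shows that assumption is needed. The only real difference is that the paper gets $\bar K^m\subseteq\bar K^{m+1}$ directly from $(1\cup K)^{m+1}=(1\cup K)^m\cup K(1\cup K)^m$, using only right distributivity and $1X=X$. So neither your closed form $\bigcup_{i\le n}K^i$ nor the Krom-specific left distributivity you flag as the main obstacle is actually needed for this statement.
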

\begin{proof} We have
\begin{align*} 
	(1\cup K)^{m+1} = (1\cup K)(1\cup K)^m \stackrel{\ref{eq:(K_cup_L)M}}= (1\cup K)^m\cup K(1\cup K)^m,
\end{align*} which shows
\begin{align*} 
	(1\cup K)^m\subseteq (1\cup K)^{m+1},
\end{align*} equivalent to
\begin{align*} 
	\bar K^m\subseteq \bar K^{m+1},
\end{align*} for all $m\geq 1$. Since the underlying alphabet is finite, there must be some $n\geq 1$ such that $\bar K^n = \bar K^{n+1}$.
\end{proof}

\begin{fact} Every elevator program is aperiodic.
\end{fact}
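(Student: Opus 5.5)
The plan is to split every power $E^m$ of an elevator $E:=E_{(a_0,\ldots,a_n)}$ into its proper part and its facts, and to show that both stabilize. For the proper part, nilpotency of $p(E)$ will force eventual emptiness. For the facts, the iterated-composition formula \prettyref{eq:f(K1-Kn)} will show that the facts of $E^m$ do not grow. Since every Krom program satisfies $K=p(K)\cup f(K)$, it then suffices to find one $m\geq 1$ with $p(E^{m+1})=p(E^m)$ and $f(E^{m+1})=f(E^m)$. Throughout I assume, as the notation of the elevator definition suggests, that the atoms $a_0,\ldots,a_n$ are pairwise distinct.

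\emph{Step 1 (the proper part is nilpotent).} By \prettyref{eq:p(KL)}, $p(E^m)=p(E)^m$. The graph of $p(E)=\{a_{i+1}\leftarrow a_i\mid 1\leq i\leq n-1\}$ is a directed path $a_1\to a_2\to\cdots\to a_n$. An easy induction, using the definition of composition on proper rules (relational composition), gives
\begin{align*}
	p(E)^k=\{a_{i+k}\leftarrow a_i\mid 1\leq i,\ i+k\leq n\}.
\end{align*}
In particular $p(E)^k=\emptyset$ for all $k\geq n$. Hence $p(E^m)=\emptyset$ for every $m\geq n$.

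\emph{Step 2 (the facts are constant).} By \prettyref{eq:f(K1-Kn)} with all $K_i=E$, together with \prettyref{eq:p(KL)},
\begin{align*}
	f(E^m)=\{a_0\}\cup\bigcup_{i=1}^{m-1}p(E)^i\{a_0\}.
\end{align*}
A proper rule fires on the fact $a_0$ only if its body is $a_0$. No rule of $p(E)^i$ has body $a_0$, because all bodies lie among $a_1,\ldots,a_{n-1}$. Therefore $p(E)^i\{a_0\}=\emptyset$, and so $f(E^m)=\{a_0\}$ for all $m\geq 1$.

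\emph{Conclusion.} Combining the two steps, $E^m=\{a_0\}=E^{m+1}$ for every $m\geq\max(n,1)$, so $E$ is aperiodic. The only real obstacle is the index convention in the definition of elevators, which omits the rule $a_1\leftarrow a_0$. If that rule is intended, Step 2 changes: $f(E^m)$ becomes the increasing chain $\{a_0,\ldots,a_{\min(m,n)}\}$. I would handle that variant more robustly. The union formula above shows that $f(E^m)$ is monotone in $m$ and bounded by the finite set $A$, hence eventually constant. Together with Step 1 this again yields $E^{m+1}=E^m$ for large $m$.
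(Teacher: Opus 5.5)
Your proof is correct. The paper records this statement as a bare fact with no proof, so there is no argument to compare against. Your argument splits $E^m=f(E^m)\cup p(E)^m$, uses the iterated facts formula, and notes that $p(E)$ is a directed path and hence nilpotent. This amounts to specialising the general power formula $K^m=f(K)\cup p(K)^m\cup\bigcup_{i=1}^{m-1}p(K)^if(K)$ to elevators, which is the natural route.

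Your standing assumption that $a_0,\ldots,a_n$ are pairwise distinct is necessary, not just convenient. Take $E_{(a_0,a_1,a_2,a_1)}=\{a_0,\ a_2\leftarrow a_1,\ a_1\leftarrow a_2\}$ with $a_0,a_1,a_2$ distinct. Its proper part is a transposition, so $E^m$ alternates between $\{a_0,\ a_2\leftarrow a_1,\ a_1\leftarrow a_2\}$ and $\{a_0,\ a_1\leftarrow a_1,\ a_2\leftarrow a_2\}$, and $E$ is not aperiodic.

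Your aside on the variant containing $a_1\leftarrow a_0$ has two harmless slips. First, there $f(E^m)=\{a_0,\ldots,a_{\min(m-1,n)}\}$, since for $m=1$ only $a_0$ is a fact. Second, the path now has $n+1$ vertices, so Step~1 gives $p(E)^k=\emptyset$ only for $k\geq n+1$. With these corrections $E^m=\{a_0,\ldots,a_n\}$ for all $m\geq n+1$. You also need no finiteness of $A$, because the union $\bigcup_{i=1}^{m-1}p(E)^i\{a_0\}$ stops growing once $p(E)^i=\emptyset$.
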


\begin{fact} The only aperiodic permutation program is the unit program.
\end{fact}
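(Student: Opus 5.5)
The plan is to transport the question into the symmetric group $S_A$ via the isomorphism $\pi\mapsto K_\pi$ from Section~\ref{sec:permutations}, and then use cancellation in a group. There are two directions to check: the unit program is aperiodic, and every aperiodic permutation program equals $1_A$.

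For the first direction, $1_A=K_{id_A}$ is a permutation program by the theorem on permutation programs. It is aperiodic with $n=1$, since $1_A1_A=1_A$ by the identity $K1=K$ of \prettyref{t:seminearring}.

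For the converse, let $K_\pi$ be aperiodic, so $K_\pi^{n+1}=K_\pi^n$ for some $n\geq 1$.

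\begin{itemize}
\item First, by induction on $m$ using $K_\pi K_\theta=K_{\pi\theta}$, I obtain $K_\pi^m=K_{\pi^m}$ for all $m\geq 0$. The hypothesis then reads $K_{\pi^{n+1}}=K_{\pi^n}$.
\item Next, I use that $\pi\mapsto K_\pi$ is injective. This is part of the stated group isomorphism, but it can also be checked directly. For every $a\in A$, the program $K_\sigma$ contains exactly one rule with body $a$, namely $\sigma(a)\leftarrow a$. Hence $K_\sigma$ determines $\sigma$, and $K_\sigma\circ\{a\}=\{\sigma(a)\}$ as in the proof that distinct permutation programs are not subsumption equivalent.
\item This injectivity gives $\pi^{n+1}=\pi^n$ in $S_A$. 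Composing with $\pi^{-n}$ yields $\pi=id_A$, and therefore $K_\pi=K_{id_A}=1_A$.
\end{itemize}

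Alternatively, the cancellation can be done inside $\Pi_A$ without passing to $S_A$. Multiply the equation $K_\pi^{n+1}=K_\pi^n$ by $(K_\pi^d)^n=K_{\pi^{-n}}$ and use associativity \prettyref{eq:K(LM)=(KL)M}. This gives $K_\pi=1_A$ directly.

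I do not expect any real obstacle. The only point needing care is the injectivity of $\pi\mapsto K_\pi$, which I will justify explicitly via the unique-rule-per-body observation rather than simply citing the isomorphism.
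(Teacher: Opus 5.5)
Your proof is correct and complete. It covers both directions, and the cancellation $\pi^{n+1}=\pi^n\Rightarrow\pi=id_A$ (or, directly in $\Pi_A$, multiplying $K_\pi^{n+1}=K_\pi^n$ by $K_{\pi^{-n}}$) is exactly what is needed. The paper records this as a Fact without proof; the evident intended justification is the one you give, namely that $(\Pi_A,\circ,1_A)\cong S_A$ is a group, and in a group $g^{n+1}=g^n$ forces $g$ to be the identity.
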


\section{Nilpotent programs}\label{sec:nilpotent}

\begin{definition} A Krom program $K$ is \emph{nilpotent} if $K^n = \emptyset$, for some $n\geq 1$.
\end{definition}

\begin{proposition} Every nilpotent Krom program is proper.
\end{proposition}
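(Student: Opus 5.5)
We argue by contraposition: we show that a Krom program $K$ with $f(K)\neq\emptyset$ has $f(K^n)\neq\emptyset$ for every $n\geq 1$. Then $K^n\neq\emptyset$ for all $n$, so $K$ cannot be nilpotent. Consequently every nilpotent program has $f(K)=\emptyset$, which means $K=p(K)$ is proper.

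The key step is the formula \prettyref{eq:f(K1-Kn)} for the facts of an iterated composition. Applied with $K_1=\cdots=K_n=K$, it gives
\begin{align*}
	f(K^n) = f(K)\cup\bigcup_{i=1}^{n-1}\bigl(p(K^i)f(K)\bigr)\supseteq f(K).
\end{align*}
Alternatively, one can run an induction on $n$ using \prettyref{eq:f(KL)}. From $f(K^{n+1}) = f(K)\cup p(K)f(K^n)$ we directly get $f(K)\subseteq f(K^{n+1})$, so the induction does not even need the hypothesis.

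With $f(K)\subseteq f(K^n)\subseteq K^n$ in hand, the conclusion is immediate. If $K^n=\emptyset$ for some $n\geq 1$, then $f(K)\subseteq f(K^n)=f(\emptyset)=\emptyset$. Hence $K=p(K)\cup f(K)=p(K)$ is proper.

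There is no real obstacle here. The only point needing care is the case $n=1$: there $K^1=\emptyset$ gives $K=\emptyset$, which is trivially proper, and this case is in any event covered by the same inclusion $f(K)\subseteq f(K^1)$. One could also cite the quemiring identity $K=p(K)\cup K\emptyset$ together with \prettyref{eq:f(K)=K0} to justify the splitting $K=p(K)\cup f(K)$.
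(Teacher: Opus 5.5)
Your proof is correct and is essentially the paper's argument, whose entire proof is that $f(K)\subseteq K^n$ for all $n\geq 1$, so $K^n=\emptyset$ forces $f(K)=\emptyset$. You also justify that inclusion via $f(K^{n+1})=f(K)\cup p(K)f(K^n)\supseteq f(K)$, a step the paper leaves implicit.
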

\begin{proof} A direct consequence of the fact that $f(K)\subseteq K^n$, for all $n\geq 1$.
\end{proof}

\begin{example} Every single-rule Krom program $\{a \leftarrow b\}$, $a\neq b$, is nilpotent since
\begin{align*} 
	\{a \leftarrow b\}\circ \{a \leftarrow b\} = \emptyset.
\end{align*}
\end{example}

\section{Transformation semigroups}

\todo[inline]{Zeige, dass jede t.s. durch eine Krom-HG dargestellt werden kann --- ist das durch \prettyref{sec:automata} subsumiert?}

\section{Index and period}

In this section, we compute the index and period of a Krom program as motivated by the following characterization of the least model semantics:

\begin{theorem} For every Krom program $K$,
\begin{align*} 
	K^\omega = K^{i(K)} \emptyset.
\end{align*}
\end{theorem}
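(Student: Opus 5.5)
The statement concerns a finite Krom program $K$, and $i(K)$ is its index: the least $i\geq 1$ such that $K^{i+m}=K^{i}$ for some period $m\geq 1$. Such an $i$ exists because the monoid $\mathbb K_A$ is finite for finite $A$, so the powers of $K$ eventually cycle. The plan has two steps: prove monotonicity of $n\mapsto K^n\emptyset$, then use periodicity of the powers to show the chain stabilises by exponent $i(K)$.

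\textbf{Step 1 (monotone chain).} By \prettyref{eq:f(K)=K0} we have $K^n\emptyset=f(K^n)$. By \prettyref{eq:f(K1-Kn)} together with \prettyref{eq:p(KL)},
\begin{align*}
	f(K^n)=f(K)\cup\bigcup_{j=1}^{n-1}p(K)^j f(K).
\end{align*}
Hence the sequence $f(K^n)$ is increasing in $n$, and
\begin{align*}
	K^\omega=\bigcup_{n\geq 1}f(K^n).
\end{align*}

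\textbf{Step 2 (periodicity caps the chain).} Write $i:=i(K)$ and let $m$ be a period, so that $K^{i+m}=K^i$. Then $K^{i+rm}=K^i$ for all $r\geq 0$. Given any $n\geq 1$, choose $r$ with $i+rm\geq n$. By Step 1,
\begin{align*}
	f(K^n)\subseteq f(K^{i+rm})=f(K^i)=K^i\emptyset.
\end{align*}
So every term of the union lies in $K^i\emptyset$. Conversely, $K^i\emptyset$ is itself one of the terms, since $i\geq 1$. Therefore $K^\omega=K^{i(K)}\emptyset$.

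The main obstacle is definitional. The statement only makes sense once $i(K)$ is fixed as the index in the monogenic subsemigroup sense, and once finiteness of $A$ is assumed: for infinite $A$ the powers need not cycle and the claim can fail. I would state both explicitly at the start of the proof. If instead $i(K)$ is defined as the stabilisation index of the chain $f(K^n)$, the theorem is immediate from Step 1. With the monoid-index definition, the only care needed is that Step 2 uses monotonicity from Step 1, not just periodicity, to place the earlier terms $n<i$ inside $K^i\emptyset$.
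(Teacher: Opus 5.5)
Your argument is correct and takes the same route as the paper, whose proof just cites $f(K)=K\emptyset$ and the definition $K^\omega=\bigcup_{n\geq 1}K^n\emptyset$. You additionally make explicit what that one-line proof leaves implicit: the chain $f(K^n)$ is increasing (via the formula for $f(K_1\ldots K_n)$), and the powers of $K$ cycle from $i(K)$ on, which needs $i(K)$ to exist, e.g.\ for finite $A$.
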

\begin{proof} A direct consequence of \prettyref{eq:f(K)=K0} and the definition of $K^\omega$ in \prettyref{eq:K^omega}.
\end{proof}

\begin{proposition} For any Krom program $K$ and homomorphism ${}^\bullet$, we have
\begin{align*} 
	i(K)\geq i(K^\bullet).
\end{align*}
\end{proposition}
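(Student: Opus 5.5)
The plan is to argue purely from the definition of the index in the finite monoid $(\mathbb K_A,\circ,1_A)$, using nothing about Krom programs beyond the fact that ${}^\bullet$ preserves composition. I take $i(K)$ to be the index of $K$ in the cyclic subsemigroup $\{K,K^2,K^3,\ldots\}$: the least $m\geq 1$ for which there is some $r\geq 1$ with $K^{m+r}=K^m$. Since $A$ is finite, $\mathbb K_A$ is finite, so by the pigeonhole principle the powers of $K$ eventually repeat and $i(K)$ is well defined. The same holds for $K^\bullet$ in whatever finite monoid is the codomain of ${}^\bullet$. The only property of ${}^\bullet$ I will use is that it is a homomorphism for sequential composition, that is, $(KL)^\bullet=K^\bullet L^\bullet$. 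This includes, for instance, the endomorphism $p$ of \prettyref{t:p_hom}.

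First, I show by induction on $n\geq 1$ that $(K^n)^\bullet=(K^\bullet)^n$. The base case is trivial. For the induction step,
\begin{align*}
	(K^{n+1})^\bullet=(K^nK)^\bullet=(K^n)^\bullet K^\bullet=(K^\bullet)^nK^\bullet=(K^\bullet)^{n+1}.
\end{align*}

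Next, set $m:=i(K)$ and choose $r\geq 1$ with $K^{m+r}=K^m$. Applying ${}^\bullet$ to both sides and using the previous step gives $(K^\bullet)^{m+r}=(K^\bullet)^m$. So $m$ is an exponent at which the powers of $K^\bullet$ repeat with the positive period $r$. Since $i(K^\bullet)$ is the least such exponent, $i(K^\bullet)\leq m=i(K)$, which is the claim.

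I expect no real obstacle here. The only delicate point is the convention for the index, in particular whether exponent $0$ is allowed, since $K^0=1_A$. If the paper lets the index range over $m\geq 0$, the argument needs ${}^\bullet$ to also preserve the unit, $1^\bullet=1$, so that $(K^0)^\bullet=(K^\bullet)^0$. This holds for seminearring homomorphisms such as $p$, by \eqref{eq:p(1)}. I would state this assumption explicitly and otherwise work with exponents $m\geq 1$, where only multiplicativity is needed.
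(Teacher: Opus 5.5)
The paper contains no proof to compare yours against. The proof of this proposition is an empty placeholder, and the index $i(\cdot)$ is never formally defined.

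Your argument is correct under the standard semigroup-theoretic definition of index. That definition is consistent with how the paper uses $i(K)$ in $K^\omega=K^{i(K)}\emptyset$. The argument is also the natural one: multiplicativity gives $(K^n)^\bullet=(K^\bullet)^n$, so any relation $K^{m+r}=K^m$ carries over to $K^\bullet$, and the least repeating exponent can only decrease.

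Two small refinements.

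First, you do not need the codomain of ${}^\bullet$ to be finite. The identity $(K^\bullet)^{m+r}=(K^\bullet)^m$ already shows that $K^\bullet$ has an index, so $i(K^\bullet)$ is well defined. What you do need is that $i(K)$ exists, and that uses finiteness of $A$. Over an infinite alphabet, for example with $K=\{a_{i+1}\leftarrow a_i\mid i\in\mathbb N\}$, all powers of $K$ are distinct. The statement then only makes sense with the convention $i(K)=\infty$, in which case it holds trivially.

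Second, your caveat about exponent $0$ is a genuine restriction. The map $K\mapsto\emptyset$ is multiplicative; it even preserves $\cup$, $\circ$ and $\emptyset$. However, it sends $1$ to $\emptyset$. If indices may be $0$, then $i(1)=0$, while for $A\neq\emptyset$ we get $i(\emptyset)=1$, because $\emptyset^0=1\neq\emptyset=\emptyset^r$ for all $r\geq 1$. So under that convention the proposition really does require ${}^\bullet$ to preserve the unit, exactly as you say.
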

\begin{proof}
\todo[inline]{}
\end{proof}

\section{Homomorphisms}

\todo[inline]{Hier oder fuer eigenes Paper \cite{Antic26-2} aufheben?}

\section{Least model equivalence}

\begin{definition} Two Krom programs $K$ and $L$ are \emph{${}^\omega$-equivalent} if $K^\omega = L^\omega$.
\end{definition}

In order to understand ${}^\omega$-equivalence we first need to understand the computation of $K^\omega$. [todo das wurde oben doch schon gemacht!] We have
\begin{align*} 
	K^\omega = f(K^+) = K^+\emptyset = \left(\bigcup_{n\geq 1}K^n\right)\emptyset = \bigcup_{n\geq 1}(K^n\emptyset) = \bigcup_{n\geq 1}f(K^n).
\end{align*} Let us compute $K^n$. We have
\begin{align*} 
	K^2
		&= (f(K)\cup p(K))(f(K)\cup p(K))\\
		&= f(K)^2\cup p(K)f(K)\cup f(K)p(K)\cup p(K)^2\\
		& \stackrel{\ref{eq:IK=I}}= f(K)\cup p(K)f(K)\cup p(K),
\end{align*} and
\begin{align*} 
	K^3 = K^2K = f(K)\cup p(K)f(K)\cup p(K)^2K = f(K)\cup p(K)f(K)\cup p(K)^2f(K)\cup p(K)^3,
\end{align*} and
\begin{align*} 
	K^4
		&= K^3K\\
		&= f(K)\cup p(K)f(K)\cup p(K)^2f(K)\cup p(K)^3K\\
		&= f(K)\cup p(K)f(K)\cup p(K)^2f(K)\cup p(K)^3f(K)\cup p(K)^4.
\end{align*} A simple proof by induction shows the general formula for arbitrary $n\geq 1$ given by
\begin{align}\label{eq:K^n} 
	K^n = f(K)\cup p(K)^n\cup \bigcup_{i=1}^{n-1} p(K)^i f(K).
\end{align} This implies
\begin{align} 
	K^\omega
		&= f(K^+)\\
		&= f(\bigcup_{n\geq 1}K^n)\\
		&= \bigcup_{n\geq 1}f(f(K)\cup p(K)^n\cup \bigcup_{i=1}^{n-1} p(K)^i f(K))\\
		&= \bigcup_{n\geq 1}\left(f(f(K))\cup f(p(K)^n)\cup\bigcup_{i=1}^{n-1}f(p(K)^i f(K))\right)\\
		&= f(K)\cup \bigcup_{n\geq 1}\bigcup_{i=1}^{n-1}p(K)^i f(K)\\
		\label{eq:K^omega_} &= \bigcup_{n\geq 1}\bigcup_{i=0}^{n-1} p(K)^i f(K)\\
		&= \bigcup_{n\geq 1} p(K)^i f(K)\\
		&= \bigcup_{i = 1}^{i(p(K))} p(K)^i f(K),
\end{align} where the fifth identity follows from $f(f(K))=f(K)$ and $f(p(K)^n)=\emptyset$, the sixth identity follows from $p(K)^0=1$, and the seventh identity holds since...

Hence, we obtain the following algebraic characterization of ${}^\omega$-equivalence: 

\begin{theorem} For any Krom programs $K$ and $L$,
\begin{align*} 
	K\equiv_\omega L \quad\Leftrightarrow\quad \bigcup_{i = 1}^{i(p(K))} p(K)^i f(K) = \bigcup_{i = 1}^{i(p(L))} p(L)^i f(L).
\end{align*}
\end{theorem}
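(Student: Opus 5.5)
By definition, $K\equiv_\omega L$ means $K^\omega=L^\omega$. The plan is therefore to show that, for every Krom program $K$, the least model $K^\omega$ equals the finite union $\bigcup_{i} p(K)^i f(K)$ with $i$ ranging up to the index $i(p(K))$. The theorem then follows at once by comparing these expressions for $K$ and $L$.

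\textbf{Step 1: the infinite union.} I would start from \prettyref{eq:K^omega}, which gives
\[
K^\omega=p(K)^\ast f(K)=\bigcup_{i\geq 0}p(K)^i f(K).
\]
The same expression comes out of the computation \prettyref{eq:K^n} preceding the statement. One point needs care. The term $i=0$ equals $1\circ f(K)=f(K)$ and must be kept. So I would read the lower index of the displayed union as $0$, or equivalently note that $f(K)\subseteq K^\omega$ always holds. If the index really starts at $1$, the claimed equivalence can fail: take $K=\{a\}$ and $L=\emptyset$, so $p(K)=\emptyset$ and both truncated unions are empty, while $K^\omega=\{a\}\neq\emptyset=L^\omega$. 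I would fix the convention $i\ge 0$ explicitly before proceeding.

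\textbf{Step 2: truncation (the main obstacle).} Put $P:=p(K)$ and $S_m:=\bigcup_{i=0}^{m}P^if(K)$. The sets $S_m$ form an increasing chain of subsets of the finite set $A$, so the chain stabilises. The task is to show that it has already stabilised at $m=i(P)$. I would try two arguments.

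\begin{itemize}
\item \emph{Monoid-theoretic.} In the finite monoid $p(\mathbb K_A)$ the powers of $P$ are eventually periodic: $P^{i(P)+r}=P^{i(P)+r+t}$ for all $r\ge0$, where $t$ is the period. Hence every power $P^j$ already occurs among $P^0,\dots,P^{i(P)+t-1}$. This yields truncation at $i(P)+t-1$, which is weaker than what the statement needs.
\item \emph{Graph-theoretic.} By \prettyref{sec:Conway}, $P^jf(K)$ is the set of atoms reachable from a fact of $K$ by a path of length exactly $j$ in the graph of $P$. Any atom reachable at all is reachable along a shortest path, which has length at most $|A|-1$. This gives truncation at $|A|-1$.
\end{itemize}

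To reach exactly $i(P)$ I would argue as follows. For a Boolean matrix (relation) $P$, the union $\bigcup_{j\le i(P)}P^j$ already contains every $P^j$ with $j>i(P)$, because such a power equals $P^{j-kt}$ for a suitable $k$ with $i(P)\le j-kt<i(P)+t$. Showing $P^{j-kt}\subseteq\bigcup_{j'\le i(P)}P^{j'}$ requires a path-shortening argument: every walk of length at least $i(P)$ can be shortened by removing cycles while keeping its endpoints. If this fails for the given definition of $i(\cdot)$, I would instead define the truncation bound as the stabilisation point of the chain $S_m$ (bounded by $|A|-1$), and state the theorem with that bound.

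\textbf{Step 3: conclusion.} Once $K^\omega=S_{i(p(K))}$ holds for every $K$, in particular for $K$ and for $L$, the equivalence follows directly:
\[
K\equiv_\omega L\;\Leftrightarrow\;K^\omega=L^\omega\;\Leftrightarrow\;S_{i(p(K))}(K)=S_{i(p(L))}(L).
\]
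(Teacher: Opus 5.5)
Your Step~1 is correct and matches the paper's own computation, which also reduces $K^\omega$ to $\bigcup_{i\geq 0}p(K)^if(K)$ via the formula for $K^n$. Your counterexample $K=\{a\}$, $L=\emptyset$ correctly shows that the lower index $1$ is wrong. The genuine gap is Step~2, and it cannot be repaired. The claim that $\bigcup_{j\le i(P)}P^j$ contains every power of $P$ is false, and so is the cycle-removal argument you give for it. A walk is only guaranteed to contain a removable cycle once its length is at least $|A|$. The index of $P$ is not a bound on walk lengths at all: a permutation has index at most $1$ under either convention, yet reaching every atom of one of its cycles needs walks of length up to one less than the cycle length.

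Concretely, let $A=\{a_0,a_1,a_2\}$, $P=\{a_1\leftarrow a_0,\ a_2\leftarrow a_1,\ a_0\leftarrow a_2\}$ and $K=P\cup\{a_0\}$. Then $P^3=1$, so $i(P)\le 1$ and $\bigcup_{i=0}^{i(P)}P^i\{a_0\}\subseteq\{a_0,a_1\}$. On the other hand $P^2\{a_0\}=\{a_2\}$ gives $K^\omega=A$. Now let $L$ be the interpretation $\bigcup_{i=0}^{i(P)}P^i\{a_0\}$. Then $p(L)=\emptyset$, and both its truncated union and $L^\omega$ equal $L$. So the two right-hand sides agree but $K^\omega\neq L^\omega$. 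The statement is therefore false even with the lower index changed to $0$. Your fallback of redefining the bound changes the theorem rather than proving this one.

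The paper's derivation breaks at exactly the same point. Its final step from $\bigcup_{i\ge 0}$ to $\bigcup_{i=1}^{i(p(K))}$ is left unjustified (``holds since\ldots''), and it also drops the $i=0$ term. What is true, and what you should state and prove instead, uses one of the following bounds:
\begin{itemize}
\item $i(p(K))+t(p(K))-1$, with $t$ the period, from your first bullet;
\item $|A|-1$, from your second bullet;
\item the index of $K$ itself rather than of $p(K)$.
\end{itemize}
For the last option: by the formula for $f(K_1\ldots K_n)$, $K^n\emptyset=f(K^n)=\bigcup_{i=0}^{n-1}p(K)^if(K)$, which is increasing in $n$. If $K^m=K^{m+t}$ with $m,t\ge 1$, this sequence is also $t$-periodic from $n=m$ on, hence constant there. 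So $K^\omega=K^m\emptyset=\bigcup_{i=0}^{m-1}p(K)^if(K)$. With any of these bounds your Step~3 goes through unchanged.
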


\todo[inline]{wie kann man vorheriges thm anwenden?}

\section{Subsumption equivalence}\label{sec:ss}

\begin{definition}[\cite{Maher88}] Two Krom programs $K$ and $L$ are \emph{subsumption equivalent} --- in symbols, $K\equiv_{ss} L$ --- if $KI = LI$ holds for every interpretation $I$.\footnote{Subsumption equivalence was originally defined in terms of the van Emden Kowalski operator \cite{vanEmden76} $T_P$ of a logic program $P$, for which it is easy to show that we have $T_P(I)=PI$.}
\end{definition}

Interestingly, for Krom programs, subsumption equivalence turns out to be trivial:

\begin{theorem} For any Krom programs $K$ and $L$,
\begin{align} 
	\label{eq:K_equiv_ss_L} K\equiv_{ss} L \quad\Leftrightarrow\quad K = L.
\end{align}
\end{theorem}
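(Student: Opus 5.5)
The direction from right to left is immediate: if $K=L$, then $KI=LI$ for every interpretation $I$. For the converse, I would use that composing with interpretations on the right acts as a \emph{probe} of the program. I would decompose $K=f(K)\cup p(K)$ and recover each part separately from the family $(KI)_{I\subseteq A}$.

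\textbf{Step 1 (facts).} Take $I=\emptyset$. By \prettyref{eq:f(K)=K0}, $K\emptyset=f(K)$ and $L\emptyset=f(L)$. Hence $K\equiv_{ss}L$ immediately gives $f(K)=f(L)$.

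\textbf{Step 2 (proper rules).} Take singleton interpretations $I=\{b\}$, exactly as in the proof that distinct permutation programs are never subsumption equivalent. By the definition of composition,
\begin{align*}
	K\{b\} = f(K)\cup\{a\in A\mid a\leftarrow b\in K\}.
\end{align*}
The same formula holds for $L$. Using $f(K)=f(L)$ from Step 1, the equality $K\{b\}=L\{b\}$ then yields
\begin{align*}
	\{a\mid a\leftarrow b\in K\}\setminus f(K)=\{a\mid a\leftarrow b\in L\}\setminus f(L)
\end{align*}
for every body atom $b$. Ranging over all $b\in A$, this determines every proper rule $a\leftarrow b$ whose head $a$ is not a fact. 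Combined with Step 1, it gives $K=L$ up to the proper rules whose head is already a fact.

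\textbf{Main obstacle.} The remaining difficulty is exactly this residual class: rules $a\leftarrow b$ with $a\in f(K)$. For such a rule, the head $a$ already appears in every $KI$ through $f(K)$, so no interpretation on the right can separate $K$ from $K\setminus\{a\leftarrow b\}$. The same point means Step 2 alone only yields equality modulo these rules.

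To close this gap I would first check whether the paper's intended reading of $\equiv_{ss}$ lets the probe $I$ be more than a plain interpretation. The natural candidate is composition with proper singletons $\{b\leftarrow c\}$. By the definition of composition, $K\{b\leftarrow c\}$ contains $a\leftarrow c$ precisely when $a\leftarrow b\in K$, whether or not $a$ is a fact, so such probes would separate the residual rules as well.

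If the definition is read literally, with $I$ ranging only over interpretations, the argument above shows that these rules are invisible to $\equiv_{ss}$. In that case the theorem as stated fails: for instance, $K=\{a,\ a\leftarrow b\}$ and $L=\{a\}$ satisfy $KI=\{a\}=LI$ for every interpretation $I$, yet $K\neq L$.

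So my plan is as follows. Prove Steps 1 and 2 as above. Then settle the residual rules using the proper-singleton probes, under the reading of $\equiv_{ss}$ that permits them; this is the step where the intended definition matters. Under the literal reading, I would instead state and prove the corrected result that Steps 1 and 2 establish: $K\equiv_{ss}L$ if and only if $f(K)=f(L)$ and $K$ and $L$ agree on all proper rules whose head is not in $f(K)$. In particular $K\equiv_{ss}L\Leftrightarrow K=L$ holds whenever $K$ and $L$ are proper.
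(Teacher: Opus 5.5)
Your Steps 1 and 2 are the paper's argument. The paper also probes with $I=\emptyset$ to get $f(K)=f(L)$ from $K\emptyset=f(K)$. Then, for a rule $a\leftarrow b\in K$ with $a\notin f(K)$, it probes with $I=\{b\}$ to conclude $a\leftarrow b\in L$. It then asserts that this ``immediately implies $K\subseteq L$''. That is exactly the step you decline to take: proper rules of $K$ whose head lies in $f(K)$ are never examined, and, as you observe, no interpretation can detect them.

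Your counterexample is correct under the paper's definition of composition. Take $K=\{a,\ a\leftarrow b\}$, $L=\{a\}$ and any interpretation $I$. The third component of $KI$ is empty because $I$ contains no proper rules, so $KI=\{a\}\cup\{a\mid b\in I\}=\{a\}=LI$, although $K\neq L$. The one-letter instance $K=\{a,\ a\leftarrow a\}$, $L=\{a\}$ works as well. So the statement as written is false, and the gap is in the paper's proof, not in your proposal.

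You should drop the hedge about the intended reading. The paper defines $\equiv_{ss}$ explicitly by $KI=LI$ for all interpretations $I$ (Maher's notion, via $T_P(I)=PI$), so the literal reading is the intended one. Allowing proper probes would in any case trivialize the claim, since $K=K1=\bigcup_{c\in A}K\{c\leftarrow c\}$ by left distributivity.

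Your corrected characterization is right, but Steps 1 and 2 only give its forward direction. The converse, which you only sketch in your ``main obstacle'' paragraph, should rest on the general formula $KI=f(K)\cup\{a\mid a\leftarrow b\in K,\ b\in I\}$. This formula shows that rules with head in $f(K)$ contribute nothing to any $KI$.

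A compact way to phrase the repaired theorem: write $r(K)$ for $K$ with all proper rules $a\leftarrow b$ with $a\in f(K)$ removed. Then $K\equiv_{ss}L$ if and only if $r(K)=r(L)$. In particular, $\equiv_{ss}$ coincides with equality on proper programs, and more generally on programs containing no proper rule whose head is one of their own facts.
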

\begin{proof} Our first observation is that in case $K$ and $L$ are subsumption equivalent, we must have
\begin{align*} 
	K\emptyset = L\emptyset
\end{align*} which by \prettyref{eq:f(K)=K0} is equivalent to
\begin{align*} 
	f(K) = f(L).
\end{align*} 

Now let $a$ be an atom in the head of $K$ which is not a fact of $K$, that is,
\begin{align*} 
	a\in h(K) \quad\text{and}\quad a\not\in f(K).
\end{align*} Then there must be a rule $a\leftarrow b\in K$, for some $b\in A$. Let $I:=\{b\}$. Since $KI = LI$ and $a\in KI$ imply $a\in LI$, we must have $a\leftarrow b\in L$. This immediately implies $K\subseteq L$. 

An analogous argument shows $L\subseteq K$.\todo{check}
\end{proof}

\todo[inline]{Wann gilt $KL\equiv_{ss} LK$?}

\section{Uniform equivalence}\label{sec:uniform_equivalence}

In this section, we are concerned with a form of equivalence often studied in logic programming and database theory:

\begin{definition}[\cite{Maher88,Sagiv88}] Two Krom programs $K$ and $L$ are \emph{uniformly equivalent} --- in symbols, $K\equiv_u L$ --- if $K\cup I\equiv_\omega L\cup I$ holds for every interpretation $I$.
\end{definition}

Uniform equivalence can be rephrased in terms of the omega operation as
\begin{align*} 
	K\equiv_u L \quad\Leftrightarrow\quad (K\cup I)^\omega = (L\cup I)^\omega,
\end{align*} for every interpretation $I$. This means that in order to understand uniform equivalence, we need to understand $(K\cup I)^\omega$.

We first study iterations of $K\cup I$:

\begin{lemma} For every Krom program $K$ and interpretation $I$,
\begin{align} 
	(K\cup I)^n = K^n\cup K^{n-1}I\cup\ldots \cup KI\cup I
\end{align} holds for every $n\geq 1$.
\end{lemma}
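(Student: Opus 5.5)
We prove the identity by induction on $n\geq 1$, using only right distributivity \prettyref{eq:(K_cup_L)M}, left distributivity \prettyref{eq:M(K_cup_L)} (valid for all Krom programs), associativity \prettyref{eq:K(LM)=(KL)M}, and the left-zero property of interpretations \prettyref{eq:IK=I}. For $n=1$ both sides equal $K\cup I$.

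For the induction step, assume
\[
(K\cup I)^n = K^n\cup K^{n-1}I\cup\ldots\cup KI\cup I .
\]
We multiply by $(K\cup I)$ on the left:
\[
(K\cup I)^{n+1} = (K\cup I)(K\cup I)^n = K(K\cup I)^n\cup I(K\cup I)^n
\]
by \prettyref{eq:(K_cup_L)M}. The second term collapses to $I$ by \prettyref{eq:IK=I}, since $I$ is a left zero. For the first term, we distribute $K$ over the finite union by \prettyref{eq:M(K_cup_L)} and use associativity. This gives
\[
K^{n+1}\cup K^nI\cup\ldots\cup K^2I\cup KI .
\]
Adjoining $I$ yields exactly
\[
K^{n+1}\cup K^nI\cup\ldots\cup KI\cup I ,
\]
which is the claim for $n+1$.

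The only delicate point is the choice of multiplication side. Multiplying on the right, $(K\cup I)^n(K\cup I)$, would produce terms such as $K^jIK$, which must then be simplified via $IK=I$. That also works, but it requires left distributivity over a union whose summands are not proper. Left distributivity holds in the Krom seminearring by \prettyref{t:seminearring}, so either side is fine. We choose left multiplication because then the interpretation $I$ absorbs the whole factor $(K\cup I)^n$ at once.

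We also note that left distributivity over an $(n+1)$-fold union follows from the binary case by a trivial induction. We do not expect any step to be a genuine obstacle.
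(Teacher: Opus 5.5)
Your proof is correct and takes the same route as the paper, which also argues by induction on $n$; the paper leaves its induction step unwritten, and your left-multiplication computation (right distributivity, then $I(K\cup I)^n=I$, then left distributivity and associativity) supplies that step correctly. One small correction to your closing remark on the choice of side: left multiplication also uses left distributivity over non-proper summands, namely $K^{n-1}I,\ldots,KI,I$, so it is no more economical than right multiplication; this does not affect the argument, since left distributivity holds for all Krom programs.
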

\begin{proof} By induction on $n\geq 1$. The induction base $n=1$ holds trivially. For the induction step, ... 
\end{proof}

We have the following characterization of uniform equivalence:

\begin{corollary} For any Krom programs $K$ and $L$,
\begin{align*} 
	K\equiv_u L \quad\Leftrightarrow\quad K^\omega\cup K^\ast I = L^\omega\cup L^\ast I,
\end{align*} for every interpretation $I$.
\end{corollary}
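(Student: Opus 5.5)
The plan is to reduce the statement directly to the formula for $(K\cup I)^\omega$ established in \prettyref{eq:(K_cup_I)^omega}. By definition, $K\equiv_u L$ holds iff $(K\cup I)^\omega=(L\cup I)^\omega$ for every interpretation $I$. It therefore suffices to show the identity
\begin{align*}
	(K\cup I)^\omega = K^\omega\cup K^\ast I
\end{align*}
for every Krom program $K$ and interpretation $I$. Applying this identity to both $K$ and $L$ then gives the claimed equivalence immediately, quantifier by quantifier in $I$.

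To prove the identity, I would start from \prettyref{eq:(K_cup_I)^omega}, which gives $(K\cup I)^\omega=K^\omega\cup p(K)^\ast I$. It then remains to show $K^\omega\cup K^\ast I = K^\omega\cup p(K)^\ast I$. The key auxiliary observation is that for any Krom program $M$ and interpretation $I$,
\begin{align*}
	MI = f(M)\cup p(M)I .
\end{align*}
This follows straight from the definition of composition: $I$ contains no proper rules, so the third set in the definition of $M\circ I$ is empty. The second set only involves rules in $p(M)$, and $f(p(M))=\emptyset$. Alternatively, it is the special case $L=I$ of \prettyref{eq:f(KL)}, using that $MI$ consists of facts only.

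Applying this with $M=K^\ast$ gives $K^\ast I=f(K^\ast)\cup p(K^\ast)I$. By \prettyref{eq:f(K^ast)} we have $f(K^\ast)=K^\omega$, and by \prettyref{eq:p(K^ast)} we have $p(K^\ast)=p(K)^\ast$. Hence $K^\ast I = K^\omega\cup p(K)^\ast I$, so $K^\omega\cup K^\ast I=K^\omega\cup p(K)^\ast I=(K\cup I)^\omega$, as required.

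I expect no real obstacle; the only delicate point is the auxiliary identity $MI=f(M)\cup p(M)I$ for infinite unions $M=K^\ast$. That identity is purely rule-wise, so it holds regardless of finiteness. One caveat: the case $I=\emptyset$ must be consistent, and indeed it reduces to $K^\omega=K^\omega\cup f(K^\ast)$, which holds by \prettyref{eq:f(K^ast)}. Finally, since $K^\omega\subseteq K^\ast I$ in fact holds, one may even remark that the condition simplifies to $K^\ast I=L^\ast I$ for all $I$. I would note this simplification but keep the stated form.
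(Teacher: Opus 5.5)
Your proof is correct, but it takes a different route from the paper's. In the paper, the corollary is read off from the preceding iteration lemma $(K\cup I)^n = K^n\cup K^{n-1}I\cup\cdots\cup KI\cup I$. Composing with $\emptyset$ and using $I\emptyset=I$ gives $(K\cup I)^n\emptyset = K^n\emptyset\cup K^{n-1}I\cup\cdots\cup I$. Taking the union over $n\geq 1$ then yields $(K\cup I)^\omega = K^\omega\cup K^\ast I$ straight from the definition of ${}^\omega$, using only distributivity and the fact that interpretations are left zeros. The identity $K^\ast I = K^\omega\cup p(K)^\ast I$ is proved only afterwards, through an explicit formula for $K^n$, to reach the closed form $K^\omega\cup p(K)^\ast I$.

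You run this chain in the opposite direction. You start from the already established closed form $(K\cup I)^\omega = K^\omega\cup p(K)^\ast I$. You then get $K^\ast I = K^\omega\cup p(K)^\ast I$ from the rule-wise identity $MI=f(M)\cup p(M)I$ together with $f(K^\ast)=K^\omega$ and $p(K^\ast)=p(K)^\ast$, which is shorter than the paper's computation via $K^n$.

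What each route buys: the paper's is more elementary and explains why the statement carries the seemingly redundant term $K^\omega$. Yours relies on the $f$/$p$ machinery, but in return it exposes $K^\omega\subseteq K^\ast I$. Hence $(K\cup I)^\omega = K^\ast I$, and $K\equiv_u L$ holds iff $K^\ast I = L^\ast I$ for every interpretation $I$. The paper records that simplified criterion only for graph (proper) programs; your argument shows it holds for all Krom programs.
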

 
This means that, in order to understand uniform equivalence, we need to understand the ${}^\omega$- and ${}^\ast$-operators. 

We first wish to understand $K^\ast I$:

\begin{lemma} For any Krom program $K$ and interpretation $I$, we have
\begin{align} 
	\label{eq:K^astI} K^\ast I = K^\omega\cup p(K)^\ast I.
\end{align}
\end{lemma}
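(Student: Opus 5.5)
Since $K^\ast=\bigcup_{n\geq 0}K^n$ and composition distributes over union from the left only in the form \prettyref{eq:(K_cup_L)M}, i.e.\ $(\bigcup_n K^n)I=\bigcup_n K^nI$, we have $K^\ast I=\bigcup_{n\geq 0}K^nI$. So it suffices to compute $K^nI$ for each $n$ and take the union. The tool is the explicit formula \prettyref{eq:K^n} for powers,
\[
K^n=f(K)\cup p(K)^n\cup\bigcup_{i=1}^{n-1}p(K)^if(K)\qquad(n\geq 1).
\]

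\textbf{Step 1: compute $K^nI$.} Compose $K^n$ with $I$ on the right, using \prettyref{eq:(K_cup_L)M} to distribute. Every term $f(K)$ or $p(K)^if(K)$ is an interpretation, so by \prettyref{eq:IK=I} it is a left zero and stays unchanged after composing with $I$. Hence
\[
K^nI=f(K)\cup\bigcup_{i=1}^{n-1}p(K)^if(K)\cup p(K)^nI
=\bigcup_{i=0}^{n-1}p(K)^if(K)\cup p(K)^nI,
\]
and $K^0I=1I=I=p(K)^0I$.

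\textbf{Step 2: take the union over $n\geq 0$.} The first part gives $\bigcup_{i\geq 0}p(K)^if(K)=p(K)^\ast f(K)$, which equals $K^\omega$ by \prettyref{eq:K^omega}. The second part gives $\bigcup_{n\geq 0}p(K)^nI=p(K)^\ast I$, again by right-distributivity \prettyref{eq:(K_cup_L)M}. Together these yield $K^\ast I=K^\omega\cup p(K)^\ast I$.

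\textbf{Main obstacle.} The delicate point is the justification of \prettyref{eq:K^n}, which the paper only sketches ("a simple proof by induction"). I would verify it directly. The induction step computes $K^{n+1}=K^nK$ and distributes on the right. The interpretation terms are absorbed by \prettyref{eq:IK=I}. The remaining term is $p(K)^nK=p(K)^nf(K)\cup p(K)^{n+1}$, which follows from left distributivity \prettyref{eq:M(K_cup_L)} with $K=f(K)\cup p(K)$.

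As an alternative that avoids \prettyref{eq:K^n} altogether, one can use \prettyref{eq:f(K1-Kn)} together with the decomposition $K^nI=f(K^n)\cup p(K^n)I$. This decomposition holds because $K^nI=(f(K^n)\cup p(K^n))I$, followed by \prettyref{eq:IK=I} and \prettyref{eq:p(KL)}. Finally, one must make sure that the unions over infinitely many $n$ commute with composition. Right distributivity over arbitrary unions holds directly from the definition of $\circ$, since the definition is pointwise in the rules of the left factor.
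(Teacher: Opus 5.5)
Your proof is correct and is essentially the paper's own argument: write $K^\ast I=\bigcup_{n\geq 0}K^nI$ by right distributivity, substitute the power formula $K^n=f(K)\cup p(K)^n\cup\bigcup_{i=1}^{n-1}p(K)^if(K)$, absorb the interpretation terms because interpretations are left zeros, and regroup the result as $K^\omega\cup p(K)^\ast I$. You treat $n=0$ separately, where the power formula does not apply, and you check the induction step of that formula yourself; both make your version slightly more careful than the paper's computation, which applies the formula for all $n\geq 0$.
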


\begin{theorem} 
\begin{align*} 
	(K\cup I)^\omega \stackrel{\ref{eq:}}= K^\omega\cup K^\ast I \stackrel{\ref{eq:K^astI}}= K^\omega\cup p(K)^\ast I.
\end{align*} 
\end{theorem}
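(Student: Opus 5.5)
The plan is to prove the two equalities of the displayed chain separately, working only with the decomposition of a program into facts and proper rules. All needed identities are already available: \prettyref{eq:K^omega}, \prettyref{eq:f(KL)}, \prettyref{eq:p(KL)}, \prettyref{eq:p(K^ast)} and \prettyref{eq:f(K^ast)}.

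For the outer identity $(K\cup I)^\omega = K^\omega\cup p(K)^\ast I$, I apply \prettyref{eq:K^omega} to $K\cup I$. Then $p(K\cup I)=p(K)$, since an interpretation contributes no proper rules, and $f(K\cup I)=f(K)\cup I$ by \prettyref{eq:f(K_cup_L)}. So
\[
(K\cup I)^\omega=p(K)^\ast\bigl(f(K)\cup I\bigr),
\]
and left distributivity \prettyref{eq:M(K_cup_L)} splits this into $p(K)^\ast f(K)\cup p(K)^\ast I = K^\omega\cup p(K)^\ast I$. This is exactly the earlier corollary \prettyref{eq:(K_cup_I)^omega}, which I simply cite.

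The remaining work is the lemma \prettyref{eq:K^astI}, $K^\ast I = K^\omega\cup p(K)^\ast I$, which makes the middle expression $K^\omega\cup K^\ast I$ equal to the others. Since $I$ is an interpretation, $K^\ast I$ consists only of facts, so $K^\ast I=f(K^\ast I)$. I expand this with \prettyref{eq:f(KL)} as $f(K^\ast)\cup p(K^\ast)f(I)$. Then $f(I)=I$, $f(K^\ast)=K^\omega$ by \prettyref{eq:f(K^ast)}, and $p(K^\ast)=p(K)^\ast$ by \prettyref{eq:p(K^ast)}, which gives the lemma.

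Finally, substituting the lemma gives $K^\omega\cup K^\ast I = K^\omega\cup K^\omega\cup p(K)^\ast I = K^\omega\cup p(K)^\ast I$, which closes the chain.

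The only delicate point is applying \prettyref{eq:f(KL)} to the infinite union $K^\ast$. I would justify it by noting that composition with $I$ distributes over arbitrary unions from the right (\prettyref{eq:(K_cup_L)M} extends to arbitrary unions, since rules are treated pointwise), or more simply by directly checking the rule-level definition of $K\circ L$ with $L=I$. Everything else is routine bookkeeping.
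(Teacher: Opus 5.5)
Your proof is correct, but it follows a different route from the paper.

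\textbf{The paper's route.} The paper proves the lemma $K^\ast I = K^\omega\cup p(K)^\ast I$ by direct expansion. It distributes $I$ over $K^\ast=\bigcup_{n\geq 0}K^n$ and substitutes the explicit power formula $K^n = f(K)\cup p(K)^n\cup\bigcup_{i=1}^{n-1}p(K)^i f(K)$. It then uses that interpretations are left zeros and regroups the terms into $K^\omega$ and $p(K)^\ast I$. The first equality $(K\cup I)^\omega = K^\omega\cup K^\ast I$ comes from the companion lemma $(K\cup I)^n = K^n\cup K^{n-1}I\cup\cdots\cup KI\cup I$.

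\textbf{Your route.} You avoid powers altogether. Since $K^\ast I$ is an interpretation, $K^\ast I = f(K^\ast I) = f(K^\ast)\cup p(K^\ast)I$, and the identities $f(K^\ast)=K^\omega$ and $p(K^\ast)=p(K)^\ast$ finish the lemma. You then close the chain through the right-hand side: you get $(K\cup I)^\omega = K^\omega\cup p(K)^\ast I$ from $K^\omega = p(K)^\ast f(K)$, rather than proving the first equality on its own.

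\textbf{Comparison.} Your version is shorter and rests only on identities established in the main text. The paper's computation instead depends on two power formulas whose inductive proofs are only sketched. It also applies the formula for $K^n$, stated for $n\geq 1$, at $n=0$ (harmlessly, since $f(K)$ already appears at $n=1$). In exchange, the paper's route explains at the level of iterates why $(K\cup I)^\omega$ splits as $K^\omega\cup K^\ast I$.

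The ``delicate point'' you flag is not actually delicate. The identity $f(KL)=f(K)\cup p(K)f(L)$ holds rule by rule for every Krom program, and $K^\ast$ is one. The infinite unions were already handled when $f(K^\ast)=K^\omega$ and $p(K^\ast)=p(K)^\ast$ were proved.

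Finally, your lemma shows $K^\omega\subseteq K^\ast I$, so the chain in fact collapses to $(K\cup I)^\omega = K^\ast I$.
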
 
\begin{proof} We compute
\begin{align*} 
	K^\ast I
		&=\left(\bigcup_{n\geq 0}K^n\right)I\\
		&\stackrel{\ref{eq:(K_cup_L)M}}= \bigcup_{n\geq 0}(K^nI)\\
		&\stackrel{\ref{eq:K^n}}= \bigcup_{n\geq 0}\left(f(K)I\cup p(K)^nI\cup\bigcup_{i=1}^{n-1}p(K)^if(K)I\right)\\
		&\stackrel{\ref{eq:IK=I}}= f(K)\cup\bigcup_{n\geq 0}\left(p(K)^nI\cup\bigcup_{i=1}^{n-1}p(K)^if(K)\right)\\
		&= f(K)\cup\bigcup_{n\geq 0}p(K)^nI\cup\bigcup_{n\geq 0}\bigcup_{i=1}^{n-1}p(K)^if(K)\\
		&\stackrel{\ref{eq:K^omega_}}= K^\omega\cup\bigcup_{n\geq 0}p(K)^nI\\ 
		&= K^\omega\cup p(K)^\ast I.
\end{align*} 
\end{proof}

Hence, we have finally arrived at the following algebraic characterization of uniform equivalence of propositional Krom logic programs:

\begin{theorem} For any Krom programs $K$ and $L$,
\begin{align} 
	\label{eq:K_equiv_u_L} K\equiv_u L \quad\Leftrightarrow\quad K^\omega\cup p(K)^\ast I = L^\omega\cup p(L)^\ast I,
\end{align} for every interpretation $I$.
\end{theorem}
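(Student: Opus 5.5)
The plan is to reduce uniform equivalence to the explicit formula for the least model of $K\cup I$. By definition, $K\equiv_u L$ holds iff $(K\cup I)^\omega=(L\cup I)^\omega$ for every interpretation $I$. It therefore suffices to show
\begin{align*}
	(K\cup I)^\omega = K^\omega\cup p(K)^\ast I
\end{align*}
for every Krom program $K$ and interpretation $I$, and likewise for $L$. Substituting this identity into both sides of the defining condition, uniformly in $I$, then gives the claimed equivalence \prettyref{eq:K_equiv_u_L} in both directions at once.

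This identity is exactly \prettyref{eq:(K_cup_I)^omega}, already established earlier in the paper. For self-containedness I would recall its short derivation. Start from \prettyref{eq:K^omega} applied to $K\cup I$, which gives $(K\cup I)^\omega=p(K\cup I)^\ast f(K\cup I)$. Then use \prettyref{eq:p(K_cup_L)} and \prettyref{eq:f(K_cup_L)}. Since $p(I)=\emptyset$ and $f(I)=I$, this yields $p(K)^\ast(f(K)\cup I)$. Left distributivity \prettyref{eq:M(K_cup_L)} splits this into $p(K)^\ast f(K)\cup p(K)^\ast I$. Finally, \prettyref{eq:K^omega} identifies the first term with $K^\omega$.

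Alternatively, one could route the argument through the preceding lemma $K^\ast I=K^\omega\cup p(K)^\ast I$ together with $(K\cup I)^\omega=K^\omega\cup K^\ast I$. However, the direct route via \prettyref{eq:K^omega} avoids the unfinished lemma on $(K\cup I)^n$, so I would prefer it.

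There is no real obstacle. The only point requiring care is the quantifier over $I$. The equivalence must be read as ``$K\equiv_u L$ iff the displayed equality holds for every interpretation $I$,'' and the pointwise identity above makes the two conditions literally the same condition for each fixed $I$. I would also note explicitly that $p(K)\cup p(I)=p(K)$, because interpretations have no proper rules. This is what makes the Kleene star depend only on $p(K)$.
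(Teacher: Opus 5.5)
Your proof is correct and follows the paper's argument: you unfold $K\equiv_u L$ to $(K\cup I)^\omega=(L\cup I)^\omega$ for all interpretations $I$, then substitute the pointwise identity $(K\cup I)^\omega=K^\omega\cup p(K)^\ast I$. The only difference is how that identity is obtained. The paper gets it via $(K\cup I)^\omega=K^\omega\cup K^\ast I$, which rests on the unfinished lemma for $(K\cup I)^n$, together with the computation $K^\ast I=K^\omega\cup p(K)^\ast I$ from the expansion of $K^n$. You instead go through $K^\omega=p(K)^\ast f(K)$ and the homomorphism properties of $p$ and $f$, which is shorter and fully justified by results already proved in the main body.
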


\begin{corollary} For any graph programs $G$ and $H$,
\begin{align*} 
	G\equiv_u H \quad\Leftrightarrow\quad G^\ast I = H^\ast I,
\end{align*} for every interpretation $I$.
\end{corollary}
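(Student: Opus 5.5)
The statement is the corollary for graph programs $G$ and $H$: $G\equiv_u H$ holds iff $G^\ast I = H^\ast I$ for every interpretation $I$. I would specialize the characterization \prettyref{eq:K_equiv_u_L} of uniform equivalence to proper programs. A graph program $G$ has no facts, so $f(G)=\emptyset$ and $p(G)=G$.

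First, by \prettyref{eq:K^omega}, $G^\omega = p(G)^\ast f(G) = G^\ast\emptyset$. Since $G$ is proper, $G^\ast = p(G)^\ast$ is proper by \prettyref{eq:p(K^ast)}. Proper programs right-annihilate $\emptyset$ (\prettyref{t:p(mathbb_K)}), so $G^\omega=\emptyset$, and likewise $H^\omega=\emptyset$.

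Next, since $p(G)^\ast = G^\ast$, the right-hand side of \prettyref{eq:K_equiv_u_L} becomes $\emptyset\cup G^\ast I = \emptyset\cup H^\ast I$, that is, $G^\ast I = H^\ast I$ for every interpretation $I$. Both directions of the corollary then follow immediately from \prettyref{eq:K_equiv_u_L}.

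The only real step is the verification $G^\omega=\emptyset$, which reduces to observing that $G^\ast$ contains no facts. The rest is substitution. If one prefers not to rely on the unfinished lemma \prettyref{eq:K^astI}, one can instead use the corollary $(K\cup I)^\omega = K^\omega\cup p(K)^\ast I$ from \prettyref{sec:omega}, which was already proved. It gives $(G\cup I)^\omega = G^\ast I$ directly, and the definition of $\equiv_u$ then closes the argument.
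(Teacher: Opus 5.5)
Your argument is correct and matches the paper's proof. The paper gets the corollary directly from the characterization $K\equiv_u L \Leftrightarrow K^\omega\cup p(K)^\ast I = L^\omega\cup p(L)^\ast I$ together with $G^\omega=\emptyset$ and $p(G)=G$, which is what you do. Your fallback via the already proved identity $(K\cup I)^\omega = K^\omega\cup p(K)^\ast I$ is also valid, and it is exactly why that characterization holds.
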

\begin{proof} A direct consequence of \prettyref{eq:K_equiv_u_L}, \prettyref{eq:G^omega=0}, and \prettyref{eq:p(G)=G}.
\end{proof}

Hat Corollary eine Bedeutung im graphtheoretischen Sinne?

\section{Directed graphs}\label{sec:graphs}

In the preliminaries, I have mentioned that proper Krom programs can be identified with graphs and in this section we analyze that connection in more detail. Throughout this section, we treat atoms and vertices as synonyms. 

In the remainder of this section, $G$ and $H$ always denote graphs represented as proper Krom programs.

\begin{fact}\label{f:reachability} A vertex $a$ is reachable from a vertex $b$ in $G$ iff $a \leftarrow b\in G^+$.
\end{fact}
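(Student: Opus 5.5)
The plan is to prove, by induction on $n\geq 1$, that $a\leftarrow b\in G^n$ holds iff there is a directed walk of length exactly $n$ from $b$ to $a$ in $G$. Since $G^+=GG^\ast=\bigcup_{n\geq 1}G^n$, this immediately gives Fact~\ref{f:reachability}. Here a vertex $a$ is reachable from $b$ iff there is a walk of length $\geq 1$; any walk can be shortened to a path, so walks and paths may be used interchangeably.

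\emph{Base case} $n=1$. This is the identification of $G$ with its graph: $a\leftarrow b\in G$ iff $(b,a)$ is an edge.

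\emph{Inductive step.} Since $G$ is proper, $f(G)=\emptyset$. The definition of composition then reduces to
\[
G^{n+1}=G\circ G^n=\{a\leftarrow b\mid a\leftarrow c\in G,\; c\leftarrow b\in G^n\}.
\]
The middle component $\{a\mid a\leftarrow c\in G,\ c\in G^n\}$ is empty because $G^n$ is proper; this uses that $p$ is an endomorphism (\prettyref{t:p_hom}), so $p(G^n)=p(G)^n=G^n$. Hence $a\leftarrow b\in G^{n+1}$ iff there is some $c$ with an edge from $c$ to $a$ and, by the induction hypothesis, a walk of length $n$ from $b$ to $c$. Concatenating these is exactly a walk of length $n+1$ from $b$ to $a$, and conversely any such walk splits at its second-to-last vertex.

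\emph{Conclusion.} Take unions over $n\geq 1$. This uses only that rules are elements of sets, so membership in $\bigcup_n G^n$ means membership in some $G^n$.

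There is no real obstacle. The only point requiring care is confirming that no facts arise during composition, which is what the properness argument above handles. One should also stress that $G^+$ excludes $G^0=1$: the trivial rule $a\leftarrow a$ lies in $G^+$ only when there is a genuine cycle through $a$, which matches the convention that reachability here means via a nonempty path.
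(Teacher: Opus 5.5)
Your proof is correct. The paper states this Fact without proof and relies only on its earlier remark that composing proper programs is relational composition of graphs; your induction ($a\leftarrow b\in G^n$ iff there is a walk of length exactly $n$ from $b$ to $a$) is the routine verification of that remark. Your reading of reachability as reachability via a nonempty walk is the one that makes the statement true when $a=b$, since $G^+$ excludes $G^0=1$. The one step you leave unstated, $GG^\ast=\bigcup_{n\geq 1}G^n$, uses left distributivity over an infinite union, which follows elementwise from the definition of $\circ$.
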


\subsection{\texorpdfstring{${}^\omega$-Equivalence}{}}

...

\subsection{Subsumption equivalence}

...

\subsection{Uniform equivalence}

Since graph programs are proper and thus contain no facts, the least model of a graph program $G$ is always empty,
\begin{align*} 
	G^\omega = \emptyset.
\end{align*} 

However, if we add an atom $a$ to $G$, $$G_a := G\cup \{a\},$$ then
\begin{align*} 
	b\in G_a^\omega \quad\Leftrightarrow\quad \text{$b$ is reachable from $a$ in $G$} \quad\xLeftrightarrow{\text{\prettyref{f:reachability}}}\quad b \leftarrow a\in G^+.
\end{align*} More generally speaking, if we define, for an interpretation $I$,
\begin{align*} 
	G_I := G\cup I,
\end{align*} then
\begin{align*} 
	b\in G_I^\omega \quad\Leftrightarrow\quad \text{$b$ is reachable in $G$ from some vertex in $I$}.
\end{align*}

\begin{definition} We say that two graphs $G$ and $H$ are ${}^\omega$-$I$-equivalent if $G_I\equiv_\omega H_I$.
\end{definition}

\begin{remark} Notice that two graphs are uniformly equivalent iff they are ${}^\omega$-$I$-equivalent, for every $I$.
\end{remark}

\todo[inline]{When are two graphs uniformly equivalent? To what algebraic graph property does uniform equivalence correspond?}

\subsection{Strong equivalence}

...

\section{Krohn-Rhodes-like decomposition}\label{sec:Krohn-Rhodes}

... Krohn-Rhodes-like decomposition \cite{Krohn65}:

\begin{conjecture} Every Krom program can be sequentially decomposed into aperiodic and permutation programs.
\end{conjecture}

\begin{example} 
\todo[inline]{}
\end{example}

\todo[inline]{Kann ich auch Start- und Endzustaende mit Krom-Programmen verarbeiten?}

\section{Green's relations}\label{sec:Green}

In this section, we study Green's fundamental $\mathcal{L,R,J}$-relations \cite{Green51} within finite Krom monoids.

\todo[inline]{Fuege Resultate aus \cite{Antic23-3} ein}

\subsection{\texorpdfstring{$\mathcal L$-Relation}{}}

\begin{definition} 
\todo[inline]{}
\end{definition}

\begin{example} 
\todo[inline]{}
\end{example}

The next result shows that Green's $\mathcal L$-relation is trivial on interpretations:

\begin{proposition} All interpretations are $\mathcal L$-equivalent.
\end{proposition}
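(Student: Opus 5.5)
The plan is to use the standard definition of Green's $\mathcal L$-relation in the monoid $(\mathbb K,\circ,1)$: $K\,\mathcal L\,L$ iff $\mathbb K K=\mathbb K L$, i.e. iff each is a left multiple of the other. So there are $M,N\in\mathbb K$ with $K=ML$ and $L=NK$. If $\mathcal L$ is instead taken with respect to the ideal $\mathbb K^1 K$ where the unit is adjoined, nothing changes, because $\mathbb K$ is already a monoid with unit $1$.

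The key step is simply to choose the right multipliers, and the left-zero property \prettyref{eq:IK=I} does all the work. Let $I$ and $J$ be interpretations, viewed as Krom programs consisting only of facts. Take $M:=I$. Then
\[
I\circ J \stackrel{\ref{eq:IK=I}}= I ,
\]
so $I\in\mathbb K J$. Symmetrically, with $N:=J$ we get $J\circ I = J$, so $J\in\mathbb K I$. It follows that
\[
\mathbb K I \subseteq \mathbb K\mathbb K J\subseteq \mathbb K J
\]
and likewise $\mathbb K J\subseteq\mathbb K I$. Hence the two principal left ideals coincide and $I\,\mathcal L\, J$.

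I also want to check that \prettyref{eq:IK=I} really applies when the right factor is an interpretation, and this follows directly from the definition of composition. Since $p(I)=\emptyset$, the composition reduces to $I\circ J=f(I)=I$. This also covers the empty interpretation $\emptyset$: we have $\emptyset\circ J=\emptyset$ by the annihilation axiom $\emptyset K=\emptyset$ of \prettyref{t:seminearring}, and $J\circ\emptyset=f(J)=J$ by \prettyref{eq:f(K)=K0}.

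There is no real obstacle. The only point needing care is fixing which convention for $\mathcal L$ the paper intends: left ideals $SK$ versus right ideals $KS$. Under the right-ideal convention the statement would fail, since $I\circ\mathbb K=\{I\}$ gives $I\mathbb K=\{I\}$, so distinct interpretations would not even be $\mathcal R$-related. I would therefore state explicitly that $\mathcal L$ is defined via the principal left ideals $\mathbb K K$, which is the standard convention.
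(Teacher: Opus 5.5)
Your proof is correct and is essentially the paper's argument: the paper simply cites the left-zero identity $IK=I$, and you spell out how it gives $I=I\circ J\in\mathbb K J$ and $J=J\circ I\in\mathbb K I$, hence $\mathbb K I=\mathbb K J$. Your remark that the claim depends on the left-ideal convention for $\mathcal L$, and would fail for $\mathcal R$ since $I\mathbb K=\{I\}$, is accurate and consistent with how the paper uses $\leqq_{\mathcal L}$ elsewhere.
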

\begin{proof} A direct consequence of the fact that interpretations are left zeros \prettyref{eq:IK=I}.
\end{proof}

\begin{proposition} For any Krom program $K$ and interpretation $I$,
\begin{align*} 
	I\leqq_ \mathcal L K
\end{align*} and
\begin{align*} 
	K\equiv_ \mathcal L I \quad\Leftrightarrow\quad K\in \mathbb I.
\end{align*}
\end{proposition}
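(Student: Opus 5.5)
The $\mathcal L$-preorder on the monoid $(\mathbb K,\circ,1)$ is taken here in the standard sense: $K\leqq_\mathcal L L$ iff $K=ML$ for some $M\in\mathbb K^1=\mathbb K$, equivalently $\mathbb K K\subseteq \mathbb K L$. $K\equiv_\mathcal L L$ means both inequalities hold. The whole argument rests on the left-zero law \prettyref{eq:IK=I}.

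\emph{First claim.} For $I\leqq_\mathcal L K$ I will exhibit a left multiplier: take $M:=I$. Then \prettyref{eq:IK=I} gives $MK=IK=I$. Hence $I\in\mathbb K K$, which is exactly $I\leqq_\mathcal L K$.

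\emph{Second claim, ``$\Leftarrow$''.} Suppose $K\in\mathbb I$, i.e. $K$ is an interpretation. Applying the first claim twice, with the roles of $I$ and $K$ swapped, gives both $I\leqq_\mathcal L K$ and $K\leqq_\mathcal L I$. So $K\equiv_\mathcal L I$. This recovers the earlier proposition that all interpretations are $\mathcal L$-equivalent.

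\emph{Second claim, ``$\Rightarrow$''.} Suppose $K\equiv_\mathcal L I$. In particular $K\leqq_\mathcal L I$, so $K=MI$ for some Krom program $M$. Now use the definition of composition: $MI=f(M)\cup\{a\mid a\leftarrow b\in M,\ b\in I\}\cup\{a\leftarrow b\mid a\leftarrow c\in M,\ c\leftarrow b\in I\}$. The last set is empty because $I$ has no proper rules. Hence $MI$ is a set of facts, so $K\in\mathbb I$. Alternatively, \prettyref{eq:p(KL)} gives $p(K)=p(M)p(I)=p(M)\emptyset=\emptyset$, using that proper programs right-annihilate $\emptyset$ (\prettyref{t:p(mathbb_K)}).

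Both parts are routine. The only place needing any care is the direction $K\leqq_\mathcal L I\Rightarrow K\in\mathbb I$, and even there the work is just checking that composing on the right with a set of facts leaves no proper rules. That check is the $p$-homomorphism computation above.
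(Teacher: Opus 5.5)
Your proof is correct and follows the same route as the paper: the first claim is the left-zero law $IK = I$, and the second reduces to the fact that $K = MI$ for some $M$ holds exactly when $K$ is an interpretation, which you verify directly from the definition of composition (or via $p(MI) = p(M)p(I) = p(M)\emptyset = \emptyset$). Note that the paper's one-line justification states this fact with the roles swapped (``$I = LK$ for some $L$ iff $K$ is an interpretation''), which is false as written since $I = IK$ for every $K$; your version $K = MI$ is the form the argument actually needs.
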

\begin{proof} The first relation is a direct consequence of the fact that interpretations are left zeros \prettyref{eq:IK=I} and the second follows from the fact that $I = LK$, for some Krom program $L$, iff $K$ is an interpretation.
\end{proof}

\begin{proposition} For any Krom program $K$ and interpretation $I$,
\begin{align*} 
	K\cup I\leqq_ \mathcal L K.
\end{align*} Hence,
\begin{align*} 
	K\leqq_ \mathcal L p(K).
\end{align*}
\end{proposition}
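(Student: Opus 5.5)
We read $\leqq_\mathcal L$ in the standard way for the monoid $(\mathbb K,\circ,1)$: $K\leqq_\mathcal L L$ holds iff $K = ML$ for some Krom program $M$. Here $M=1$ is allowed, since $\mathbb K$ is already a monoid. Under this reading, both claims reduce to producing an explicit left multiplier.

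For the first inequality, we use the earlier proposition \prettyref{eq:K_cup_I=I^astK}, which gives $K\cup I = I^\ast K$ for every Krom program $K$ and interpretation $I$. Taking $M := I^\ast = 1\cup I$ (by \prettyref{eq:I^ast}) yields $K\cup I = MK$, and hence $K\cup I\leqq_\mathcal L K$. To keep the argument self-contained, one can also verify this identity directly. By \prettyref{eq:(K_cup_L)M} we have $(1\cup I)K = K\cup IK$, and since interpretations are left zeros \prettyref{eq:IK=I}, $IK = I$.

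For the second inequality, we specialise the first one. We apply it to the proper program $p(K)$ in place of $K$ and to the interpretation $f(K)$ in place of $I$; this is legitimate because the facts of a Krom program form an interpretation. This gives $p(K)\cup f(K)\leqq_\mathcal L p(K)$. It remains to identify $p(K)\cup f(K)$ with $K$. This is the decomposition $K = p(K)\cup K\emptyset$ from the Krom quemiring theorem, combined with $K\emptyset = f(K)$ from \prettyref{eq:f(K)=K0}; it also follows directly from the fact that every rule of $K$ is either a fact or a proper rule. Explicitly, the witness is $K = f(K)^\ast\, p(K) = (1\cup f(K))\,p(K)$.

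None of these steps is computationally hard. The only genuine obstacle is definitional: the paper has not yet fixed the definition of $\leqq_\mathcal L$, as it is still marked as a todo. If the intended convention were the dual one, namely right multiplication $K = LM$, then the witnesses above would not work as stated. In that case I would first check whether a right multiplier exists. For example, $(K\cup I)$ versus $K M$ generally fails when $I\not\subseteq f(K)$, because $f(KM)$ may be too small. This suggests that the left-multiplication convention is the intended one, and I would state it explicitly at the start of the proof.
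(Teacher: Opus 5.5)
Your proof is correct and follows the paper's route. The paper also obtains $K\cup I\leqq_\mathcal L K$ directly from $K\cup I = I^\ast K$ with $I^\ast = 1\cup I$. The second inequality is the specialisation to $p(K)$ and $f(K)$ via $K = p(K)\cup f(K)$, which is exactly your witness $K=(1\cup f(K))\,p(K)$.

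Your left-multiplication reading of $\leqq_\mathcal L$ agrees with the paper's later lemma ($K\leqq_\mathcal L L$ implies $K=ML$), so the closing aside about the dual convention is unnecessary. That aside is also inaccurate as stated: for $K=1$ the right multiplier $1\cup I$ works even though $I\not\subseteq f(K)$.
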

\begin{proof} A direct consequence of \prettyref{eq:K_cup_I=I^astK}.
\end{proof}

\begin{lemma} For any Krom programs $K$ and $L$, if $K\leqq_ \mathcal L L$ then $K = ML$, for some Krom program $M$, and thus
\begin{align*} 
	f(K) &= f(ML) \stackrel{\ref{eq:f(KL)}}= f(M)\cup p(M)f(L)\\
	p(K) &= p(ML) \stackrel{\ref{eq:p(KL)}}= p(M)p(L)\\
	h(K) &= h(ML) \stackrel{\ref{eq:}}\subseteq h(M)\\
	b(K) &= b(ML) \stackrel{\ref{eq:}}\subseteq b(L)
\end{align*} and
\begin{align*} 
	f(M)\subseteq f(K).
\end{align*}
\end{lemma}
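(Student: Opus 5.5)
The plan is to unfold the definition of Green's preorder and then read off each identity from the definition of composition or from results already proved. In the monoid $(\mathbb K,\circ,1)$, $K\leqq_{\mathcal L} L$ means $K\in\mathbb K L$. Since $1$ is a two-sided unit by \prettyref{t:seminearring}, no adjoined identity is needed. This immediately gives a Krom program $M$ with $K=ML$, and it covers the case $K=L$ via $M=1$. Everything else is a statement about the single product $ML$.

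The first two identities come from applying $f$ and $p$ to $K=ML$. The formula $f(K)=f(M)\cup p(M)f(L)$ is exactly \prettyref{eq:f(KL)} with $K$ replaced by $M$. The formula $p(K)=p(M)p(L)$ is \prettyref{eq:p(KL)} from \prettyref{t:p_hom}. The final claim $f(M)\subseteq f(K)$ then follows at once, because $f(M)$ is one of the two sets in the union.

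For the head and body inclusions, I take $h(\cdot)$ to be the set of heads of all rules, including facts, and $b(\cdot)$ to be the set of body atoms of the proper rules. These definitions are not given earlier in the paper, so the proof should state them explicitly. I would then inspect the three parts of the defining formula for $ML$.
\begin{itemize}
\item The facts $f(M)$ lie in $h(M)$.
\item A derived fact $a$ comes from a rule $a\leftarrow b\in M$, so $a\in h(M)$.
\item A derived proper rule $a\leftarrow b$ comes from $a\leftarrow c\in M$ and $c\leftarrow b\in L$. So its head $a$ lies in $h(M)$, and its body $b$ lies in $b(L)$.
\end{itemize}
The first two parts have no body, so only the third part contributes to $b(ML)$. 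This gives $h(K)\subseteq h(M)$ and $b(K)\subseteq b(L)$.

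I expect no real obstacle. The only delicate point is fixing the conventions for $h$ and $b$, in particular whether facts count as heads and whether they contribute bodies. With the conventions above, both inclusions hold by the case analysis.
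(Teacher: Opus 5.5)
Your proof is correct and follows the route the paper indicates. The paper's own proof environment is only an empty placeholder, and the justifications are built into the statement: unfold $K\leqq_{\mathcal L}L$ to $K=ML$, apply $f(KL)=f(K)\cup p(K)f(L)$ and $p(KL)=p(K)p(L)$, and read $f(M)\subseteq f(K)$ off the union. Your case analysis of the three parts of $ML$, with facts counted as heads and contributing no bodies, supplies the justifications for $h(ML)\subseteq h(M)$ and $b(ML)\subseteq b(L)$ that the statement leaves blank, and this convention for $h$ matches how the paper uses it elsewhere.
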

\begin{proof} 
\todo[inline]{}
\end{proof}

\begin{theorem} $K\equiv_ \mathcal L L$ iff 
\todo[inline]{}
\end{theorem}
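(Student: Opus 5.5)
Since the statement is left open, I will first fix the characterization I intend to prove and then prove both directions. For an atom $a$ and a program $K$, write $B_K(a):=\{b\in A\mid a\leftarrow b\in K\}$ for the body set of head $a$ in $p(K)$.

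\textbf{Target.} I expect the right condition to be the following. $K\equiv_\mathcal L L$ holds iff (i) and (ii) below hold, together with the same two conditions with the roles of $K$ and $L$ exchanged.
\begin{itemize}
\item[(i)] For every $a\in A$ with $a\in f(K)$, the set $B_K(a)$ is a union of sets $B_L(c)$ with $c\in A$.
\item[(ii)] For every $a\in A$ with $a\notin f(K)$, the set $B_K(a)$ is a union of sets $B_L(c)$ with $c\notin f(L)$.
\end{itemize}
The empty union is allowed, so $B_K(a)=\emptyset$ never causes an obstruction. In words, the Boolean row spaces of $p(K)$ and $p(L)$ must agree, with the extra restriction (ii) on rows whose head is not a fact.

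\textbf{Step 1: reduce the order to one-sided solvability.} Since $\mathfrak K_A$ is a monoid with unit $1$, $K\leqq_\mathcal L L$ means $K=ML$ for some Krom program $M$. By \prettyref{eq:p(KL)} and \prettyref{eq:f(KL)} this splits into two equations:
\begin{align*}
p(K)=p(M)p(L) \quad\text{and}\quad f(K)=f(M)\cup p(M)f(L).
\end{align*}
The second equation forces $f(M)\subseteq f(K)$. Conversely, we may always choose $f(M):=f(K)$. Hence $K\leqq_\mathcal L L$ iff there is a proper program $P$ with
\begin{align*}
p(K)=Pp(L) \quad\text{and}\quad Pf(L)\subseteq f(K).
\end{align*}
Taking $P$ empty recovers the earlier observation that all interpretations are $\mathcal L$-equivalent, which gives a sanity check.

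\textbf{Step 2: translate into rows.} By the definition of composition, the rules of $Pp(L)$ with head $a$ have body set $\bigcup_{c\in C_a}B_L(c)$, where $C_a:=B_P(a)$. So $p(K)=Pp(L)$ amounts to choosing, for each $a$, a set $C_a$ with $\bigcup_{c\in C_a}B_L(c)=B_K(a)$.

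The condition $Pf(L)\subseteq f(K)$ says that $C_a\cap f(L)=\emptyset$ whenever $a\notin f(K)$. For $a\in f(K)$ there is no constraint, so any cover $C_a$ will do. Since the choices for different heads $a$ are independent, $K\leqq_\mathcal L L$ is equivalent to (i) and (ii). Applying the same argument to $L\leqq_\mathcal L K$ gives the symmetric conditions, and hence the theorem.

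\textbf{Main obstacle.} The delicate point is the independence claim in Step 2, namely that a single $P$ can be assembled from per-head choices. I will justify it by setting $P:=\{a\leftarrow c\mid c\in C_a\}$. Both constraints are rowwise, so this $P$ satisfies them as soon as each $C_a$ does.

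Two further checks are needed. First, that the empty cover is handled correctly when $B_K(a)=\emptyset$. Second, that $f(M)$ cannot help beyond $f(K)$, which holds because $f(M)\subseteq f(ML)$. If time permits, I will restate (i) as equality of the $\cup$-closures of $\{B_K(a)\}_{a\in A}$ and $\{B_L(c)\}_{c\in A}$, giving a graph-theoretic reading in terms of in-neighbourhoods.
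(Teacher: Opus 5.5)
The paper leaves both the right-hand side of this equivalence and its proof as \texttt{todo} placeholders, so there is no argument of the paper's to compare against. The only related material is the unproved lemma just before it, which is the forward half of your Step~1: $K=ML$ yields $p(K)=p(M)p(L)$, $f(K)=f(M)\cup p(M)f(L)$ and $f(M)\subseteq f(K)$.

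Your proposal is correct and completes this.
\begin{itemize}
\item The choice $M:=f(K)\cup P$ gives the converse.
\item Since $B_{Pp(L)}(a)=\bigcup_{c\in B_P(a)}B_L(c)$, and $a\in Pf(L)$ exactly when $B_P(a)\cap f(L)\neq\emptyset$, the relation $K\leqq_{\mathcal L}L$ becomes precisely your conditions (i) and (ii).
\item The ``independence'' you flag as the main obstacle is automatic, because a proper program $P$ is nothing but an arbitrary family $(B_P(a))_{a\in A}$.
\item Your criterion reproduces everything the paper does state in this subsection: interpretations form a single $\mathcal L$-class, $I\leqq_{\mathcal L}K$, $K\cup I\leqq_{\mathcal L}K$, and $K\leqq_{\mathcal L}p(K)$.
\end{itemize}

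Be careful with your closing remark, though. Equality of the $\cup$-closures of $\{B_K(a)\}_{a}$ and $\{B_L(c)\}_{c}$ is only a necessary condition, and (ii) cannot be dropped. Take $K=\{a,\ a\leftarrow b\}$ and $L=p(K)$. The proper parts coincide, yet $L\not\leqq_{\mathcal L}K$. Indeed, $p(M)p(K)=L$ forces $a\leftarrow a\in M$, whence $a\in p(M)f(K)\subseteq f(MK)=f(L)=\emptyset$, a contradiction.

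The exact compact form comes from adjoining a fresh atom $\top$: write each fact $a$ as $a\leftarrow\top$ and add $\top\leftarrow\top$. This embeds the Krom monoid into the monoid of Boolean matrices over $A\cup\{\top\}$. Conditions (i) and (ii) then say exactly that each row of the encoded $K$ is a union of rows of the encoded $L$. Hence $K\equiv_{\mathcal L}L$ iff the two encoded matrices have the same row space, which is the Krom version of the classical description of $\mathcal L$ for binary relations.
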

\begin{proof} 
\todo[inline]{}
\end{proof}

\subsection{\texorpdfstring{$\mathcal R$-Relation}{}}

\begin{definition} 
\todo[inline]{}
\end{definition}

\begin{example} 
\todo[inline]{}
\end{example}

\begin{lemma} For any Krom programs $K$ and $L$, if
\begin{align*} 
	K\equiv_ \mathcal R L,
\end{align*} then
\begin{align*} 
	f(K) &= f(L)\\
	h(K) &= h(L).
\end{align*}
\end{lemma}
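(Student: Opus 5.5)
The plan rests on what $K\equiv_{\mathcal R} L$ means. Since $\mathbb K$ is a monoid with unit $1$, it says that $K$ and $L$ generate the same principal right ideal, i.e. $K=LM$ and $L=KN$ for some Krom programs $M,N$. I will prove each of the two identities as a pair of inclusions, deriving one direction from $K=LM$ and the other, symmetrically, from $L=KN$.

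\emph{Facts.} From $K=LM$ and \prettyref{eq:f(KL)},
\begin{align*}
	f(K)=f(LM)=f(L)\cup p(L)f(M)\supseteq f(L).
\end{align*}
Symmetrically, $L=KN$ gives $f(L)=f(K)\cup p(K)f(N)\supseteq f(K)$. Hence $f(K)=f(L)$. This step is immediate.

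\emph{Heads.} Let $h(K)$ denote the set of head atoms of $K$, that is, the facts of $K$ together with the heads of its proper rules. The key step is to show $h(LM)\subseteq h(L)$ directly from the definition of $\circ$. Every element of $LM$ has one of three forms:
\begin{itemize}
	\item a fact $a\in f(L)$;
	\item a fact $a$ arising from a rule $a\leftarrow b\in L$ with $b\in M$;
	\item a rule $a\leftarrow b$ arising from rules $a\leftarrow c\in L$ and $c\leftarrow b\in M$.
\end{itemize}
In every case the head $a$ is already a head atom of some rule of $L$, so $h(K)=h(LM)\subseteq h(L)$. This is the left-hand analogue of the inclusion $h(ML)\subseteq h(M)$ recorded in the lemma on $\leqq_{\mathcal L}$. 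Symmetrically, $h(L)=h(KN)\subseteq h(K)$, and therefore $h(K)=h(L)$.

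I do not expect a real obstacle. The only point needing care is stating the head inclusion $h(LM)\subseteq h(L)$ explicitly, since the paper only notes the corresponding fact in the $\mathcal L$-lemma. I will verify it by the case analysis above, which exhausts the three clauses of the definition of sequential composition.
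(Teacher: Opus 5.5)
Your proof is correct and complete. The paper's own proof of this lemma is only an empty placeholder, so there is no argument to compare against. Your route is precisely the right-ideal counterpart of the computations the paper records in its $\mathcal L$-lemma, so it fills the gap in the intended way. That route is: write $K=LM$ and $L=KN$, then use $f(L)\subseteq f(L)\cup p(L)f(M)=f(LM)$ together with $h(LM)\subseteq h(L)$.

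Your two inclusions in fact prove the one-sided statement: $K\leqq_{\mathcal R}L$ implies $f(L)\subseteq f(K)$ and $h(K)\subseteq h(L)$. The argument does not need the multipliers to range over all of $\mathbb K$, so it holds in any Krom submonoid. It also holds whether or not $h$ is taken to include facts.
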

\begin{proof} 
\todo[inline]{}
\end{proof}

\begin{proposition} For any Krom programs $K$ and $L$, we have
\begin{align*} 
	K\equiv_ \mathcal R L \quad\Leftrightarrow\quad p(K)\equiv_ \mathcal R p(L).
\end{align*}
\end{proposition}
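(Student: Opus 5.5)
The plan is to work directly with the definition of Green's $\mathcal R$-relation in the Krom monoid $(\mathbb K,\circ,1)$: $K\equiv_{\mathcal R}L$ iff there are Krom programs $M,N$ with $K=LM$ and $L=KN$ (the unit $1$ is available, so no adjoined identity is needed). The only tools I intend to use are that $p$ is an endomorphism (\prettyref{t:p_hom}) and the fact formula \prettyref{eq:f(KL)}.

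\emph{Forward direction.} Suppose $K=LM$ and $L=KN$. Applying \prettyref{eq:p(KL)} gives
\begin{align*}
p(K)=p(L)p(M)\quad\text{and}\quad p(L)=p(K)p(N).
\end{align*}
Hence $p(K)\equiv_{\mathcal R}p(L)$, witnessed by the proper programs $p(M)$ and $p(N)$. This direction is routine.

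\emph{Converse direction.} Assume $p(K)=p(L)M'$ and $p(L)=p(K)N'$. Replacing $M'$ and $N'$ by $p(M')$ and $p(N')$ via \prettyref{eq:p(KL)}, we may take both to be proper. I would then look for a witness of the form $M:=M'\cup J$ with $J$ an interpretation. By \prettyref{eq:M(K_cup_L)}, $LM=LM'\cup LJ$. By \prettyref{eq:f(KL)} and \prettyref{eq:p(KL)}, and since $M'$ is proper, we have $f(LM')=f(L)$. The proper part is $p(LM)=p(L)M'=p(K)$, and the fact part is
\begin{align*}
f(LM)=f(L)\cup p(L)J.
\end{align*}
So $K=LM$ reduces to solving $f(K)=f(L)\cup p(L)J$ for an interpretation $J$. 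Symmetrically, $L=KN$ reduces to $f(L)=f(K)\cup p(K)J'$.

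\emph{Main obstacle.} This fact-matching step is where I expect the argument to break. Since $f(L)\subseteq f(LM)$ for every $M$, the construction needs $f(L)\subseteq f(K)$, and symmetrically $f(K)\subseteq f(L)$. In other words, it needs exactly $f(K)=f(L)$, the conclusion of the preceding lemma. Under that equality one takes $J=J'=\emptyset$, and both $K=LM$ and $L=KN$ follow.

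Without $f(K)=f(L)$ I do not see how to close the argument, and the stated converse appears to be false. Take $K=\emptyset$ and $L=\{a\}$. Then $p(K)=p(L)=\emptyset$, so $p(K)\equiv_{\mathcal R}p(L)$. But $\emptyset\circ N=\emptyset\neq L$ for every $N$, so $K\not\equiv_{\mathcal R}L$.

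So my plan proves the forward direction in full, and proves the converse only when $f(K)=f(L)$ holds. Before committing, I would check whether the intended reading of the statement carries that hypothesis.
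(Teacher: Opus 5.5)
Your analysis is correct, and there is no argument in the paper to compare it with. The statement sits in the unfinished draft section on Green's relations, and its proof there is an empty placeholder.

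Your reading ($K\equiv_{\mathcal R}L$ iff $K=LM$ and $L=KN$ for some $M,N$) is the one the surrounding draft uses. Its $\mathcal L$-results, for instance that all interpretations are $\mathcal L$-equivalent because they are left zeros, only hold under this standard convention. Under it, your counterexample $K=\emptyset$, $L=\{a\}$ is valid, since $\emptyset N=\emptyset$ for every $N$ by the seminearring axioms.

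It is also exactly what the lemma immediately preceding the statement predicts. That lemma says $\mathcal R$-equivalence forces $f(K)=f(L)$, which fails here although $p(K)=p(L)$. So the claimed equivalence is false as stated. Your forward direction, via $p(K)=p(L)p(M)$ and $p(L)=p(K)p(N)$, is right. So is your obstruction argument that $f(L)\subseteq f(LM)$ for every $M$.

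What your argument does establish is the corrected characterization
\[
K\equiv_{\mathcal R}L \quad\Leftrightarrow\quad f(K)=f(L)\ \text{and}\ p(K)\equiv_{\mathcal R}p(L).
\]
The forward direction combines that lemma with your application of $p$. The backward direction is your construction with $J=\emptyset$. If $M'$ is proper and $p(L)M'=p(K)$, then
\[
LM'=f(L)M'\cup p(L)M'=f(L)\cup p(K)=f(K)\cup p(K)=K,
\]
and symmetrically $KN'=L$. Because your witnesses can always be taken proper, this holds whether $\mathcal R$ on $p(K),p(L)$ is computed in $\mathbb K$ or in the semiring $p(\mathbb K)$.
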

\begin{proof} 
\todo[inline]{}
\end{proof}

\begin{theorem} 
\todo[inline]{}
\end{theorem}
\begin{proof} 
\todo[inline]{}
\end{proof}

\subsection{\texorpdfstring{$\mathcal J$-Relation}{}}

\begin{definition} 
\todo[inline]{}
\end{definition}

\begin{example} 
\todo[inline]{}
\end{example}

\begin{theorem} 
\todo[inline]{}
\end{theorem}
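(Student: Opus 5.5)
The theorem immediately above has no mathematical content as it stands: its body consists only of an empty \verb|\todo| placeholder, and the $\mathcal J$-subsection's definition and example above it are placeholders too. There is therefore no assertion about Green's $\mathcal J$-relation on finite Krom monoids to establish. I will not supply a hypothetical claim of my own, since that would be proving a different result from the one stated. Read literally, the statement asserts nothing, and the only honest ``proof'' is the observation that an empty assertion is vacuously true.

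What I can fix is the framework any eventual proof would use, without committing to a particular claim. The tools would be:
\begin{itemize}
\item the decomposition $K=p(K)\cup f(K)$ from the Krom quemiring;
\item the composition formulas $p(KL)=p(K)p(L)$ from \prettyref{eq:p(KL)} and $f(KL)=f(K)\cup p(K)f(L)$ from \prettyref{eq:f(KL)};
\item the left-zero law \prettyref{eq:IK=I} for interpretations.
\end{itemize}
In particular, every equation of the form $K=MLN$, which is the shape underlying any $\mathcal J$-type statement, splits into a proper part, $p(K)=p(M)p(L)p(N)$, and a factual part, $f(K)=f(M)\cup p(M)f(L)\cup p(M)p(L)f(N)$. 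This is the same splitting the paper already uses in its treatment of quasi-invertible programs.

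Once the author states the intended theorem, I would proceed in three steps:
\begin{enumerate}
\item Reduce the proper part to a statement about the idempotent semiring $p(\mathbb K)$ of \prettyref{t:p(mathbb_K)}, that is, about relations on $A$.
\item Handle the factual part through inclusions of the form $p(M)f(L)\subseteq f(K)$.
\item Use the earlier lemmas on $\leqq_{\mathcal L}$ and $\equiv_{\mathcal R}$ to combine the two parts.
\end{enumerate}
I expect the factual part to be the main difficulty, because $f$ is not multiplicative.

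That difficulty only becomes concrete once there is a stated claim. With the statement empty, there is no step at which it can arise.
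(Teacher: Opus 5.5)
Your reading is correct: the paper's own proof of this theorem is likewise only an empty todo placeholder. The identically empty theorem in the $\mathcal R$-subsection is no different, and the whole Green's-relations section sits in the draft-only block that is not compiled, so there is no argument to compare and your conclusion that the statement is vacuous matches the paper exactly. Your preparatory splitting of $K=MLN$ into $p(K)=p(M)p(L)p(N)$ and $f(K)=f(M)\cup p(M)f(L)\cup p(M)p(L)f(N)$ is correct, following from the formula for $f(K_1\ldots K_n)$ and the multiplicativity of $p$, but it goes beyond anything the paper supplies at this point.
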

\begin{proof} 
\todo[inline]{}
\end{proof}

\section{Future work}\label{sec:FW}

\subsection{}

A natural next step is to consider \emph{first-order} Krom logic programs containing predicate, function, and constant symbols. This includes some well-known programs from the literature like the one for the addition of numerals and appending of lists. The so-obtained monoids are no longer finite and include, via the grounding of rules (i.e., replacing variables with terms not containing variables), infinite propositional programs.

Another line of research is to try to generalize or adapt results from this paper from Krom to arbitrary propositional,  first-order, and answer set programs containing negation as failure \cite{Clark78}. The former is challenging since the sequential composition of arbitrary propositional programs is in general not associative and does not distribute over union (cf. \cite[Example 10]{Antic21-1}) which means that most of the results of this paper are not directly transferable. The latter is difficult since the sequential composition of answer set programs is rather complicated (cf. \cite{Antic21-2}).

\subsection{Graphs}

In the future, we wish to study the connections between propositional Krom logic programs and graph theory, briefly analyzed in \prettyref{sec:graphs}, more deeply.


Since the space of all Krom programs forms a monoid with respect to sequential composition, and almost\footnote{Recall that the only reason for the space of all Krom programs not to form a semiring is that in case $K$ contains facts, $K\emptyset \stackrel{\ref{eq:f(K)=K0}}= f(K)\neq \emptyset$ thus violating the semiring axiom $a\cdot 0 = 0$.} a semiring with respect to composition and union, we can ask all kinds of algebraic questions. A particularly interesting one seems to be: What are the ``prime'' propositional Krom logic programs? Computations suggest the following Krohn-Rhodes-like conjecture \cite{Krohn65}:

\begin{conjecture}\label{cj:KR} Every propositional Krom logic program can be sequentially decomposed into aperiodic and permutation programs.
\todo[inline]{Vielleicht ein Spezialfall, der sich beweisen lässt — z.B. für kleine Alphabete oder spezielle Programmklassen}
\todo[inline]{Eine klarere Verbindung zum klassischen Krohn-Rhodes-Theorem, also warum die Analogie strukturell sinnvoll ist}
\end{conjecture}

\begin{example} 
\end{example}

\begin{example} 
\end{example}

\subsection{Analogical proportions}

Another interesting line of future research is to study analogical proportions in finite Krom monoids and omegaseminearrings.

\todo[inline]{\cite{Antic22,Antic23-22,Antic22-4}, \cite{Antic21-3,Antic22-2}, \cite{Antic23-23}}

\section{Depends on}

\begin{definition} For any Krom programs $K$ and $L$, we say that $K$ \emph{depends on} $L$ if $b(K)\cap h(L)\neq \emptyset$.
\end{definition}

\begin{proposition} For any Krom programs $K$ and $L$, if $K$ does not depend on $L$, then
\begin{align*} 
	(K\cup L)^\omega = (K^\omega\cup L)^\omega.
\end{align*}
\end{proposition}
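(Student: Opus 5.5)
The plan is to compute both sides with the explicit formula \prettyref{eq:K^omega}, $M^\omega = p(M)^\ast f(M)$, and to turn the hypothesis into two annihilation identities. Let $b(K)$ be the set of body atoms of the proper rules of $K$, and $h(L)$ the set of heads of $L$, which includes the facts of $L$. Then the hypothesis $b(K)\cap h(L)=\emptyset$ gives
\begin{align*}
	p(K)p(L)=\emptyset \quad\text{and}\quad p(K)f(L)=\emptyset.
\end{align*}
Indeed, a rule $a\leftarrow b$ in $p(K)p(L)$ would need some $a\leftarrow c\in K$ and $c\leftarrow b\in L$, so $c\in b(K)\cap h(L)$. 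Likewise, an atom of $p(K)f(L)$ would need some $a\leftarrow c\in K$ with $c\in f(L)\subseteq h(L)$.

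Next I compute $p(K\cup L)^\ast=(p(K)\cup p(L))^\ast$ with the sum-star equation \prettyref{eq:(K_cup_L)^ast}, which gives $(p(K)^\ast p(L))^\ast p(K)^\ast$. Write $p(K)^\ast = 1\cup p(K)^+$, where every term of $p(K)^+$ ends in a factor $p(K)$. By right distributivity \prettyref{eq:(K_cup_L)M} over arbitrary unions and associativity,
\begin{align*}
	p(K)^\ast p(L) = p(L)\cup\bigcup_{n\geq 1}p(K)^{n-1}\bigl(p(K)p(L)\bigr) = p(L),
\end{align*}
because $p(K)p(L)=\emptyset$ and $M\emptyset=\emptyset$ for proper $M$ (\prettyref{t:p(mathbb_K)}). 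Hence $p(K\cup L)^\ast = p(L)^\ast p(K)^\ast$. Graph-theoretically, any path of $K\cup L$ consists of its $K$-edges followed by its $L$-edges.

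Then, using \prettyref{eq:p(K_cup_L)}, \prettyref{eq:f(K_cup_L)} and left distributivity \prettyref{eq:M(K_cup_L)},
\begin{align*}
	(K\cup L)^\omega = p(L)^\ast p(K)^\ast f(K)\cup p(L)^\ast p(K)^\ast f(L) = p(L)^\ast K^\omega\cup p(L)^\ast f(L).
\end{align*}
In the first term I used $p(K)^\ast f(K)=K^\omega$ from \prettyref{eq:K^omega}. In the second term, the same splitting argument with $p(K)f(L)=\emptyset$ gives $p(K)^\ast f(L)=f(L)$.

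For the right-hand side, $K^\omega$ is an interpretation, so $p(K^\omega\cup L)=p(L)$ and $f(K^\omega\cup L)=K^\omega\cup f(L)$. Applying \prettyref{eq:K^omega} and left distributivity then gives $(K^\omega\cup L)^\omega = p(L)^\ast K^\omega\cup p(L)^\ast f(L)$, which coincides with the left-hand side. Alternatively, one can invoke \prettyref{eq:(K_cup_I)^omega} with $I=K^\omega$.

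The main obstacle is the collapse $p(K)^\ast p(L)=p(L)$, and in particular pinning down $b(K)$ and $h(L)$ precisely. The argument needs $f(L)\subseteq h(L)$, and it needs $b(K)$ to range only over bodies of proper rules of $K$, since facts have no body. Distributing composition over the infinite union defining the star should be unproblematic, since the composition is defined rule-wise.
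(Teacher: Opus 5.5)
Your proof is correct. There is nothing in the paper to compare it with: this proposition sits in an unfinished draft section, and its proof environment contains only an empty todo note. Your computation fills that gap using only identities the paper does establish. These are $M^\omega=p(M)^\ast f(M)$, the sum-star equation, left and right distributivity (including over the infinite unions defining the star, which the paper itself uses freely), and $M\emptyset=\emptyset$ for proper $M$. The two annihilations $p(K)p(L)=\emptyset$ and $p(K)f(L)=\emptyset$ are exactly the right translation of the hypothesis. From them, the collapses $p(K)^\ast p(L)=p(L)$ and $p(K)^\ast f(L)=f(L)$ follow as you describe.

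Two remarks.

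First, reading $h(L)$ as including the facts of $L$ is not merely convenient; it is necessary. Take $K=\{a\leftarrow b\}$ and $L=\{b\}$. Then $(K\cup L)^\omega=\{a,b\}$ but $(K^\omega\cup L)^\omega=\{b\}$. So the proposition fails if facts are not counted as heads, and you were right to flag this.

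Second, your computation proves more than the stated identity. It gives the explicit formula
\[
	(K\cup L)^\omega = p(L)^\ast K^\omega\cup L^\omega .
\]
This is precisely what the paper's subsequent unfinished corollary is after; its todo note suggests deriving it from the formula for $(K\cup I)^\omega$ with $I=K^\omega$.

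A shorter semantic proof also exists.
\begin{itemize}
\item Every atom of $LM(K^\omega\cup L)$ lies in $K^\omega\cup h(L)$. So any body atom of $K$ occurring in it lies in $K^\omega$, which is closed under $K$. Hence $LM(K^\omega\cup L)$ is a model of $K\cup L$.
\item Conversely, $LM(K\cup L)$ contains $K^\omega$ and is a model of $L$, so it is a model of $K^\omega\cup L$.
\end{itemize}
Your algebraic route has the advantage of staying inside the paper's equational framework and producing the explicit formula above as a by-product.
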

\begin{proof} 
\todo[inline]{}
\end{proof}

\begin{corollary} For any Krom programs $K$ and $L$, if $K$ does not depend on $L$, then
\begin{align*} 
	(K\cup L)^\omega = ...
\end{align*}
\todo[inline]{verwende die Formel fuer $(K\cup I)^\omega$}
\end{corollary}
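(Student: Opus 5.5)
I read the intended completion of the displayed formula as
\begin{align*}
	(K\cup L)^\omega = L^\omega\cup p(L)^\ast K^\omega .
\end{align*}
Intuitively, $K$ never uses anything derived by $L$, so one first computes $K^\omega$ on its own and then propagates it through the proper part of $L$. I would give two routes to this identity: a two-line derivation from the preceding proposition, and a direct algebraic argument that does not rely on that still unproved proposition.

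\emph{Route via the proposition.} The preceding proposition gives $(K\cup L)^\omega=(K^\omega\cup L)^\omega$. Since $K^\omega$ is an interpretation, \prettyref{eq:(K_cup_I)^omega} applies with $L$ as the program and $I:=K^\omega$. This yields $L^\omega\cup p(L)^\ast K^\omega$ immediately.

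\emph{Direct route.} By \prettyref{eq:K^omega},
\begin{align*}
	(K\cup L)^\omega = (p(L)\cup p(K))^\ast\,(f(K)\cup f(L)).
\end{align*}
The key observation is that non-dependence, $b(K)\cap h(L)=\emptyset$, kills two kinds of products:
\begin{align*}
	p(K)p(L)=\emptyset \qquad\text{and}\qquad p(K)f(L)=\emptyset .
\end{align*}
Indeed, the composition $a\leftarrow c\in K$ with $c\leftarrow b\in L$ would need a body atom $c$ of $K$ that is a head of $L$. Likewise $a\leftarrow c\in K$ with $c\in f(L)$ needs such a $c$, since facts of $L$ are heads of $L$.

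Next, apply the sum-star equation \prettyref{eq:(K_cup_L)^ast} to get
\begin{align*}
	(p(L)\cup p(K))^\ast=(p(L)^\ast p(K))^\ast p(L)^\ast .
\end{align*}
In the word expansion from the proof of \prettyref{t:Conway}, any word containing the factor $p(K)p(L)$ vanishes. So only words $p(L)^i p(K)^j$ survive, giving
\begin{align*}
	(p(L)\cup p(K))^\ast=p(L)^\ast p(K)^\ast .
\end{align*}
Distributing over $f(K)\cup f(L)$ by \prettyref{eq:M(K_cup_L)} gives two terms:
\begin{itemize}
	\item $p(L)^\ast p(K)^\ast f(K)=p(L)^\ast K^\omega$;
	\item $p(L)^\ast p(K)^\ast f(L)$, where every $p(K)^j$ with $j\geq1$ annihilates $f(L)$, leaving $p(L)^\ast f(L)=L^\omega$.
\end{itemize}

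The step needing the most care is collapsing $(p(L)\cup p(K))^\ast$ to $p(L)^\ast p(K)^\ast$. One must check that in a mixed word every occurrence of $p(K)$ immediately followed by $p(L)$ makes the whole product empty. This uses associativity together with $\emptyset$ being a zero on both sides on the proper semiring, by \prettyref{t:p(mathbb_K)}. The same computation also proves the preceding proposition, since \prettyref{eq:(K_cup_I)^omega} shows $(K^\omega\cup L)^\omega$ equals the same expression.
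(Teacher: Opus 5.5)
Your completion $(K\cup L)^\omega=L^\omega\cup p(L)^\ast K^\omega$ is the one the author's note (``use the formula for $(K\cup I)^\omega$'') points to, and both of your routes are correct.

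Your first route is the paper's intended argument. It applies the preceding proposition and then the formula $(K\cup I)^\omega=K^\omega\cup p(K)^\ast I$, with $L$ in place of $K$ and $I:=K^\omega$. In the paper this is only conditional, because the proposition $(K\cup L)^\omega=(K^\omega\cup L)^\omega$ is stated without proof.

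Your direct route is genuinely different. Instead of first collapsing $K$ to the interpretation $K^\omega$, you:
\begin{itemize}
\item turn non-dependence into the two annihilation identities $p(K)p(L)=\emptyset$ and $p(K)f(L)=\emptyset$;
\item use them to collapse $(p(L)\cup p(K))^\ast$ to $p(L)^\ast p(K)^\ast$ (the plain word expansion already does this, so the detour through the sum-star equation is harmless but unnecessary);
\item read off the two summands from $p(K)^\ast f(L)=f(L)$ and $p(K)^\ast f(K)=K^\omega$.
\end{itemize}
This argument is self-contained and uses only identities established in the paper. As you note, it also supplies the missing proof of the preceding proposition, which the paper's sketch cannot do.

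Two implicit points are worth stating explicitly. First, the argument needs $h(L)$ to contain the facts of $L$. Take $K=\{a\leftarrow c\}$ and $L=\{c\}$: if facts were dropped from $h(L)$, $K$ would count as independent of $L$. Yet $(K\cup L)^\omega=\{a,c\}$ while $L^\omega\cup p(L)^\ast K^\omega=\{c\}$, so both the proposition and the corollary would fail. Second, rewriting $\bigcup_{i,j}p(L)^i p(K)^j$ as $p(L)^\ast p(K)^\ast$ uses distributivity of composition over infinite non-empty unions in both arguments. This holds for Krom programs and is used by the paper in the same way.
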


\newpage
\section*{TODOs}

\fi
\end{document}